\def\eqref#1{equation~\ref{#1}}
\def\1{\bm{1}}
\DeclareMathAlphabet{\mathsfit}{\encodingdefault}{\sfdefault}{m}{sl}
\SetMathAlphabet{\mathsfit}{bold}{\encodingdefault}{\sfdefault}{bx}{n}
\DeclareMathOperator*{\argmax}{arg\,max}
\theoremstyle{plain} 
\newtheorem{theorem}{Theorem}[section]      
\newtheorem{lemma}[theorem]{Lemma}          
\newtheorem{proposition}[theorem]{Proposition}
\newtheorem{corollary}[theorem]{Corollary}
\title{On Theoretical Interpretations of Concept-Based In-Context Learning}
\author{Huaze Tang$^\dagger$, Tianren Peng$^\dagger$ \& Shao-lun Huang \thanks{$^\dagger$: Huaze Tang and Tianren Peng are of equal contribution. *: Shao-lun Huang is the corrsponding author.} \\
Institute of Data and Information\\
Tsinghua Shenzhen International Graduate School, Tsinghua University\\
Shenzhen, Guangdong, China \\
\texttt{\{tanghz24,ptr22\}@mails.tsinghua.edu.cn, twn2gold@gmail.com}
}
\begin{document}

\maketitle

\begin{abstract}
In-Context Learning (ICL) has emerged as an important new paradigm in natural language processing and large language model (LLM) applications. However, the theoretical understanding of the ICL mechanism remains limited. This paper aims to investigate this issue by studying a particular ICL approach, called concept-based ICL (CB-ICL). In particular, we propose theoretical analyses on applying CB-ICL to ICL tasks, which explains why and when the CB-ICL performs well for predicting query labels in prompts with only a few demonstrations. In addition, the proposed theory quantifies the knowledge that can be leveraged by the LLMs to the prompt tasks, and leads to a similarity measure between the prompt demonstrations and the query input, which provides important insights and guidance for model pre-training and prompt engineering in ICL. Moreover,  the impact of the prompt demonstration size and the dimension of the LLM embeddings in ICL are also explored based on the proposed theory. Finally, several real-data experiments are conducted to validate the practical usefulness of CB-ICL and the corresponding theory. 
\end{abstract} 

\maketitle

\section{Introduction}

With the great successes of large language models (LLMs), In-context learning (ICL) has emerged as a new paradigm for natural language processing (NLP) \citep{brown2020language,chowdhery2023palm,achiam2023gpt}, where LLMs addresses the requested queries in context prompts with a few demonstrations. In contrast to conventional supervised learning, the ICL can perform well in prediction and inference tasks with very few samples by leveraging the semantic knowledge learned from the LLMs without training or fine-tuning the model parameters \citep{liu2022makes,lu2022fantastically,wei2022chain,wu2023self}. This enables rapid task onboarding \citep{sun2022black}, lowers computation and data costs compared with fine-tuning, and underpins current practice in instruction following \citep{linunlocking}, tool use \citep{schick2023toolformer}, and agent memory \citep{chhikara2025mem0}. Therefore, a systematically understanding of ICL mechanism has appeared to be important in engineering designs in many areas of LLMs and NLP. 

Recent researches on understanding the ICL mechanism mainly focused on functional modules \citep{olsson2022context,bietti2023birth,wang2023label,li2024closeness}, theoretical interpretation based on Bayesian and gradient descent Views \citep{xie2021explanation,zhou2023algorithms,dai2023can,mahankali2023one}, and learning and information theoretic perspectives \citep{garg2022can,akyureklearning,pan2023context,yang2024context}. In particular, most of such researches concentrated on analyzing specific mathematical models such as linear regression for given functional classes, investigating the asymptotic learning behaviors of ICL, or characterizing different kinds of convergent properties of transformers in gradient descent. However, there still lacks theoretical justification of why ICL can performs well with only very few demonstrations, and some important questions for deeply understanding ICL mechanism remained open, such as theoretically characterizing the knowledge leveraged by LLMs, and quantifying the impact of the prompt engineering \citep{brown2020language} in ICL.

In this paper, we develop a theoretical framework to analyze the performance of the Concept-Based In-Context Learning (CB-ICL) approach to address the aforementioned issues. As illustrated in Figure 1, in CB-ICL, a pre-trained LLM is employed to represent the semantic embeddings of the prompt contexts, where the parameters of the LLM is fixed throughout the learning task without fine-tuning. Then, the concept of the prompt contexts, represented by the vector $\underline{\alpha}$, is learned by the a prompt concept extractor. Finally, the concept vector is applied to estimate the posterior distribution of the label given the query context for solving the prediction task. To characterize the performance of CB-ICL, we establish upper bounds for the mean-squared excessive risk between the estimated posterior distribution and the ground truth distribution. The proposed upper bounds and the corresponding analyses lead to the theoretical insights and contributions summarized as follows:
\begin{figure}[t]
\begin{center}
\includegraphics[width=\textwidth]{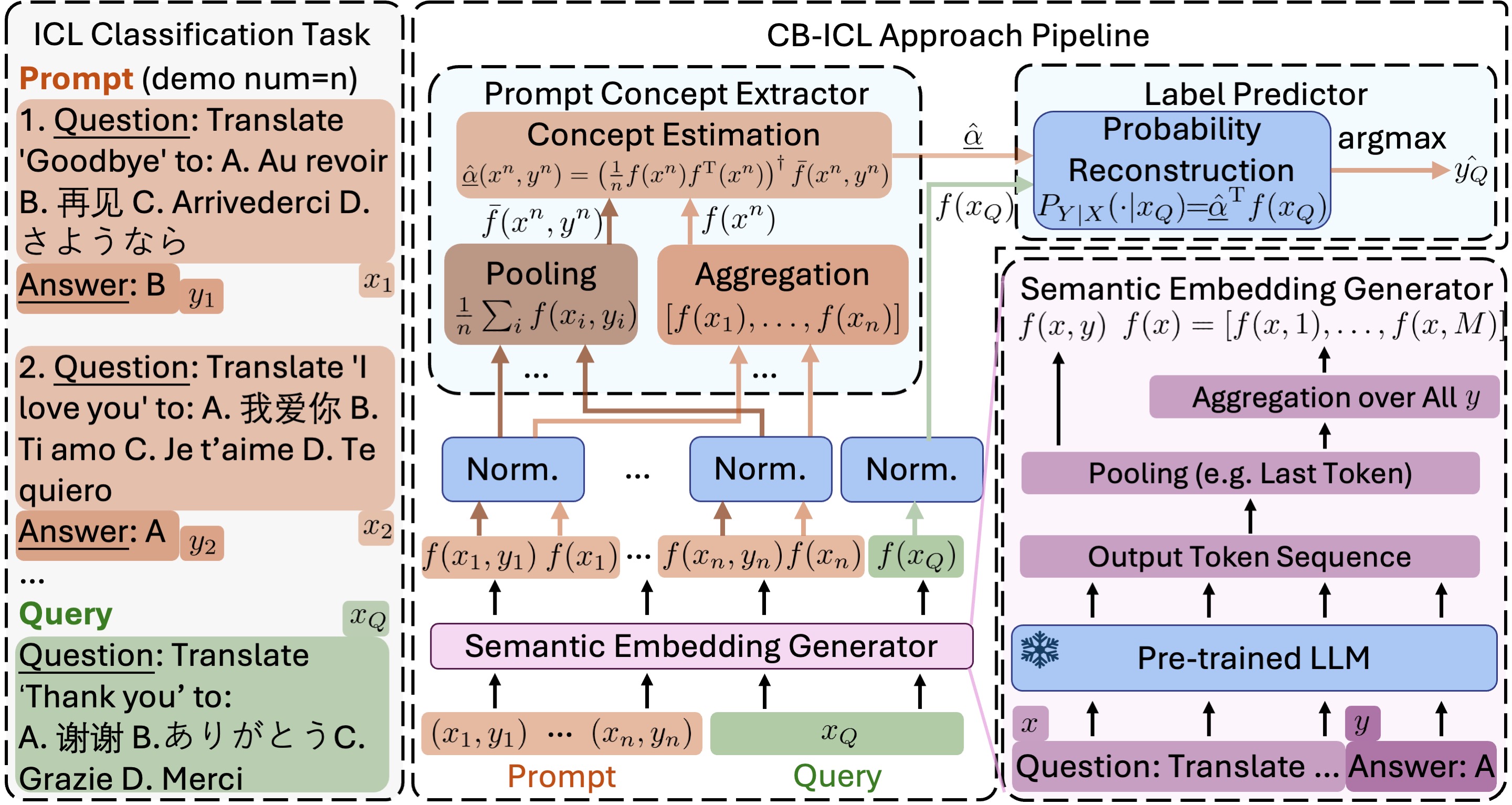}
\end{center}
\caption{The targeting task and working pipeline of the CB-ICL approach.}
\end{figure}

\begin{itemize}
    \item We model the semantic knowledge leveraged from the pre-trained LLM to the CB-ICL as a projection of the prompt distribution onto the semantic knowledge subspace spanned by the LLM embedding, and show that the CB-ICL achieves theoretically provably good learning performance, if both the semantic concept of the prompt is captured by the pre-trained LLM, and the correlation between the prompt query and the corresponding label is strong (which is true in many practical scenarios). This explains why well-designed LLMs can lead to good performances in many ICL applications. 
    \item In addition, a similarity measure between the prompt demonstrations and queries is defined from the derived upper bound, which characterizes the impact of selecting related prompt demonstrations in CB-ICL. Such a similarity measure suggests how to design theoretically good demonstrations in prompt engineering.
    \item Moreover, we demonstrate the impacts of the dimension of LLM embedding, number of prompt demonstrations, and the cardinality of labels in ICL. In particular, it is shown that the higher LLM embedding dimension, the more difficult to learn the prompt concept from the semantic knowledge subspace of the LLM embedding, which suggests the importance of constructing parsimony and informative LLM embeddings in ICL.
    \item Furthermore, we quantify the learning performance degradation when the prompt demonstrations are not sufficient to illustrate the prompt concept, or the LLM embedding cannot fully capture the semantic knowledge of the prompt, which provides theoretical insights of prompt engineering and pre-training.
    \item Finally, real-data experiments are conducted on several LLMs and datasets to validate the performance of CB-ICL. The results show that the performance of CB-ICL is comparable to the existing ICL methods. Moreover, the aforementioned theoretical insights are also verified, which leads to useful guidance for designing effective ICL approaches.
\end{itemize}

\section{Related Works}
\subsection{In-context Learning Mechanism}

Recent studies have sought to explain why LLMs exhibit strong ICL abilities. One perspective interprets ICL as implicit meta-learning or Bayesian inference \cite{xie2021explanation,dai2023can,li2023transformers}. In this view, the prompt defines a latent task, and the transformer adapts its predictions by inferring the underlying hypothesis that best explains the demonstration–query pairs. Related works argue that transformers approximate Bayesian model averaging over latent tasks, aligning ICL with posterior inference in probabilistic models \citep{zhou2023algorithms}.

Another line of work emphasizes algorithmic simulation \citep{garg2022can,akyureklearning,pan2023context}. Using synthetic regression tasks, \citet{garg2022can} showed that pre-trained transformers can reproduce standard estimators such as least-squares or ridge regression. Follow-up analyses demonstrated that multi-layer architectures can in principle implement iterations of gradient descent or even closed-form regression updates \citep{akyureklearning, von2023transformers}, suggesting that deeper layers correspond to successive optimization steps. These findings support the view that transformers can internally encode and update small predictive models when exposed to demonstrations.

The third thread highlights the role of pre-training diversity \citep{yang2024context,raventos2023pretraining}. \citet{raventos2023pretraining} identified threshold phenomena: if the set of pre-training tasks is too narrow, the model behaves like a biased predictor that struggles with new tasks; once the diversity passes a critical level, the model acquires the ability to generalize in-context, approaching the performance of optimal linear estimators. Besides, \citet{yang2024context} gave the the convergence rate of linear and non-linear regression tasks. This sheds light on why scaling model capacity and broadening training corpora are essential for robust ICL.

Despite these advances, most of the insights are derived in simplified settings such as linear regression or asymptotic regimes. Real-world LLMs operate in rich semantic spaces, where demonstrations and queries may differ substantially in distribution. Explaining how LLMs leverage pre-trained representations to adapt to new tasks with only a handful of examples remains an open theoretical challenge.

\subsection{Prompt Engineering and Demonstration Selection Strategies}

Alongside mechanism studies, a growing body of work has examined how prompt design influences ICL performance. It is well documented that output quality is highly sensitive to the format, order, and content of demonstrations \citep{wang2023large,liu2024let}. Small changes in prompt construction can lead to substantial performance fluctuations, underscoring that prompt crafting is non-trivial.

A straightforward strategy is to retrieve demonstrations semantically similar to the query. Embedding-based nearest-neighbor methods \citep{su2022selective, qin2023context,li2023finding} typically improve accuracy but often introduce redundancy, limiting coverage of the task space. To address this, \citet{su2022selective} and \citet{wang2023large} emphasized the importance of balancing relevance with diversity, ensuring that demonstrations span different labels or sub-concepts. Beyond heuristics, other approaches fine-tune retriever models \citep{rubin2021learning, mavromatis2023examples}, employ reinforcement learning to optimize selection policies \citep{zhang2022active}, or leverage chain-of-thought prompts to guide reasoning \citep{wei2022chain}.

Despite these practical advances, most methods remain heuristic, providing limited theoretical understanding of why certain demonstrations are more effective for a given query. Developing principled criteria for demonstration selection remains an active research frontier, with direct implications for both model pre-training and prompt engineering.

\section{CB-ICL Model Formulation}
\label{CB-ICL}

\paragraph{In Context Prompt Assumptions} 
The prompt contexts are consisted of a collection of $n$ demonstrations $\left\{(x_i,y_i)\right\}_{i=1}^{n}$, where $x_i$’s are the input texts, and $y_i$’s are the corresponding answers such that $y_i \in \{1, \ldots, M\}, \forall i$, and a query input text $x_Q$. Denote $x^n=(x_1,\dots,x_n)$ and $y^n=(y_1,\dots,y_n)$, the goal of ICL is to predict the label $y_Q$ of $x_Q$, given the prompt contexts $x^n$,$y^n$, and $x_Q$ \citep{yang2024context}.  In addition, we assume that $\left\{(x_i,y_i)\right\}_{i=1}^{n}$ and $(x_Q,y_Q)$ are generated and follow the probabilistic relationship:
\begin{equation*}
    P(x_Q,y_Q,y_n|x^n) = P_{X}(x_Q)P_{Y|X}(y_Q|x_Q) \prod_{i=1}^n P_{Y|X}(y_i|x_i),
\end{equation*}
for some ground truth distributions $P_{Y|X}$, and $P_X$, i.e., the labels $y^n$ and $y_Q$ are conditionally independently generated from $x^n$ and $x_Q$ by the same conditional distribution $P_{Y|X}(y|x)$, respectively. Note that we do not make assumptions on the joint distribution of the input texts $x_1,\dots,x_n$, such as independent and identically distributed (i.i.d.), since such assumptions are often unrealistic in practical applications.

\paragraph{Semantic Embeddings} 

Given a pair of input text and label $(x,y)$, we denote the semantic embedding generated by the pre-trained LLM as $f(x,y;\underline{\theta})$, where $\underline{\theta}$ represents the parameters of the LLM. Since there is no parameter fine-tuning in ICL, the parameters of the LLM will be fixed throughout the whole learning tasks, and the semantic embedding will simply be denoted as $f(x,y)$. In addition, before feeding into the prompt concept extractor, the LLM embedding is normalized and a bias term is padded at the end of the embedding vector, i.e., if we denote $f(x,y) = [f_1(x,y), \ldots , f_K(x,y)]$, then  $\sum_{y}f_k(x,y) = 0, \  \sum_{y}f^2_k(x,y) = 1, \ \forall k < K-1$, and $f_K(x,y) = \frac{1}{\sqrt{M}}, \forall x,y$.

Moreover, we express the ground truth distributions $P_{Y|X}$ in terms of a projection onto the space spanned by the LLM embedding functions $f_k(x,y)$, for $k = 1, \ldots, K$, as
\begin{equation}\label{equ:model}
    P_{Y|X}(y|x) = \sum_{k=1}^K \alpha_k f_k(x,y) + R(x,y)=  \underline{\alpha}^{\text{T}} f(x,y) + R(x,y) , \quad \forall x,y, 
\end{equation}
where $\underline{\alpha}$ can be viewed as ground truth concept of the prompt, and $R(x,y)$ is a residual term that can be interpreted as the knowledge not captured by the LLM, and is orthogonal to the LLM embeddings, i.e., 
$$
\sum_{x,y}P_X(x)f_k(x,y)R(x,y) = 0, \quad \forall k.
$$
In the remaining parts, we call the pre-trained LLM model \textit{complete} if $R(x,y)= 0 , \forall x,y$, and \textit{incomplete} otherwise. Note that the complete pre-trained LLM model fully captures the semantic knowledge of the prompt contexts.

\paragraph{Prompt Concept Extractor and Label Predictor}

Given the prompt contexts and the LLM embedding, the prompt concept extractor is defined as
\begin{align}\label{equ:def_alpha}
\underline{\hat{\alpha}}(x^n,y^n) \triangleq \mathbf{F}_n^{\dagger}(x^n) \bar{f}_n(x^n,y^n)
\end{align}
where
\begin{equation*}
    \mathbf{F}_n(x^n)  = \frac{1}{n}\sum_{i=1}^n \sum_{y}  f(x_i,y)f^{\top}(x_i,y), \quad \bar{f}_n(x^n,y^n) = \frac{1}{n}\sum_{i=1}^nf(x_i,y_i),
\end{equation*}
and where “$\dagger$” denotes the psuedo-inverse. Note that $\underline{\hat{\alpha}}(x^n,y^n)$  can be interpreted as an estimation of the ground truth concept $\underline{\alpha}$ from the prompt contexts. In the following, we call the prompt demonstrations \textit{sufficient} if $\mathbf{F}_n(x^n)$ is invertible, and \textit{insufficient} otherwise.

Moreover, we denote $\hat{P}_{Y|X} (\cdot|x_Q)$ as an estimation of the ground truth distributions $P_{Y|X}(\cdot|x_Q)$ given the query input text $x_Q$ defined as
\begin{align*}
&\hat{P}_{Y|X} (y|x_Q) =  \sum_{k=1}^K \hat{\alpha}_k (x^n,y^n) f_k(x_Q,y) =  \underline{\hat{\alpha}}^{\text{T}} (x^n,y^n) f(x_Q,y).
\end{align*}
Then, the CB-ICL label predictor is given by $\mathop{\arg\max}_{y} \hat{P}_{Y|X} (y|x_Q)$.

\paragraph{Mean-Squared Excessive Risk}

In particular, we apply the mean-squared risk to measure the difference between $\hat{P}_{Y|X} (\cdot|x_Q)$ and $P_{Y|X}(\cdot|x_Q)$, defined as
\begin{align*}
\ell \left(x^n,y^n;x_Q\right)& \triangleq \sum_{y} \left(\hat{P}_{Y|X}(y|x_Q) -  P_{Y|X}(y|x_Q) \right)^2.
\end{align*}
The in-context learning capability of CB-ICL is measured by the excessive risk (conditioned on the prompt input texts $x^n$), defined as
\begin{align}\label{equ:excessive_risk}
\mathbb{E}_{P_{Y^n|X^n}}\left[ 
\ell \left(x^n,Y^n;x_Q\right) | X^n = x^n\right].
\end{align}

In the next section, we will draw the connection between the excessive risk and the error probability of label prediction, which shows the usefulness of analyzing the excessive risk in ICL scenarios.

\section{Theoretical Analyses of CB-ICL}
\label{thoery}

\subsection{Complete and Sufficient Models}\label{sec:complete_sufficient}

In this subsection, we analyze the excessive risk of CB-ICL in the case $R(x,y)=0$, and $\mathbf{F}_n(x^n)$ is invertible. To delineate the theoretical results, we define the matrices of the LLM embeddings with respect to the prompt input texts as:
\begin{equation*}
    f(x_Q) = \left[ f(x_Q,1) , \ldots , f(x_Q,M)\right]
\in \mathbb R^{K \times M}, \quad \mathbf{F}(x_Q) = f(x_Q)f^\text{T}(x_Q)\in \mathbb R^{K \times K},
\end{equation*}
\begin{equation*}
    f(x^n) = \left[ f(x_1) , \ldots , f(x_n)\right]
\in \mathbb R^{K \times nM}, \quad \mathbf{F}_n(x^n) = \frac1n f(x^n)f^\text{T}(x^n) \in \mathbb R^{K \times K},
\end{equation*}
and the matrices
\begin{equation*}
    \textbf{A}(x_i) = \textsf{diag}\left\{ P_{Y|X}(1|x_i), \ldots, P_{Y|X}(M|x_i) \right\} - \phi_i\phi_i^{\text{T}},\quad 
\end{equation*}
\begin{align}\label{equ:def_Q}
\textbf{A}(x^n) = \textsf{diag}\left\{ \textbf{A}(x_1), \ldots, \textbf{A}(x_n) \right\}
\end{align}
where $\phi_i^{\text{T}} = \left[ P_{Y|X}(1|x_i), \ldots , P_{Y|X}(M|x_i) , \right]$.

\begin{lemma}\label{lem:lambda_1_ub}
The matrix $\textbf{A}(x_i)$ is positive semi-definite, for all $i$, and the largest eigenvalue of $\textbf{A}(x_i)$, denoted as $\lambda_1(\textbf{A}(x_i))$, satisfies
\begin{equation*}
    \lambda_1(\textbf{A}(x_i)) \leq 2P_Y(y_{\max}|x_i)\left(1-P_Y(y_{\max}|x_i)\right),
\end{equation*}
where $P_Y(y_{\max}|x_i) \triangleq \max_yP_Y(y|x_i))$. Moreover, the largest eigenvalue of $\textbf{A}(x^n)$ satisfies 
\begin{equation*}
    \lambda_1(\textbf{A}(x^n)) = \max_i\lambda_1(\textbf{A}(x_i)).
\end{equation*}
\end{lemma}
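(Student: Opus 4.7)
The plan is to interpret $\mathbf{A}(x_i)$ as a covariance matrix and to analyse its spectrum through the resulting variance form. Setting $p_y = P_{Y|X}(y|x_i)$, $D = \textsf{diag}(p_1,\dots,p_M)$, and $\phi_i = (p_1,\dots,p_M)^{\text{T}}$, a direct calculation shows that for any $v\in\mathbb{R}^M$,
\begin{equation*}
v^{\text{T}}\mathbf{A}(x_i)v = \sum_y p_y v_y^2 - \Bigl(\sum_y p_y v_y\Bigr)^2 = \Var_{Y\sim\phi_i}(v_Y) \geq 0,
\end{equation*}
so positive semi-definiteness is immediate, and this variance identity will drive every subsequent estimate.

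For the upper bound on $\lambda_1(\mathbf{A}(x_i))$, I will combine two complementary arguments depending on the magnitude of $p \triangleq P_Y(y_{\max}|x_i)$. The crude inequality $\Var(v_Y)\leq\mathbb{E}[v_Y^2]$ already gives $\lambda_1(\mathbf{A}(x_i))\leq \max_{\|v\|=1}\sum_y p_y v_y^2\leq p$, which suffices whenever $p\leq 1/2$, since then $2(1-p)\geq 1$ and hence $p\leq 2p(1-p)$. In the harder regime $p>1/2$, I will instead use the factorisation $\mathbf{A}(x_i)=D^{1/2}(I-uu^{\text{T}})D^{1/2}$ with $u=(\sqrt{p_1},\dots,\sqrt{p_M})^{\text{T}}$, which has unit norm because $\sum_y p_y=1$. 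A standard manipulation (the non-zero eigenvalues of $BB^{\text{T}}$ and $B^{\text{T}}B$ coincide), together with restriction to the range of $I-uu^{\text{T}}$, yields
\begin{equation*}
\lambda_1(\mathbf{A}(x_i)) = \max_{w\perp u,\ \|w\|=1} w^{\text{T}} D w .
\end{equation*}
On this constrained set, Cauchy-Schwarz applied to the identity $\sqrt{p}\,w_{y_{\max}} = -\sum_{y\neq y_{\max}}\sqrt{p_y}\,w_y$ gives $p\,w_{y_{\max}}^2 \leq (1-p)(1-w_{y_{\max}}^2)$ and hence $w_{y_{\max}}^2\leq 1-p$; using the elementary inequality $\max_{y\neq y_{\max}} p_y \leq 1-p$ (because these $p_y$ sum to $1-p$), I will then bound $w^{\text{T}}Dw \leq p\,w_{y_{\max}}^2 + (1-p)(1-w_{y_{\max}}^2) = (1-p) + (2p-1)\,w_{y_{\max}}^2$. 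Since $2p-1>0$, maximising the right-hand side over $w_{y_{\max}}^2\in[0,1-p]$ returns exactly $2p(1-p)$, closing the argument.

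The second claim, $\lambda_1(\mathbf{A}(x^n))=\max_i \lambda_1(\mathbf{A}(x_i))$, will be immediate: $\mathbf{A}(x^n)$ as defined in (\ref{equ:def_Q}) is block-diagonal with blocks $\mathbf{A}(x_1),\dots,\mathbf{A}(x_n)$, so its spectrum is the union of the block spectra. The main obstacle is the $p>1/2$ regime, where the naive Rayleigh bound $\lambda_1\leq p$ becomes loose; recovering the sharp factor $2(1-p)$ requires the factorised form to exploit the non-trivial kernel direction of $\mathbf{A}(x_i)$, together with the constraint-specific inequality $\max_{y\neq y_{\max}} p_y \leq 1-p$ that comes from the probability normalisation $\sum_y p_y = 1$.
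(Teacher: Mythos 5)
Your argument is correct, and it takes a genuinely different route from the paper's. The paper proves the eigenvalue bound by factoring the characteristic polynomial of the diagonal-minus-rank-one matrix via the matrix determinant lemma, reducing the spectrum to roots of the secular equation $1+\sum_j p_j^2/(\lambda-p_j)=0$, splitting cases on whether the maximizing label is unique, and bounding the largest root by collapsing the sum to a two-term (Bernoulli-like) configuration and solving the resulting quadratic inequality; positive semi-definiteness is not established in that section at all but only later, via Jensen's inequality, inside the proof of an auxiliary lemma. You instead work entirely variationally: the variance identity gives PSD for free, the regime $p\le 1/2$ is dispatched by the crude bound $\lambda_1\le p\le 2p(1-p)$, and for $p>1/2$ the factorisation $\mathbf{A}(x_i)=D^{1/2}(I-uu^{\mathrm T})D^{1/2}$ with $u=(\sqrt{p_1},\dots,\sqrt{p_M})^{\mathrm T}$ converts the problem into maximising $w^{\mathrm T}Dw$ over the hyperplane $w\perp u$, where Cauchy--Schwarz yields the key constraint $w_{y_{\max}}^2\le 1-p$ and an elementary linear optimisation returns exactly $2p(1-p)$. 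I verified the computations: $D^{1/2}u=\phi_i$ so the factorisation is exact, $\|u\|=1$ so $I-uu^{\mathrm T}$ is a projection, the passage from $\mathbf{A}(x_i)$ to $PDP$ via the $BB^{\mathrm T}$/$B^{\mathrm T}B$ identity is sound, and the final maximisation over $w_{y_{\max}}^2\in[0,1-p]$ gives the stated constant. Your case split on $p\lessgtr 1/2$ also subsumes the paper's split on uniqueness of the maximiser (non-uniqueness forces $p\le 1/2$). What the paper's route buys is an explicit description of the full spectrum and an exact characterisation of when the bound is attained; what yours buys is a shorter, more elementary argument that needs no characteristic-polynomial analysis and delivers the PSD claim in the same breath. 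The block-diagonal claim is handled identically in both.
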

\begin{proof}
See Appendix \ref{append:lemma1_proof}.
\end{proof}

Then, the following Theorem characterizes the excessive risk in the complete and sufficient case.
\begin{theorem}\label{thm:compelete_sufficient}
When $R(x,y)=0$, and $\mathbf{F}_n(x^n)$ is invertible, the excessive risk defined in (\ref{equ:excessive_risk}) can be bounded as
\begin{align}\label{equ:complete_sufficient_ub}
&\mathbb{E}_{P_{Y^n|X^n}}\left[ 
\ell \left(x^n,Y^n;x_Q\right) | X^n = x^n\right]\leq \frac{K}{n}\lambda_1 \left(\mathbf{F}(x_Q)\mathbf{F}_n^{-1}(x^n) \right) \lambda_1(\textbf{A}(x^n)),
\end{align}
where $\lambda_1$ denotes the largest eigenvalue, and $\textbf{A}(x^n)$ is as defined in 
(\ref{equ:def_Q}).
\end{theorem}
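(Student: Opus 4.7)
The plan is to reduce the expected squared loss to $\Tr(\mathbf{F}(x_Q)\,\Cov(\underline{\hat\alpha}\mid X^n=x^n))$ via unbiasedness, compute that covariance from the conditional independence of the $Y_i$'s, and then use a projection identity hidden in $f^{\text{T}}(x^n)\mathbf F_n^{-1}(x^n)f(x^n)$ to separate the two eigenvalue factors.

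First, since $R\equiv 0$, both $P_{Y|X}(y|x_Q)=\underline\alpha^{\text{T}} f(x_Q,y)$ and $\hat P_{Y|X}(y|x_Q)=\underline{\hat\alpha}^{\text{T}} f(x_Q,y)$ are linear in the LLM embedding, so the per-query loss collapses to $\ell(x^n,Y^n;x_Q)=(\underline{\hat\alpha}-\underline\alpha)^{\text{T}}\mathbf{F}(x_Q)(\underline{\hat\alpha}-\underline\alpha)$. Using $P_{Y|X}(y|x_i)=\underline\alpha^{\text{T}} f(x_i,y)$ I would then check that $\E[f(x_i,Y_i)\mid X^n]=\bigl(\sum_y f(x_i,y)f^{\text{T}}(x_i,y)\bigr)\underline\alpha$, so $\E[\bar f_n\mid X^n]=\mathbf F_n(x^n)\underline\alpha$; the $\mathbf F_n^{\dagger}$ in (\ref{equ:def_alpha}) (an honest inverse in the sufficient case) therefore makes $\underline{\hat\alpha}$ conditionally unbiased, and the expected loss reduces to $\Tr(\mathbf F(x_Q)\,\Cov(\underline{\hat\alpha}\mid X^n))$. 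By conditional independence of the $Y_i$'s, $\Cov(\bar f_n\mid X^n)$ is the $\tfrac{1}{n^2}$-scaled sum of per-sample covariances, and a direct calculation using the definition of $\mathbf A(x_i)$ shows each per-sample covariance is $f(x_i)\mathbf A(x_i)f^{\text{T}}(x_i)$; block-assembly gives $\Cov(\underline{\hat\alpha}\mid X^n)=\tfrac{1}{n^2}\mathbf F_n^{-1}f(x^n)\mathbf A(x^n)f^{\text{T}}(x^n)\mathbf F_n^{-1}$.

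The main obstacle is extracting $\lambda_1(\mathbf F(x_Q)\mathbf F_n^{-1})$ and $\lambda_1(\mathbf A(x^n))$ cleanly from the resulting trace. I would cycle and symmetrize, writing the expected loss as $\tfrac{1}{n^2}\Tr\bigl(\mathbf A(x^n)\,f^{\text{T}}(x^n)\mathbf F_n^{-1/2}\,Y\,\mathbf F_n^{-1/2}f(x^n)\bigr)$ with $Y:=\mathbf F_n^{-1/2}\mathbf F(x_Q)\mathbf F_n^{-1/2}$, whose top eigenvalue equals $\lambda_1(\mathbf F(x_Q)\mathbf F_n^{-1}(x^n))$ by similarity. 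Using $Y\preceq\lambda_1(Y)\,I_K$ together with $\mathbf A(x^n)\succeq 0$ from Lemma~\ref{lem:lambda_1_ub} reduces this to $\tfrac{1}{n}\lambda_1(\mathbf F(x_Q)\mathbf F_n^{-1})\,\Tr(\mathbf A(x^n)\,\Pi)$, where $\Pi:=\tfrac{1}{n}f^{\text{T}}(x^n)\mathbf F_n^{-1}(x^n)f(x^n)$. The key structural observation is that $\Pi$ is an orthogonal projection of rank exactly $K$: the identity $\mathbf F_n(x^n)=\tfrac{1}{n}f(x^n)f^{\text{T}}(x^n)$ gives $\Pi^{\text{T}}=\Pi$ and $\Pi^2=\Pi$, and invertibility of $\mathbf F_n(x^n)$ forces $\mathrm{rank}(f(x^n))=K$. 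Consequently $\Tr(\mathbf A(x^n)\Pi)=\Tr(\Pi\mathbf A(x^n)\Pi)\le K\,\lambda_1(\mathbf A(x^n))$, and collecting factors gives the claimed bound $\tfrac{K}{n}\lambda_1(\mathbf F(x_Q)\mathbf F_n^{-1}(x^n))\,\lambda_1(\mathbf A(x^n))$.
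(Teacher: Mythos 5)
Your proposal is correct, and its overall skeleton coincides with the paper's: conditional unbiasedness of $\underline{\hat\alpha}$ (the paper's Lemma~\ref{lem:alpha_mean}), reduction of the excessive risk to $\Tr\bigl(\mathbf F(x_Q)\,\Cov(\underline{\hat\alpha}\mid X^n=x^n)\bigr)$, and the identity $\Cov(\underline{\hat\alpha}\mid X^n)=\tfrac{1}{n^2}\mathbf F_n^{-1}f(x^n)\mathbf A(x^n)f^{\mathrm{T}}(x^n)\mathbf F_n^{-1}$, which is exactly the paper's Corollary~\ref{col:excessive_risk_complete_simple} (you obtain it by the standard covariance decomposition under conditional independence, the paper by an explicit $\tilde{\mathbf A}$-plus-rank-one-cancellation computation — same content, different packaging). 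Where you genuinely diverge is the final eigenvalue extraction: the paper invokes Von Neumann's trace inequality and asserts that the eigenvalues of $\tfrac1n\mathbf F_n^{-1/2}f(x^n)\mathbf A(x^n)f^{\mathrm{T}}(x^n)\mathbf F_n^{-1/2}$ \emph{equal} those of $\mathbf A(x^n)$ (strictly they are only dominated by them via Poincar\'e/Cauchy interlacing, since the conjugating matrix is a $K\times nM$ co-isometry, not square), then bounds $S_K(\mathbf A(x^n))\le K\lambda_1(\mathbf A(x^n))$. Your route — $Y\preceq\lambda_1(Y)I_K$ followed by $\Tr(\mathbf A(x^n)\Pi)\le\lambda_1(\mathbf A(x^n))\Tr(\Pi)=K\lambda_1(\mathbf A(x^n))$ using that $\Pi=\tfrac1n f^{\mathrm{T}}(x^n)\mathbf F_n^{-1}(x^n)f(x^n)$ is a rank-$K$ orthogonal projection — reaches the same constant by a more elementary argument and sidesteps that imprecision entirely; the paper's Von Neumann route would in principle give the sharper intermediate bound $\sum_{i\le K}\lambda_i(Y)\lambda_i(\cdot)$ before both collapse to the same $\tfrac{K}{n}\lambda_1\lambda_1$ statement.
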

\begin{proof}
    See Appendix \ref{append:theorem1_proof}.
\end{proof}

In particular, the following theoretical insights that can be obtained from Theorem \ref{thm:compelete_sufficient}:
\begin{itemize}
    \item If the input text $x$ and the label $y$ of the prompt are strongly correlated, i.e., the ground truth distribution satisfies $P_{Y|X}(y_{\max}|x)\simeq 1, \forall x$ (e.g. mathematical reasoning tasks), where $P_{Y|X}(y_{\max}|x)$ is as defined in Lemma \ref{lem:lambda_1_ub}, then from Lemma \ref{lem:lambda_1_ub}, it holds for $\lambda_1(\textbf{A}(x_i)) \simeq 0$, and the excessive risk of CB-ICL for learning the ground truth distribution is vanishing. Therefore, the CB-ICL can achieve striking performance in predicting the label of the query input text with only a few prompt demonstrations, if the LLM embedding is complete, and the input text and label are strongly correlated.
    \item In addition, it is shown in Appendix \ref{sec:lb_lambda_1} that $\lambda_1 \left(\mathbf{F}(x_Q)\mathbf{F}_n^{-1}(x^n) \right) \geq 1$ with equality holds when  $\mathbf{F}(x_Q) = \mathbf{F}_n(x^n)$. This provides a theoretical explanation that designing semantically correlated demonstrations in the prompt can enhance the ICL performance. Moreover, the quantity $\lambda_1^{-1} \left(\mathbf{F}(x_Q)\mathbf{F}_n^{-1}(x^n) \right)$ can be employed as a measure for selecting semantically correlated demonstrations to improve the ICL performance in prompt engineering. The real-data experiments in the Section \ref{sec:exp} shows the applicability of this measure in real problem.
    \item Finally, it can be observed from (4) that excessive risk is inversely proportional to the number $n$ of demonstrations in the prompt, and proportional to the dimension $K$ of the LLM embedding. Therefore, although designing high dimensional LLM embeddings can capture more semantic knowledge of the prompt, i.e., making $R(x,y)$ smaller (c.f. \citep{kaplan2020scaling}), in the meanwhile, it is also more difficulty to learn the prompt concept from the high-dimensional embedding space reflected by the growing excessive risk with respect to the dimension $K$. This tradeoff suggest the importance of learning parsimony and informative semantic embeddings in ICL.
\end{itemize}

\subsection{Complete and Insufficient Models}

In this subsection, we investigate the situation when the LLM embedding is complete, but the prompt demonstrations are not sufficient to illustrate the prompt concept, i.e., $\mathbf{F}_n(x^n)$ is not a full-rank matrix. We aim to characterize the impact of insufficient prompt demonstrations in CB-ICL, which can be formalized in the following Theorem.

\begin{theorem}\label{thm:complete_insufficient}
When $R(x,y)=0$, and $\mathbf{F}_n(x^n)$ is not invertible, the excessive risk (2) can be bounded as
\begin{align*}
\mathbb{E}_{P_{Y^n|X^n}}\left[ 
\ell \left(x^n,Y^n;x_Q\right) | X^n = x^n\right]\leq \frac{K}{n}\lambda_1 \left(\mathbf{F}(x_Q)\mathbf{F}_n^\dagger(x^n) \right) \lambda_1(\textbf{A}(x^n)) + \underbrace{
\left\| f ^{\mathrm{T}}(x_Q)\mathbf{F}_n^\perp(x^n)\underline{\alpha}\right\|^2 
}_{(*)},
\end{align*}
where $\mathbf{F}_n^{\perp}(x^n) = \textbf{I}_K-\mathbf{F}_n^\dagger(x^n) \mathbf{F}_n(x^n)$.
\end{theorem}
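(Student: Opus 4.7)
The plan is to mirror the proof of Theorem \ref{thm:compelete_sufficient}, isolating the new contribution caused by $\mathbf{F}_n^\dagger(x^n)\mathbf{F}_n(x^n)\neq \mathbf{I}_K$ from the variance analysis, which otherwise proceeds exactly as before with $\mathbf{F}_n^{-1}$ replaced by $\mathbf{F}_n^\dagger$. Since $R(x,y)\equiv 0$, the pointwise error is the linear form $\hat P_{Y|X}(y|x_Q) - P_{Y|X}(y|x_Q) = (\underline{\hat\alpha} - \underline\alpha)^{\mathrm{T}} f(x_Q,y)$, so using $\sum_y f(x_Q,y)f^{\mathrm T}(x_Q,y) = \mathbf{F}(x_Q)$ the loss rewrites as the quadratic form $\ell(x^n,y^n;x_Q) = (\underline{\hat\alpha} - \underline\alpha)^{\mathrm{T}} \mathbf{F}(x_Q) (\underline{\hat\alpha} - \underline\alpha)$.

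Next I would perform a bias-variance split. Conditional on $X^n = x^n$, independence of the $y_i$ gives $\mathbb{E}[\bar f_n(x^n,Y^n)] = \mathbf{F}_n(x^n)\underline\alpha$, whence $\mathbb{E}[\underline{\hat\alpha}] = \mathbf{F}_n^\dagger \mathbf{F}_n \underline\alpha$ and $\mathbb{E}[\underline{\hat\alpha}] - \underline\alpha = -\mathbf{F}_n^\perp \underline\alpha$, where $\mathbf{F}_n^\perp = \mathbf{I}_K - \mathbf{F}_n^\dagger\mathbf{F}_n$ is the orthogonal projection onto $\ker\mathbf{F}_n$. The cross terms vanish in expectation, leaving $\mathbb{E}[\ell] = \mathbb{E}\bigl[(\underline{\hat\alpha}-\mathbb{E}\underline{\hat\alpha})^{\mathrm T}\mathbf{F}(x_Q)(\underline{\hat\alpha}-\mathbb{E}\underline{\hat\alpha})\bigr] + \underline\alpha^{\mathrm T}\mathbf{F}_n^\perp \mathbf{F}(x_Q)\mathbf{F}_n^\perp \underline\alpha$. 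Using $\mathbf{F}(x_Q) = f(x_Q)f^{\mathrm T}(x_Q)$ and the symmetry of $\mathbf{F}_n^\perp$, the bias term collapses exactly to $\|f^{\mathrm T}(x_Q)\mathbf{F}_n^\perp(x^n)\underline\alpha\|^2$, i.e., the claimed term $(*)$.

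For the variance term I would reuse the trace argument from the complete-and-sufficient case. From conditional independence, $\mathrm{Cov}(\bar f_n \mid x^n) = \frac{1}{n^2} f(x^n)\mathbf{A}(x^n) f^{\mathrm T}(x^n)$ with $\mathbf{A}(x^n)$ as in (\ref{equ:def_Q}). Propagating through the pseudo-inverse and taking the trace against $\mathbf{F}(x_Q)$ gives $\mathbb{E}\bigl[(\underline{\hat\alpha}-\mathbb{E}\underline{\hat\alpha})^{\mathrm T}\mathbf{F}(x_Q)(\underline{\hat\alpha}-\mathbb{E}\underline{\hat\alpha})\bigr] = \frac{1}{n^2}\Tr\bigl(\mathbf{A}(x^n)\,f^{\mathrm T}(x^n)\mathbf{F}_n^\dagger \mathbf{F}(x_Q)\mathbf{F}_n^\dagger f(x^n)\bigr)$. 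The inner matrix is PSD because it has the Gram form $(f^{\mathrm T}(x_Q)\mathbf{F}_n^\dagger f(x^n))^{\mathrm T}(f^{\mathrm T}(x_Q)\mathbf{F}_n^\dagger f(x^n))$, so $\Tr(\mathbf{A}B)\le\lambda_1(\mathbf{A})\Tr(B)$ applies. The pseudo-inverse identity $\mathbf{F}_n^\dagger \mathbf{F}_n \mathbf{F}_n^\dagger = \mathbf{F}_n^\dagger$ for symmetric PSD $\mathbf{F}_n$ then reduces the inner trace to $n\,\Tr(\mathbf{F}(x_Q)\mathbf{F}_n^\dagger)$, which I bound by $nK\lambda_1(\mathbf{F}(x_Q)\mathbf{F}_n^\dagger)$. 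Combining yields the variance bound $\tfrac{K}{n}\lambda_1(\mathbf{F}(x_Q)\mathbf{F}_n^\dagger(x^n))\lambda_1(\mathbf{A}(x^n))$, matching the first term of the theorem.

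The main obstacle is careful book-keeping with the pseudo-inverse: checking that the cross terms really vanish (which relies on $\mathbf{F}_n^\perp$ being a symmetric idempotent projector), that the matrix inside the variance trace is PSD so the $\lambda_1(\mathbf{A})\Tr(\cdot)$ step is legitimate, and that the range-kernel split induced by $\mathbf{F}_n^\dagger \mathbf{F}_n$ cleanly decomposes the risk into the two pieces without any residual coupling. Once this is verified, Theorem \ref{thm:complete_insufficient} is essentially the same algebra as Theorem \ref{thm:compelete_sufficient} plus the extra bias arising from projecting onto $\ker\mathbf{F}_n$.
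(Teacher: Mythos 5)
Your proposal is correct and follows essentially the same route as the paper: the paper's Corollary~\ref{col:excessive_risk_complete_insufficient_simple} is exactly your bias--variance decomposition (variance trace term plus $\left\| f^{\mathrm{T}}(x_Q)\mathbf{F}_n^\perp(x^n)\underline{\alpha}\right\|^2$), derived there by expanding $\mathsf{F}(x^n)$ rather than by centering $\underline{\hat\alpha}$. The only difference is cosmetic: you bound the variance trace via $\Tr(\mathbf{A}B)\le\lambda_1(\mathbf{A})\Tr(B)$ and the identity $\mathbf{F}_n^\dagger\mathbf{F}_n\mathbf{F}_n^\dagger=\mathbf{F}_n^\dagger$, while the paper uses Von Neumann's trace inequality after conjugating by $\mathbf{F}_n^\dagger(x^n)^{1/2}$ — both land on the same $\frac{K}{n}\lambda_1(\mathbf{F}(x_Q)\mathbf{F}_n^\dagger(x^n))\lambda_1(\mathbf{A}(x^n))$.
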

\begin{proof}
    See Appendix \ref{append:theorem2_proof}.
\end{proof}
Comparing to (\ref{equ:complete_sufficient_ub}) in Theorem \ref{thm:compelete_sufficient}, there is a penalty term $(*)$ caused by the insufficiency of the prompt demonstrations to the upper bound of the excessive risk. To interpret this term, note that from the definition of (\ref{equ:def_alpha}), $\underline{\hat{\alpha}}(x^n,y^n)$ is orthogonal to the null space of $\mathbf{F}_n(x^n)$, and hence the CB-ICL cannot learn the prompt concept aligned with the null space of $\mathbf{F}_n(x^n)$. Since $\mathbf{F}_n^{\perp}(x^n)$ is a projection operation, which projects the vector onto the null space of $\mathbf{F}_n(x^n)$, the penalty term $(*)$ can be interpreted as the information of the query embedding $f (x_Q)$ aligned with the null space of $\mathbf{F}_n(x^n)$, which cannot be learned by CB-ICL due to the insufficiency of the prompt demonstrations.

Moreover, it is readily to show that if $\mathbf{F}(x_Q) = \mathbf{F}_n(x^n)$, then the penalty term $(*)$ is 0. This tells that if the semantic information of the query text is well illustrated by the prompt demonstrations, the columns of the query embedding matrix $f (x_Q)$ is orthogonal to the null space of $\mathbf{F}_n(x^n)$, and there is no information and performance loss caused by the insufficiency of the prompt demonstrations. This again indicates the importance of designing illustrative prompt demonstrations.

\subsection{Incomplete and Insufficient Models}

In practice, the LLM embeddings often cannot capture the complete knowledge of the prompt contexts, i.e., $R(x,y)\neq0$, which introduce the learning performance degradation caused by the incomplete knowledge LLM embeddings. To quantify such performance degradation, we analyze the expected excessive risk with respect to $P_{X}$, and define the vectors 
\begin{equation*}
    \textsf{R}(x_Q) = \left[ R(x_Q,1) , \ldots , R(x_Q,M)\right]^{\text{T}}
\in \mathbb R^{M}, \textsf{R}(x^n) = \left[ \textsf{R}(x_1) , \ldots , \textsf{R}(x_n)\right]^{\text{T}}
\in \mathbb R^{ nM},
\end{equation*}
Then, the excessive risk can be characterized as follows.

\begin{theorem}\label{thm:incomplete_insufficient}
When $R(x,y)\neq0$, and $\mathbf{F}_n(x^n)$ is not invertible, the excessive risk (2) averaged with respect to $P_{X}$ can be bounded as
\begin{align} \notag
&\mathbb{E}_{P_{X}P_{Y^n|X^n}}\left[
\ell \left(x^n,Y^n;x_Q\right) | X^n = x^n\right]\\ 
&\leq \frac{K}{n}\lambda_1 \left(\mathbf{F}_Q\mathbf{F}_n^\dagger(x^n) \right) \lambda_1(\textbf{A}(x^n))\notag\\
& \quad + \frac1n \lambda_1 \big(\mathbf{F}_Q \mathbf{F}_n^\dagger(x^n)\big)\cdot \| \mathsf{R}(x^n) \|^2 + \sum_{x_Q,y}{P_{X}(x_Q)R^2(x_Q,y)}\label{equ:penalty_term_Q}\\
& \quad + \underline{\alpha}^{\mathrm{T}}\mathbf{F}_n^{\perp}(x^n)^{\mathrm{T}} \mathbf{F}_Q\mathbf{F}_n^{\perp}(x^n)\underline{\alpha} - \frac2n \mathsf{R}^{\mathrm{T}}(x^n)f^{\mathrm{T}}(x^n) \mathbf{F}_n^\dagger(x^n)\mathbf{F}_Q\mathbf{F}_n^{\perp}(x^n)\underline{\alpha} \label{equ:penalty_term}
\end{align}
where $\mathbf{F}_Q = \mathbb E_{P_{X}} [\mathbf{F}(X_Q)]$.
\end{theorem}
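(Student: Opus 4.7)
The plan is to decompose the pointwise prediction error into a part linear in $\underline{\hat{\alpha}}-\underline{\alpha}$ and a residual part, and then run a bias--variance analysis of the resulting quadratic form, reusing the trace-and-eigenvalue machinery from Theorem~\ref{thm:compelete_sufficient}. For a fixed $x_Q$, writing $\hat{P}_{Y|X}(y|x_Q)-P_{Y|X}(y|x_Q) = (\underline{\hat{\alpha}}-\underline{\alpha})^{\mathrm{T}} f(x_Q,y) - R(x_Q,y)$, then squaring and summing over $y$, produces $\ell(x^n,y^n;x_Q) = (\underline{\hat{\alpha}}-\underline{\alpha})^{\mathrm{T}}\mathbf{F}(x_Q)(\underline{\hat{\alpha}}-\underline{\alpha}) - 2(\underline{\hat{\alpha}}-\underline{\alpha})^{\mathrm{T}} f(x_Q)\mathsf{R}(x_Q) + \|\mathsf{R}(x_Q)\|^{2}$. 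Averaging over $P_X$ on the query removes the cross term, because the orthogonality hypothesis $\sum_{x,y}P_X(x)f_k(x,y)R(x,y)=0$ gives $\mathbb{E}_{P_X}[f(X_Q)\mathsf{R}(X_Q)]=\mathbf{0}$; the quadratic piece collapses to $(\underline{\hat{\alpha}}-\underline{\alpha})^{\mathrm{T}}\mathbf{F}_Q(\underline{\hat{\alpha}}-\underline{\alpha})$ while the pointwise residual contributes the clean term $\sum_{x_Q,y}P_X(x_Q)R^2(x_Q,y)$.

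Next I would compute the two conditional moments of $\underline{\hat{\alpha}}(x^n,Y^n)$ given $X^n=x^n$. Substituting the model into $\bar f_n$ yields $\mathbb{E}[\bar f_n\mid x^n] = \mathbf{F}_n\underline{\alpha} + \tfrac{1}{n} f(x^n)\mathsf{R}(x^n)$, so left-multiplying by $\mathbf{F}_n^{\dagger}$ and using $\mathbf{F}_n^{\dagger}\mathbf{F}_n = \mathbf{I}-\mathbf{F}_n^{\perp}$ gives the bias $b := \mathbb{E}[\underline{\hat{\alpha}}-\underline{\alpha}\mid x^n] = -\mathbf{F}_n^{\perp}\underline{\alpha} + \tfrac{1}{n}\mathbf{F}_n^{\dagger} f(x^n)\mathsf{R}(x^n)$. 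Since $Y_i\mid x_i$ is a one-hot vector with covariance $\mathbf{A}(x_i)$, the conditional covariance factors as $\mathrm{Cov}(\underline{\hat{\alpha}}\mid x^n) = \tfrac{1}{n^{2}}\mathbf{F}_n^{\dagger} f(x^n)\mathbf{A}(x^n) f^{\mathrm{T}}(x^n)\mathbf{F}_n^{\dagger}$. The standard bias--variance identity then writes the remaining quadratic form as $\Tr[\mathbf{F}_Q\mathrm{Cov}(\underline{\hat{\alpha}}\mid x^n)] + b^{\mathrm{T}}\mathbf{F}_Q b$, and it suffices to bound each summand.

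For the variance summand I would cyclically permute the trace, apply $\Tr(AB)\le\lambda_1(A)\Tr(B)$ with $A=\mathbf{A}(x^n)$, and then use $f(x^n)f^{\mathrm{T}}(x^n)=n\mathbf{F}_n$ together with the Moore--Penrose identity $\mathbf{F}_n^{\dagger}\mathbf{F}_n\mathbf{F}_n^{\dagger}=\mathbf{F}_n^{\dagger}$ to collapse the expression to $\tfrac{1}{n}\lambda_1(\mathbf{A}(x^n))\Tr(\mathbf{F}_Q\mathbf{F}_n^{\dagger})$, after which $\Tr(C)\le K\lambda_1(C)$ yields the leading $\tfrac{K}{n}\lambda_1(\mathbf{F}_Q\mathbf{F}_n^{\dagger})\lambda_1(\mathbf{A}(x^n))$ in exact analogy with Theorem~\ref{thm:compelete_sufficient}. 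Expanding $b^{\mathrm{T}}\mathbf{F}_Q b$ produces three pieces: the projection term $\underline{\alpha}^{\mathrm{T}}(\mathbf{F}_n^{\perp})^{\mathrm{T}}\mathbf{F}_Q\mathbf{F}_n^{\perp}\underline{\alpha}$ and the mixed term $-\tfrac{2}{n}\mathsf{R}^{\mathrm{T}}(x^n) f^{\mathrm{T}}(x^n)\mathbf{F}_n^{\dagger}\mathbf{F}_Q\mathbf{F}_n^{\perp}\underline{\alpha}$ (both kept as written), plus the pure-$\mathsf{R}$ quadratic $\tfrac{1}{n^{2}}\mathsf{R}^{\mathrm{T}} f^{\mathrm{T}}\mathbf{F}_n^{\dagger}\mathbf{F}_Q\mathbf{F}_n^{\dagger} f\mathsf{R}$. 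For the last piece I would set $M=\mathbf{F}_n^{\dagger} f(x^n)$ and use $\lambda_1(M^{\mathrm{T}}\mathbf{F}_Q M)=\lambda_1(\mathbf{F}_Q MM^{\mathrm{T}})$ together with $MM^{\mathrm{T}} = \mathbf{F}_n^{\dagger}\cdot n\mathbf{F}_n\cdot\mathbf{F}_n^{\dagger} = n\mathbf{F}_n^{\dagger}$, which gives the bound $\tfrac{1}{n}\lambda_1(\mathbf{F}_Q\mathbf{F}_n^{\dagger})\|\mathsf{R}(x^n)\|^{2}$, and assembling everything recovers the stated inequality.

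The main obstacle will be keeping the pseudoinverse algebra straight and deciding which cross terms to keep explicit. In particular, the mixed bias term between $\mathbf{F}_n^{\perp}\underline{\alpha}$ and $\tfrac{1}{n}\mathbf{F}_n^{\dagger} f(x^n)\mathsf{R}(x^n)$ is signed and depends on the interaction between the LLM's incompleteness and the prompt's insufficiency, so absorbing it into the leading $\lambda_1(\mathbf{F}_Q\mathbf{F}_n^{\dagger})$ piece via an absolute value or Young's inequality would be strictly lossy and would obscure the cancellation available when $\mathbf{F}_n^{\perp}\underline{\alpha}=\mathbf{0}$; this is precisely why the theorem statement retains the cross term in its raw form.
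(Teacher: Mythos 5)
Your proposal is correct and follows essentially the same route as the paper: it rests on the same first-moment formula $\mathbb{E}[\underline{\hat{\alpha}}\mid x^n]=\mathbf{F}_n^{\dagger}\mathbf{F}_n\underline{\alpha}+\tfrac1n\mathbf{F}_n^{\dagger}f(x^n)\mathsf{R}(x^n)$, the same covariance structure $\mathbf{A}(x^n)$, the same use of the orthogonality $\sum_{x,y}P_X(x)f_k(x,y)R(x,y)=0$ to kill the query cross terms, and the same trace--eigenvalue bounds (your $\Tr(AB)\le\lambda_1(A)\Tr(B)$ plus $\Tr(C)\le K\lambda_1(C)$ in place of the paper's Von Neumann inequality, landing on the identical leading term). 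The only difference is bookkeeping: you average over $P_X$ first and then run a clean bias--variance split of $(\underline{\hat{\alpha}}-\underline{\alpha})^{\mathrm{T}}\mathbf{F}_Q(\underline{\hat{\alpha}}-\underline{\alpha})$, whereas the paper keeps everything pointwise in $x_Q$, derives an exact identity (trace term plus a single squared norm) and only averages at the end.
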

\begin{proof}
    See Appendix \ref{append:theorem3_proof}.
\end{proof}

Similar to the discussions in Section \ref{sec:complete_sufficient}, the penalty term (\ref{equ:penalty_term}) quantifies the amount of semantic information aligned with the null space of $\mathbf{F}_n(x^n)$ that cannot be learned in the CB-ICL approach due to the insufficiency of the prompt demonstrations. 
Moreover, the penalty terms (\ref{equ:penalty_term_Q}) quantify the performance degradation of the excessive risk caused by the incompleteness of the LLM embedding. In particular, the first term of (\ref{equ:penalty_term_Q}) can be interpreted as the learning bias in the model (\ref{equ:model}), due to the incompleteness of the LLM embedding, and the second term of (\ref{equ:penalty_term_Q}) quantifies the amount of semantic information of the query input text that is not captured by the LLM embeddings. 
Finally, we present some remarks of CB-ICL to draw the connections and comparisons to existing intuitions and techniques of ICL researches:

\begin{itemize}
    \item It is widely believed in ICL researches that during pre-training, LLM models acquire a broad range of semantic prior knowledge from the training data, which later aids task-specific learning representations \citep{chan2022data,shin2022effect,yadlowsky2023pretraining,yang2024context}. In particular, this empirical observation can be theoretically justified by the CB-ICL, which allows the generalizability to a broad class of ICL problems, as long as the ground truth distribution of the prompt contexts is somehow aligned with the semantic knowledge subspace spanned by the LLM embeddings.
    \item The pre-training/warm-up techniques \citep{brunet2023icl,shi2023context,li2024mend} in traditional ICL can also be beneficial in CB-ICL, which reduces the modeling error $R(x,y)$, and improve the learning performance. Moreover, the pre-training loss does not need to be restricted to the MSE loss between $\hat{P}_{Y|X}$ and $P_{Y|X}$, as long as the global minimum of the pre-training loss is achieved at $\hat{P}_{Y|X} = P_{Y|X}$.
    \item Note that the prompt concept extractor can be interpreted as a wide-sense transformer with the softmax activation function replaced by a quadratic function (c.f. linear attetion \citep{wang2020linformer,shen2021efficient,han2024demystify}). This essentially suggests the application of more general kinds of transformer architectures in theoretical analyses and algorithm designs in ICL and other machine learning fields.
\end{itemize}

\subsection{The Label Predicting Error Probability }

To further justify the practically usefulness of the CB-ICL and the corresponding theoretical analyses, in this subsection we establish the connection between the mean-squared excessive risk and the label predicting error probability that is widely adopted in real applications. To this end, we define
\begin{align}\label{equ:label_predictor}
\hat{y}_{\max} = \arg\max_y \hat{P}_{Y|X}(y|x_Q),
\end{align}
and denote $P_j$ as the $j$th largest probability among $\{P_{Y|X}(y|x_Q)\}_{y=1}^M$. In the following, we assume $P_1 > P_j$, for all $j \ge 2$.

\begin{lemma}\label{lem:lebel_pred_error}
Given demonstrations $x^n, y^n$, and query $x_Q$, if the mean-squared risk $\ell(x^n,y^n;x_Q)$ satisfies $\ell(x^n,y^n;x_Q) < \frac12 \left( P_1 - P_{j+1}\right)^2$, for some $j\geq 1$, then
\begin{align}
P_{Y|X}(\hat{y}_{\max}|x_Q)\geq P_j.
\end{align}
\end{lemma}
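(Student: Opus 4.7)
}
My plan is a proof by contradiction via a simple margin-style argument. Let $y^{*}$ denote a label achieving the maximum true probability, so that $P_{Y|X}(y^{*}|x_Q)=P_1$. If $\hat{y}_{\max}=y^{*}$ then $P_{Y|X}(\hat{y}_{\max}|x_Q)=P_1\geq P_j$ and there is nothing to prove, so I may assume $\hat{y}_{\max}\neq y^{*}$. Suppose toward a contradiction that $P_{Y|X}(\hat{y}_{\max}|x_Q)<P_j$. Since $P_1,\ldots,P_M$ is the decreasing rearrangement of $\{P_{Y|X}(y|x_Q)\}_{y=1}^{M}$, any value strictly less than the $j$-th largest one can rank no higher than the $(j+1)$-th position, so this assumption forces $P_{Y|X}(\hat{y}_{\max}|x_Q)\leq P_{j+1}$.

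Next I would introduce the two pointwise deviations
\[
\delta_1 \;\triangleq\; P_1 - \hat{P}_{Y|X}(y^{*}|x_Q), \qquad
\delta_2 \;\triangleq\; \hat{P}_{Y|X}(\hat{y}_{\max}|x_Q) - P_{Y|X}(\hat{y}_{\max}|x_Q),
\]
and extract two inequalities relating them. On the one hand, because $\hat{y}_{\max}$ is the argmax of $\hat{P}_{Y|X}(\cdot|x_Q)$, I have $\hat{P}_{Y|X}(\hat{y}_{\max}|x_Q)\geq \hat{P}_{Y|X}(y^{*}|x_Q)$, which rearranges to $\delta_1+\delta_2 \geq P_1 - P_{Y|X}(\hat{y}_{\max}|x_Q) \geq P_1 - P_{j+1}$. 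On the other hand, since $\hat{y}_{\max}\neq y^{*}$, the two squared deviations $\delta_1^{2}$ and $\delta_2^{2}$ correspond to distinct summands in the definition of $\ell(x^n,y^n;x_Q)$, so $\delta_1^{2}+\delta_2^{2}\leq \ell(x^n,y^n;x_Q) < \tfrac{1}{2}(P_1-P_{j+1})^{2}$.

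The final step applies the elementary inequality $(\delta_1+\delta_2)^{2}\leq 2(\delta_1^{2}+\delta_2^{2})$ (Cauchy--Schwarz, or equivalently AM--QM), which together with the upper bound on $\delta_1^{2}+\delta_2^{2}$ yields $(\delta_1+\delta_2)^{2}<(P_1-P_{j+1})^{2}$. Since the lower bound $\delta_1+\delta_2\geq P_1-P_{j+1}>0$ gives a positive quantity, this is directly incompatible with $|\delta_1+\delta_2|<P_1-P_{j+1}$, producing the required contradiction and hence establishing $P_{Y|X}(\hat{y}_{\max}|x_Q)\geq P_j$.

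I do not anticipate a serious obstacle: the argument is essentially a two-coordinate margin analysis. The only subtlety worth double-checking in the write-up is the step that converts the strict inequality $P_{Y|X}(\hat{y}_{\max}|x_Q)<P_j$ into $P_{Y|X}(\hat{y}_{\max}|x_Q)\leq P_{j+1}$, which relies on the decreasing-rearrangement definition of the $P_j$'s and is where the assumption $P_1>P_j$ for $j\geq 2$ (preventing degeneracies at the top) becomes useful.
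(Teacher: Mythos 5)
Your proof is correct and follows essentially the same route as the paper's: both reduce to a two-coordinate margin analysis showing that forcing $\hat{P}_{Y|X}(\hat{y}_{\max}|x_Q)\geq\hat{P}_{Y|X}(y^{*}|x_Q)$ while $P_{Y|X}(\hat{y}_{\max}|x_Q)\leq P_{j+1}$ costs at least $\tfrac12(P_1-P_{j+1})^2$ in squared error. The only cosmetic difference is that the paper phrases this as an explicit two-variable constrained minimization (solved at $u=v=\tfrac{P_1+P_s}{2}$), whereas you obtain the same bound via $(\delta_1+\delta_2)^2\leq 2(\delta_1^2+\delta_2^2)$.
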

\begin{proof}
    See Appendix \ref{append:lemma2_proof}.
\end{proof}

Lemma \ref{lem:lebel_pred_error} provides the theoretical guarantee of the CB-ICL label predictor with respect to different threshold values of the mean-squared risk. Notice that $P_{Y|X}(\hat{y}_{\max}|x_Q) \leq \max_y P_{Y|X}(y|x_Q) = P_1$, and the equality is achieved when $\ell(x^n,y^n;x_Q) < \frac12 \left( P_1 - P_{2}\right)^2$. Therefore, the CB-ICL label predictor is reduced to the Maximum a Posteriori (MAP) decision when the mean-squared risk is small. Moreover, the following Theorem establishes the connection between the excessive risk and the label predicting error probability based on Lemma \ref{lem:lebel_pred_error}.
\begin{theorem}\label{thm:error_prob}
Suppose that for some $j \geq 1$, the excessive risk 
\begin{equation*}
    \mathbb{E}_{P_{Y^n|X^n}}\left[ 
\ell \left(x^n,Y^n;x_Q\right) | X^n = x^n\right] = \frac12 (P_1-P_{j})^2 + \gamma,
\end{equation*}
where $0 \leq \gamma < \frac12 (P_1-P_{j+1})^2 - \frac12 (P_1-P_{j})^2$, then the label predicting error probability is lower bounded by
\begin{align}
\mathbb{E}_{P_{Y^n|X^n}}\left[P_{Y|X}(\hat{y}_{\max}|x_Q)\mid X^n=x^n\right] \ge P_j - 
\frac{2\gamma}{2P_1-P_j-P_{j+1}}.
\end{align}
\end{theorem}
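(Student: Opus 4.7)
The plan is to upgrade the pointwise conditional guarantee of Lemma~\ref{lem:lebel_pred_error} into an unconditional bound on the conditional expectation, by inserting an affine minorant between the step-function bound supplied by the lemma and the linear expectation operator. First I would apply Lemma~\ref{lem:lebel_pred_error} for every index $k \in \{1, \ldots, M-1\}$ to obtain the pointwise inequality $P_{Y|X}(\hat{y}_{\max} \mid x_Q) \ge g\bigl(\ell(x^n, Y^n; x_Q)\bigr)$, where $g$ is the non-increasing step function equal to $P_k$ on $[\tfrac12(P_1-P_k)^2, \tfrac12(P_1-P_{k+1})^2)$ for $k = 1,\ldots,M-1$ and equal to $P_M$ on $[\tfrac12(P_1-P_M)^2,\infty)$ (the latter using the trivial observation that any predicted posterior is at least $P_M$).

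Next I would introduce the affine function
\[
h(t) \;=\; P_j \;-\; \frac{2}{2P_1 - P_j - P_{j+1}}\Bigl(t - \tfrac12(P_1-P_j)^2\Bigr),
\]
whose slope is calibrated so that $h$ interpolates the two "step corners" $(\tfrac12(P_1-P_j)^2,\, P_j)$ and $(\tfrac12(P_1-P_{j+1})^2,\, P_{j+1})$ of $g$. Since $h$ is decreasing and $g$ is piecewise constant between breakpoints, verifying $h \le g$ on $[0,\infty)$ reduces to checking $h(\tfrac12(P_1-P_k)^2) \le P_k$ at each left breakpoint; direct algebra rewrites this as $(P_j - P_k)(P_k - P_{j+1})/(2P_1 - P_j - P_{j+1}) \le 0$, which holds in both regimes $k \le j$ and $k \ge j+1$ using the ordering $P_1 \ge \cdots \ge P_M$.

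Finally, taking conditional expectation of the pointwise inequality $P_{Y|X}(\hat y_{\max}\mid x_Q) \ge h(\ell)$ and using linearity of $h$ gives
\[
\mathbb{E}\!\left[P_{Y|X}(\hat y_{\max}\mid x_Q) \mid X^n=x^n\right] \;\ge\; h\!\left(\mathbb{E}[\ell \mid X^n=x^n]\right) \;=\; h\!\left(\tfrac12(P_1-P_j)^2 + \gamma\right) \;=\; P_j - \frac{2\gamma}{2P_1-P_j-P_{j+1}},
\]
which is the claim. The main obstacle is the pointwise verification $h \le g$, particularly on the tail interval $[\tfrac12(P_1-P_M)^2, \infty)$ where Lemma~\ref{lem:lebel_pred_error} itself is silent and one must separately invoke the trivial bound $P_{Y|X}(\hat y_{\max}\mid x_Q) \ge P_M$; the hypothesis $\gamma < \tfrac12(P_1-P_{j+1})^2 - \tfrac12(P_1-P_j)^2$ ensures that $\mathbb{E}[\ell \mid X^n=x^n]$ lies in the segment where the affine interpolant is tight, so that the resulting lower bound is strictly greater than $P_{j+1}$ and hence informative.
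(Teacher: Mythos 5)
Your proof is correct, and it takes a genuinely different route from the paper's. The paper frames the worst case as a discrete assignment problem over the values $\ell_k=\tfrac12(P_1-P_k)^2$, relaxes it to a continuous linear program, and uses KKT conditions plus a "lower envelope / increasing slopes" argument to show the optimal weight vector is supported on two adjacent indices $\{j,j+1\}$, after which the weight $\alpha=\gamma/(\ell_{j+1}-\ell_j)$ gives the bound. You instead exhibit the dual certificate directly: the affine chord $h$ through $(\ell_j,P_j)$ and $(\ell_{j+1},P_{j+1})$ minorizes the step-function guarantee $g$ from Lemma~\ref{lem:lebel_pred_error}, and linearity of $h$ converts the pointwise bound into the expectation bound in one line. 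The key verification $h(\ell_k)-P_k=(P_j-P_k)(P_k-P_{j+1})/(2P_1-P_j-P_{j+1})\le 0$ is exactly the convexity of the sequence $(\ell_k,P_k)$ that the paper encodes as the monotonicity of the slopes $s_t=-2/(2P_1-P_t-P_{t+1})$, but your sign check is elementary and self-contained, whereas the paper's Lemma~A.13 argument via KKT and an "intermediate value" claim on the lower envelope is heavier and less explicit. Your handling of the tail interval $[\tfrac12(P_1-P_M)^2,\infty)$ via the trivial bound $P_{Y|X}(\hat y_{\max}\mid x_Q)\ge P_M$ also patches a case the paper's discretization glosses over. The two approaches buy the same constant; yours is shorter and avoids optimization machinery, while the paper's LP formulation makes the extremal distribution (mass split between $P_j$ and $P_{j+1}$) explicit, which shows the bound is attained and hence tight.
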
 

\begin{proof}
    See Appendix \ref{append:theorem4_proof}.
\end{proof}

Theorem \ref{thm:error_prob} shows that designing $\hat{P}_{Y|X}(y|x_Q)$ with small excessive risk also leads to small label predicting error probability from the label predictor Eq. (\ref{equ:label_predictor}), which demonstrates the applicability of the theoretical analyses of the CB-ICL in real scenarios.

\section{Experiment}\label{sec:exp}

We conduct experiments on four representative benchmarks that collectively measure both general knowledge and complex reasoning ability: MMLU (Massive Multitask Language Understanding) \citep{hendrycks2020measuring}, MMLU-Pro (a harder extension of MMLU) \citep{wang2024mmlu}, GPQA (Graduate-level Google-Proof Q\&A) and GPQA-Diamond (the most challenging subset of GPQA) \citep{rein2024gpqa}. These datasets provide a comprehensive evaluation suite, balancing breadth (MMLU, MMLU-Pro) with depth in complex reasoning (GPQA, GPQA-Diamond). Furthermore, we benchmark three representative families of open-source LLMs at different parameter scales (8B, 14B, and 32B), including LLaMA3 family \citep{dubey2024llama}, Qwen3 family \citep{yang2025qwen3} and Deepseek-R1 \citep{guo2025deepseek} distilled model family. This selection covers both general-purpose and reasoning-enhanced model families, enabling a systematic comparison across scaling and architectural choices.

\subsection{Performance Validations of CB-ICL}

\begin{table}[t]
\begin{center}
\resizebox{\textwidth}{!}{
\begin{tabular}{@{}cccccccccc@{}}
\toprule
\multicolumn{2}{c}{\multirow{2}{*}{Dataset}} & \multicolumn{3}{c}{8B} & \multicolumn{2}{c}{14B} & \multicolumn{1}{c}{32B} & \multirow{2}{*}{Average} \\
\cmidrule(lr){3-5}\cmidrule(lr){6-7}\cmidrule(lr){8-8}
 & & LLaMA3 & Qwen3 & Deepseek\text{-}R1 & Qwen3 & Deepseek\text{-}R1 & Qwen3 & \\
\midrule
\multirow{2}{*}{MMLU}
 & ICL \citep{brown2020language}    & 68.40\% & 76.89\% & 63.54\% & 81.05\% & 74.46\% & 83.61\% & 74.66\% \\
 & CB-ICL  & \textbf{71.07\%} & \textbf{77.77\%} & \textbf{64.58\%} & \textbf{81.38\%} & \textbf{80.32\%} & \textbf{83.62\%} & \textbf{76.46\%} \\
\midrule
\multirow{2}{*}{\makecell{MMLU\text{-}\\Pro}}
 & ICL \citep{brown2020language}    & \textbf{35.36\%} & \textbf{56.73\%} & 41.10\% & \textbf{61.03\%} & 57.80\% & 65.54\% & \textbf{52.93\%} \\
 & CB-ICL  & 33.26\% & 53.62\% & \textbf{42.30\%} & 60.47\% & \textbf{59.04\%} & \textbf{65.89\%} & 52.43\% \\
\midrule
\multirow{2}{*}{GPQA}
 & ICL \citep{brown2020language}    & 34.50\% & 44.44\% & 45.32\% & 47.90\% & 46.32\% & 49.49\% & 44.66\% \\
 & CB-ICL  & \textbf{40.77\%} & \textbf{61.60\%} & \textbf{50.82\%} & \textbf{50.65\%} & \textbf{48.45\%} & \textbf{54.82\%} & \textbf{51.19\%} \\
\midrule
\multirow{2}{*}{\makecell{GPQA\text{-}\\diamond}}
 & ICL \citep{brown2020language}    & 28.29\% & 62.00\% & \textbf{49.10\%} & 64.31\% & 59.10\% & \textbf{68.40\%} & 55.20\% \\
 & CB-ICL  & \textbf{33.56\%} & \textbf{63.01\%} & 48.82\% & \textbf{64.88\%} & \textbf{60.13\%} & 66.34\% & \textbf{56.13\%} \\
\bottomrule
\end{tabular}
}
\caption{Performance comparisons on classification task datasets are conducted in both the vanilla ICL \citep{brown2020language} and the CB-ICL setting. We report the accuracy with 5 randomly selected demonstrations from the same task.}
\label{tab:compare}
\end{center}
\end{table}

The results in Table \ref{tab:compare} provide consistent evidence for the effectiveness of CB-ICL. Note that across all model families and datasets, CB-ICL either matches or surpasses vanilla ICL \citep{brown2020language}, denoting the effectiveness of proposed CB-ICL framework. The improvements are particularly notable on GPQA and GPQA-Diamond. This demonstrates that restructuring the prompt to enforce intermediate concepts systematically reduces failure cases. Moreover, improvements are larger on harder datasets. On MMLU (general factual recall), the gains are marginal (less than 2\%), while on GPQA and GPQA-Diamond, the improvements are substantial. This pattern validates the CB-ICL mechanism: when surface recall is sufficient (MMLU), CB-ICL is neutral; when conceptual reasoning is required (GPQA), CB-ICL offers strong advantages. Furthermore, scaling laws persist under CB-ICL. For example, within Qwen3 family, performance consistently increases from 8B to 32B, expect Qwen3-8B on GPQA dataset. This shows that CB-ICL is compatible with scaling, implying that this mechanism complements rather than replaces larger model capacity.

\subsection{Prompt Demonstration Designs}

As discussed in Section \ref{sec:complete_sufficient}, the quantity $\lambda_1^{-1}\left(\mathbf{F}(x_Q)\mathbf{F}^\dagger(x^n)\right)$ can be adopted as a similarity measure between prompt demonstrations and the query input text. For demonstration, we adopt a simple “translate to Chinese” dataset split into two parts that differ only in the cue language (English to Chinese and Italian to Chinese) while sharing the similar answer space and key. We call a demonstration set similar to a query if they come from the same part, otherwise dissimilar. As is illustrated in Figure 2, when the demonstrations are chosen from the similar prompt, the similarity measure $\lambda_1^{-1}\left(\mathbf{F}(x_Q)\mathbf{F}^\dagger(x^n)\right)$ tends to be larger.

\begin{figure}
\centering
\includegraphics[width=\linewidth]{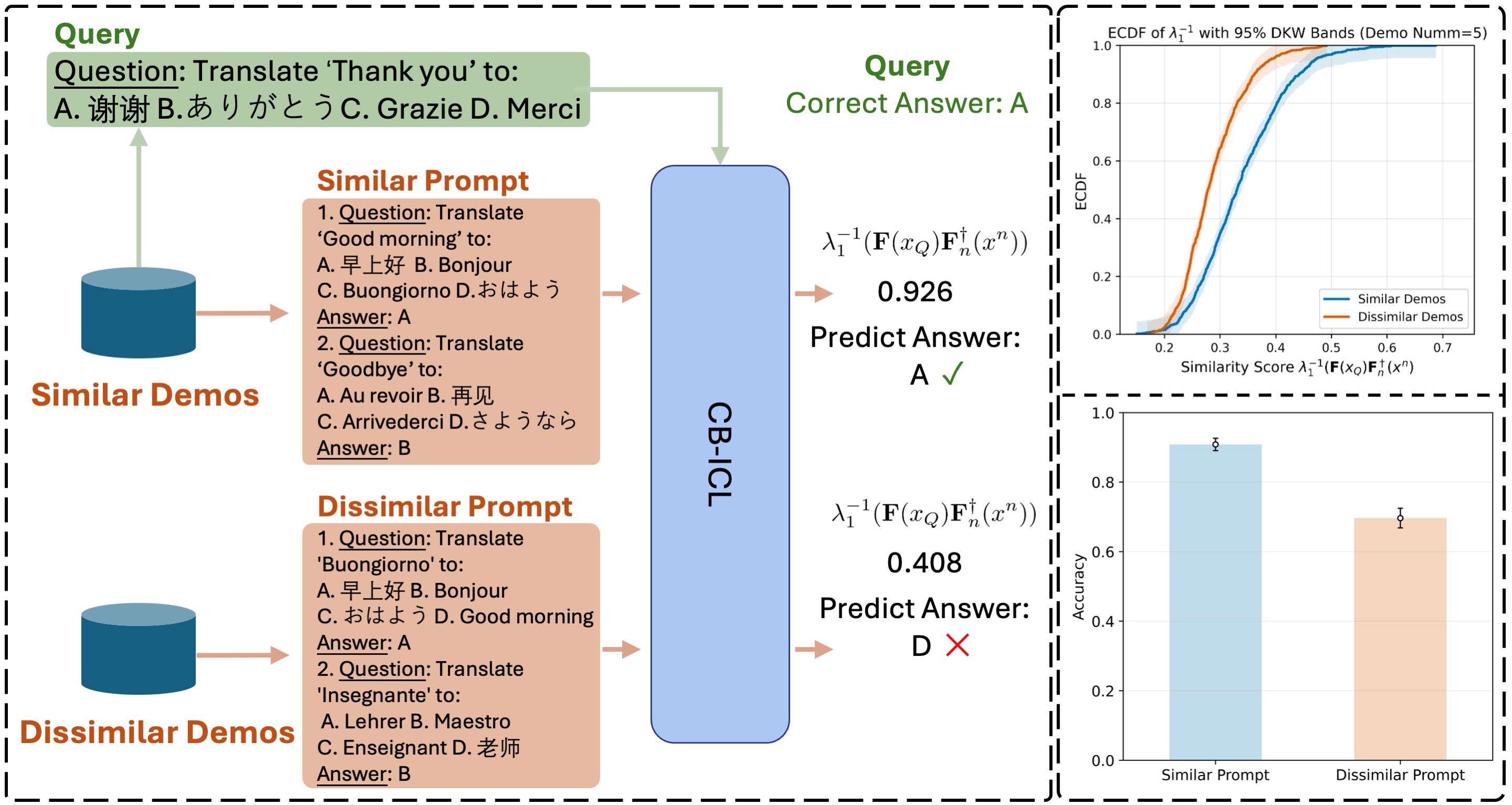}
\caption{Comparison of similarity score $\lambda_1^{-1}\left(\mathbf{F}(x_Q)\mathbf{F}^\dagger(x^n)\right)$ between similar and dissimilar demonstration sets on the “translate to Chinese” task. Demonstrations from similar prompts yield consistently larger similarity score values, indicating stronger semantic alignment with the query.}
\label{fig:golden_exp}
\end{figure}

\begin{table}[t]
\begin{center}
\resizebox{\textwidth}{!}{
\begin{tabular}{@{}cccccccccc@{}}
\toprule
\multicolumn{2}{c}{\multirow{2}{*}{Dataset}} & \multicolumn{3}{c}{8B} & \multicolumn{2}{c}{14B} & \multicolumn{1}{c}{32B} & \multirow{2}{*}{Average} \\
\cmidrule(lr){3-5}\cmidrule(lr){6-7}\cmidrule(lr){8-8}
 & & LLaMA3 & Qwen3 & Deepseek-R1 & Qwen3 & Deepseek-R1 & Qwen3 & \\
\midrule
\multirow{2}{*}{MMLU}
 & CB-ICL              & 71.07\% & 77.77\% & 64.58\% & 81.38\% & 80.32\% & 83.62\% & 76.46\% \\
 & CB-ICL (golden)     & \textbf{73.74\%} & \textbf{80.86\%} & \textbf{68.44\%} & \textbf{82.64\%} & \textbf{82.34\%} & \textbf{84.42\%} & \textbf{78.74\%} \\
\midrule
\multirow{2}{*}{\makecell{MMLU\text{-}\\Pro}}
 & CB-ICL              & 33.26\% & 53.62\% & 42.30\% & 60.47\% & 59.04\% & 65.89\% & 52.43\% \\
 & CB-ICL (golden)     & \textbf{36.58\%} & \textbf{58.97\%} & \textbf{44.24\%} & \textbf{62.67\%} & \textbf{61.34\%} & \textbf{67.29\%} & \textbf{55.18\%} \\
\midrule
\multirow{2}{*}{GPQA}
 & CB-ICL              & 40.77\% & 61.60\% & 50.82\% & 50.65\% & 48.45\% & 54.82\% & 51.19\% \\
 & CB-ICL (golden)     & \textbf{48.51\%} & \textbf{65.77\%} & \textbf{51.67\%} & \textbf{55.83\%} & \textbf{52.32\%} & \textbf{58.73\%} & \textbf{55.47\%} \\
\midrule
\multirow{2}{*}{\makecell{GPQA\text{-}\\diamond}}
 & CB-ICL              & 33.56\% & 63.01\% & 48.82\% & 64.88\% & 60.13\% & 66.34\% & 56.12\% \\
 & CB-ICL (golden)     & \textbf{42.28\%} & \textbf{73.83\%} & \textbf{52.47\%} & \textbf{67.32\%} & \textbf{62.91\%} & \textbf{67.43\%} & \textbf{61.04\%} \\
\bottomrule
\end{tabular}
}
\caption{Performance comparison are conducted between random selected demonstrations and golden selected demonstrations. We report the accuracy with 5 demonstrations from the same task. The original CB-ICL results is generated under 5 randomly picked demonstrations, while CB-ICL (golden) is tested with golden demonstrations. Golden demonstrations are obtained by computing the similarity score between each candidate demonstration and the query using our proposed metric, and selecting the top-5 with the highest scores.}
\label{tab:golden}
\end{center}
\end{table}

To further validate the effectiveness of our proposed similarity measure, we compare CB-ICL under two settings: (i) CB-ICL (random), where demonstrations are randomly selected from the same task, and (ii) CB-ICL (golden), where demonstrations are selected based on the similarity measure. Specifically, for each candidate demonstration $x_i$ in the demonstration pool, we compute its similarity measure $\lambda_1^{-1}(\mathbf{F}(x_Q)\mathbf{F}^\dagger(x_i))$ with the target query $x_Q$. We then rank all candidates and select the top 5 demonstrations with the highest similarity as the golden demonstrations. This procedure ensures that the demonstrations are not only task-relevant but also maximally aligned with the query under our conceptual framework. The results are shown in Table \ref{tab:golden}. We observe three consistent trends: Firstly, golden selection substantially improves accuracy across datasets. For instance, Qwen3-8B achieves 10.82\% improvement on GPQA-Diamond, while LLaMA-7B improves from 33.56\% to 42.28\%. These large margins highlight that the similarity measure successfully identifies high-value demonstrations that guide the model more effectively. Secondly, the gains are most pronounced on more challenging datasets. Improvements are relatively modest on MMLU (less than 3\%), but significantly larger on GPQA and GPQA-Diamond, which require deeper reasoning. This pattern is consistent with our hypothesis: when tasks are non-trivial and prone to spurious correlations, the similarity-guided selection provides stronger benefits. Thirdly, the improvement generalizes across model families and scales. Both LLaMA3, Qwen3 and DeepSeek-R1 benefit from golden selection, regardless of parameter size. Even large models (e.g., Qwen3-32B) see measurable improvements, suggesting that the similarity measure provides complementary guidance beyond model scaling.

\subsection{Impact of Incompleteness of LLMs in ICL}

\begin{figure}[t]
  \centering
  \begin{subfigure}[t]{0.48\linewidth}
    \centering
    \includegraphics[width=\linewidth]{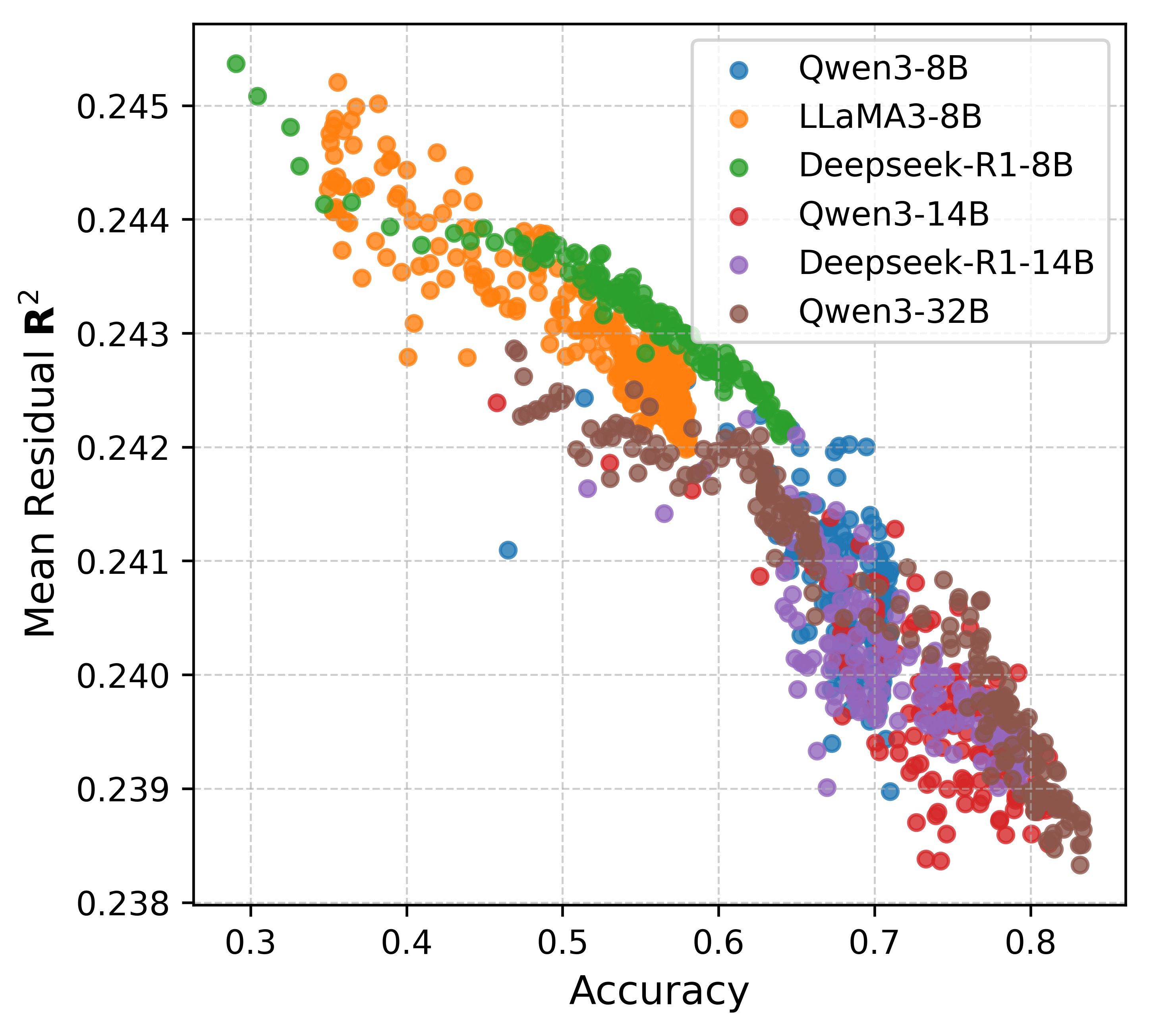}
    \caption{\textbf{MMLU}: Accuracy ($\uparrow$) vs residual risk $\mathbf{R}^2$ ($\downarrow$).}
    \label{fig:incomplete-mmlu}
  \end{subfigure}
  \hfill
  \begin{subfigure}[t]{0.48\linewidth}
    \centering
    \includegraphics[width=\linewidth]{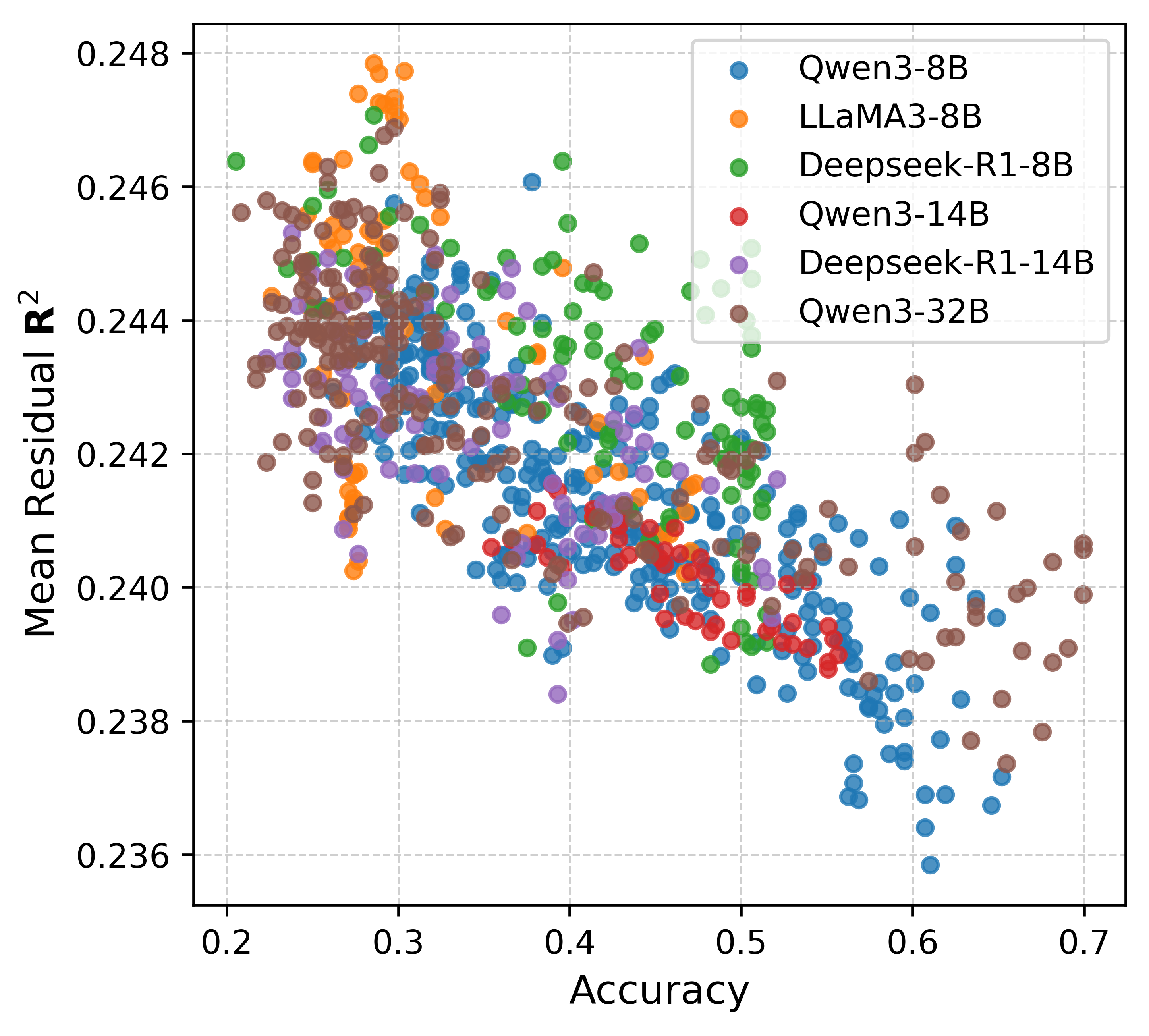}
    \caption{\textbf{GPQA}: Accuracy ($\uparrow$) vs residual risk $\mathbf{R}^2$ ($\downarrow$).}
    \label{fig:incomplete-gpqa}
  \end{subfigure}
  \caption{Performance of incomplete models across datasets. We report the results with 5 golden demonstrations and adapted last layer of model. Test accuracy increases as mean residual risk $R$ decreases, consistently across benchmarks (e.g., MMLU and GPQA), supporting our theory that prediction error is monotone in $\mathbf{R}^2$.}
  \label{fig:incomplete-models-duo}
\end{figure}

To validate our theory in quantifying the learning performance with respect to incomplete LLMs, we conduct experiments on incomplete models, i.e., $R(x,y)\neq 0$. For this purpose, we first fine-tune the LLMs with the last layer output, then record the output metrics. For each model and dataset, we measure two metrics: (i) accuracy: the standard evaluation metric on the benchmark given 5 golden demonstrations; (ii) the mean residual $\textbf{R}^2 = \sum_{x,y}P_X(x)R^2(x,y)$: the residual measure defined from our upper bound, which quantifies the amount of unexplained variability not captured by the concept bottleneck representation. A higher $\textbf{R}^2$ indicates stronger incompleteness in the model’s conceptual alignment with the task. Figure 3 reports the relationships between the label predicting accuracy and the residual $\textbf{R}^2$ across different models. A clear negative correlation emerges: (i) $\textbf{R}^2$ decreases as accuracy increases: Models achieving higher accuracy consistently exhibit smaller residual $\textbf{R}^2$, implying that their representations better align with the task concepts. (ii) Incomplete models suffer from high $\textbf{R}^2$: For weaker baselines such as LLaMA3-8B, the residual $\textbf{R}^2$ remains high, reflecting the incompleteness of the learnt knowledge; (iii) Generalization across families: The observed trend is consistent across different architectures and training paradigms, demonstrating that the proposed residual measure is not model-specific but captures a general phenomenon of incompleteness.

\section{Conclusion}
In this paper, we present theoretical analyses of CB-ICL, which reveals the fundamental mechanism of why and how ICL can perform well in prompts with only a few demonstrations. Moreover, our theory quantifies the knowledge leveraged by the pre-trained LLM embeddings, the similarity of the prompt demonstrations and query input text, as well as the impact of the number of prompt demonstrations and the dimensions of LLM embeddings, which provides useful guidance for model pre-training and prompt engineering. Finally, the effectiveness of our theory is validated by several real-data experiments.

\bibliography{iclr2026_conference}

\begin{thebibliography}{52}
\providecommand{\natexlab}[1]{#1}
\providecommand{\url}[1]{\texttt{#1}}
\expandafter\ifx\csname urlstyle\endcsname\relax
  \providecommand{\doi}[1]{doi: #1}\else
  \providecommand{\doi}{doi: \begingroup \urlstyle{rm}\Url}\fi

\bibitem[Achiam et~al.(2023)Achiam, Adler, Agarwal, Ahmad, Akkaya, Aleman,
  Almeida, Altenschmidt, Altman, Anadkat, et~al.]{achiam2023gpt}
Josh Achiam, Steven Adler, Sandhini Agarwal, Lama Ahmad, Ilge Akkaya,
  Florencia~Leoni Aleman, Diogo Almeida, Janko Altenschmidt, Sam Altman,
  Shyamal Anadkat, et~al.
\newblock Gpt-4 technical report.
\newblock \emph{arXiv preprint arXiv:2303.08774}, 2023.

\bibitem[Aky{\"u}rek et~al.(2022)Aky{\"u}rek, Schuurmans, Andreas, Ma, and
  Zhou]{akyureklearning}
Ekin Aky{\"u}rek, Dale Schuurmans, Jacob Andreas, Tengyu Ma, and Denny Zhou.
\newblock What learning algorithm is in-context learning? investigations with
  linear models.
\newblock In \emph{The Eleventh International Conference on Learning
  Representations}, 2022.

\bibitem[Bertsekas(1997)]{bertsekas1997nonlinear}
Dimitri~P Bertsekas.
\newblock Nonlinear programming.
\newblock \emph{Journal of the Operational Research Society}, 48\penalty0
  (3):\penalty0 334--334, 1997.

\bibitem[Bietti et~al.(2023)Bietti, Cabannes, Bouchacourt, Jegou, and
  Bottou]{bietti2023birth}
Alberto Bietti, Vivien Cabannes, Diane Bouchacourt, Herve Jegou, and Leon
  Bottou.
\newblock Birth of a transformer: A memory viewpoint.
\newblock \emph{Advances in Neural Information Processing Systems},
  36:\penalty0 1560--1588, 2023.

\bibitem[Brown et~al.(2020)Brown, Mann, Ryder, Subbiah, Kaplan, Dhariwal,
  Neelakantan, Shyam, Sastry, Askell, et~al.]{brown2020language}
Tom Brown, Benjamin Mann, Nick Ryder, Melanie Subbiah, Jared~D Kaplan, Prafulla
  Dhariwal, Arvind Neelakantan, Pranav Shyam, Girish Sastry, Amanda Askell,
  et~al.
\newblock Language models are few-shot learners.
\newblock \emph{Advances in neural information processing systems},
  33:\penalty0 1877--1901, 2020.

\bibitem[Brunet et~al.(2023)Brunet, Anderson, and Zemel]{brunet2023icl}
Marc-Etienne Brunet, Ashton Anderson, and Richard Zemel.
\newblock Icl markup: Structuring in-context learning using soft-token tags.
\newblock \emph{arXiv preprint arXiv:2312.07405}, 2023.

\bibitem[Chan et~al.(2022)Chan, Santoro, Lampinen, Wang, Singh, Richemond,
  McClelland, and Hill]{chan2022data}
Stephanie Chan, Adam Santoro, Andrew Lampinen, Jane Wang, Aaditya Singh, Pierre
  Richemond, James McClelland, and Felix Hill.
\newblock Data distributional properties drive emergent in-context learning in
  transformers.
\newblock \emph{Advances in neural information processing systems},
  35:\penalty0 18878--18891, 2022.

\bibitem[Chhikara et~al.(2025)Chhikara, Khant, Aryan, Singh, and
  Yadav]{chhikara2025mem0}
Prateek Chhikara, Dev Khant, Saket Aryan, Taranjeet Singh, and Deshraj Yadav.
\newblock Mem0: Building production-ready ai agents with scalable long-term
  memory.
\newblock \emph{arXiv preprint arXiv:2504.19413}, 2025.

\bibitem[Chowdhery et~al.(2023)Chowdhery, Narang, Devlin, Bosma, Mishra,
  Roberts, Barham, Chung, Sutton, Gehrmann, et~al.]{chowdhery2023palm}
Aakanksha Chowdhery, Sharan Narang, Jacob Devlin, Maarten Bosma, Gaurav Mishra,
  Adam Roberts, Paul Barham, Hyung~Won Chung, Charles Sutton, Sebastian
  Gehrmann, et~al.
\newblock Palm: Scaling language modeling with pathways.
\newblock \emph{Journal of Machine Learning Research}, 24\penalty0
  (240):\penalty0 1--113, 2023.

\bibitem[Dai et~al.(2023)Dai, Sun, Dong, Hao, Ma, Sui, and Wei]{dai2023can}
Damai Dai, Yutao Sun, Li~Dong, Yaru Hao, Shuming Ma, Zhifang Sui, and Furu Wei.
\newblock Why can gpt learn in-context? language models secretly perform
  gradient descent as meta-optimizers.
\newblock In \emph{Findings of the Association for Computational Linguistics:
  ACL 2023}, pp.\  4005--4019, 2023.

\bibitem[Dubey et~al.(2024)Dubey, Jauhri, Pandey, Kadian, Al-Dahle, Letman,
  Mathur, Schelten, Yang, Fan, et~al.]{dubey2024llama}
Abhimanyu Dubey, Abhinav Jauhri, Abhinav Pandey, Abhishek Kadian, Ahmad
  Al-Dahle, Aiesha Letman, Akhil Mathur, Alan Schelten, Amy Yang, Angela Fan,
  et~al.
\newblock The llama 3 herd of models.
\newblock \emph{arXiv e-prints}, pp.\  arXiv--2407, 2024.

\bibitem[Garg et~al.(2022)Garg, Tsipras, Liang, and Valiant]{garg2022can}
Shivam Garg, Dimitris Tsipras, Percy~S Liang, and Gregory Valiant.
\newblock What can transformers learn in-context? a case study of simple
  function classes.
\newblock \emph{Advances in neural information processing systems},
  35:\penalty0 30583--30598, 2022.

\bibitem[Guo et~al.(2025)Guo, Yang, Zhang, Song, Zhang, Xu, Zhu, Ma, Wang, Bi,
  et~al.]{guo2025deepseek}
Daya Guo, Dejian Yang, Haowei Zhang, Junxiao Song, Ruoyu Zhang, Runxin Xu,
  Qihao Zhu, Shirong Ma, Peiyi Wang, Xiao Bi, et~al.
\newblock Deepseek-r1: Incentivizing reasoning capability in llms via
  reinforcement learning.
\newblock \emph{arXiv preprint arXiv:2501.12948}, 2025.

\bibitem[Han et~al.(2024)Han, Wang, Xia, Han, Pu, Ge, Song, Song, Zheng, and
  Huang]{han2024demystify}
Dongchen Han, Ziyi Wang, Zhuofan Xia, Yizeng Han, Yifan Pu, Chunjiang Ge, Jun
  Song, Shiji Song, Bo~Zheng, and Gao Huang.
\newblock Demystify mamba in vision: A linear attention perspective.
\newblock \emph{Advances in neural information processing systems},
  37:\penalty0 127181--127203, 2024.

\bibitem[Hendrycks et~al.(2020)Hendrycks, Burns, Basart, Zou, Mazeika, Song,
  and Steinhardt]{hendrycks2020measuring}
Dan Hendrycks, Collin Burns, Steven Basart, Andy Zou, Mantas Mazeika, Dawn
  Song, and Jacob Steinhardt.
\newblock Measuring massive multitask language understanding.
\newblock \emph{arXiv preprint arXiv:2009.03300}, 2020.

\bibitem[Kaplan et~al.(2020)Kaplan, McCandlish, Henighan, Brown, Chess, Child,
  Gray, Radford, Wu, and Amodei]{kaplan2020scaling}
Jared Kaplan, Sam McCandlish, Tom Henighan, Tom~B Brown, Benjamin Chess, Rewon
  Child, Scott Gray, Alec Radford, Jeffrey Wu, and Dario Amodei.
\newblock Scaling laws for neural language models.
\newblock \emph{arXiv preprint arXiv:2001.08361}, 2020.

\bibitem[Li et~al.(2024{\natexlab{a}})Li, Song, Xia, Yu, and
  Zhou]{li2024closeness}
Shuai Li, Zhao Song, Yu~Xia, Tong Yu, and Tianyi Zhou.
\newblock The closeness of in-context learning and weight shifting for softmax
  regression.
\newblock \emph{Advances in Neural Information Processing Systems},
  37:\penalty0 62584--62616, 2024{\natexlab{a}}.

\bibitem[Li \& Qiu(2023)Li and Qiu]{li2023finding}
Xiaonan Li and Xipeng Qiu.
\newblock Finding supporting examples for in-context learning.
\newblock \emph{CoRR}, 2023.

\bibitem[Li et~al.(2024{\natexlab{b}})Li, Ma, Lu, Lee, Liu, and
  Guo]{li2024mend}
Yichuan Li, Xiyao Ma, Sixing Lu, Kyumin Lee, Xiaohu Liu, and Chenlei Guo.
\newblock Mend: Meta demonstration distillation for efficient and effective
  in-context learning.
\newblock \emph{arXiv preprint arXiv:2403.06914}, 2024{\natexlab{b}}.

\bibitem[Li et~al.(2023)Li, Ildiz, Papailiopoulos, and
  Oymak]{li2023transformers}
Yingcong Li, Muhammed~Emrullah Ildiz, Dimitris Papailiopoulos, and Samet Oymak.
\newblock Transformers as algorithms: Generalization and stability in
  in-context learning.
\newblock In \emph{International conference on machine learning}, pp.\
  19565--19594. PMLR, 2023.

\bibitem[Lin et~al.(2024)Lin, Ravichander, Lu, Dziri, Sclar, Chandu,
  Bhagavatula, and Choi]{linunlocking}
Bill~Yuchen Lin, Abhilasha Ravichander, Ximing Lu, Nouha Dziri, Melanie Sclar,
  Khyathi Chandu, Chandra Bhagavatula, and Yejin Choi.
\newblock The unlocking spell on base llms: Rethinking alignment via in-context
  learning.
\newblock In \emph{The Twelfth International Conference on Learning
  Representations}, 2024.

\bibitem[Liu et~al.(2022)Liu, Shen, Zhang, Dolan, Carin, and
  Chen]{liu2022makes}
Jiachang Liu, Dinghan Shen, Yizhe Zhang, William~B Dolan, Lawrence Carin, and
  Weizhu Chen.
\newblock What makes good in-context examples for gpt-3?
\newblock In \emph{Proceedings of Deep Learning Inside Out (DeeLIO 2022): The
  3rd Workshop on Knowledge Extraction and Integration for Deep Learning
  Architectures}, pp.\  100--114, 2022.

\bibitem[Liu et~al.(2024)Liu, Liu, Shi, Cheng, Huang, and Lu]{liu2024let}
Yinpeng Liu, Jiawei Liu, Xiang Shi, Qikai Cheng, Yong Huang, and Wei Lu.
\newblock Let's learn step by step: Enhancing in-context learning ability with
  curriculum learning.
\newblock \emph{arXiv preprint arXiv:2402.10738}, 2024.

\bibitem[Lu et~al.(2022)Lu, Bartolo, Moore, Riedel, and
  Stenetorp]{lu2022fantastically}
Yao Lu, Max Bartolo, Alastair Moore, Sebastian Riedel, and Pontus Stenetorp.
\newblock Fantastically ordered prompts and where to find them: Overcoming
  few-shot prompt order sensitivity.
\newblock In \emph{Proceedings of the 60th Annual Meeting of the Association
  for Computational Linguistics (Volume 1: Long Papers)}, pp.\  8086--8098,
  2022.

\bibitem[Mahankali et~al.(2023)Mahankali, Hashimoto, and Ma]{mahankali2023one}
Arvind Mahankali, Tatsunori~B Hashimoto, and Tengyu Ma.
\newblock One step of gradient descent is provably the optimal in-context
  learner with one layer of linear self-attention.
\newblock \emph{arXiv preprint arXiv:2307.03576}, 2023.

\bibitem[Mavromatis et~al.(2023)Mavromatis, Srinivasan, Shen, Zhang, Rangwala,
  Faloutsos, and Karypis]{mavromatis2023examples}
Costas Mavromatis, Balasubramaniam Srinivasan, Zhengyuan Shen, Jiani Zhang,
  Huzefa Rangwala, Christos Faloutsos, and George Karypis.
\newblock Which examples to annotate for in-context learning? towards effective
  and efficient selection.
\newblock \emph{arXiv preprint arXiv:2310.20046}, 2023.

\bibitem[Mirsky(1975)]{mirsky1975trace}
Leon Mirsky.
\newblock A trace inequality of john von neumann.
\newblock \emph{Monatshefte f{\"u}r mathematik}, 79\penalty0 (4):\penalty0
  303--306, 1975.

\bibitem[Olsson et~al.(2022)Olsson, Elhage, Nanda, Joseph, DasSarma, Henighan,
  Mann, Askell, Bai, Chen, et~al.]{olsson2022context}
Catherine Olsson, Nelson Elhage, Neel Nanda, Nicholas Joseph, Nova DasSarma,
  Tom Henighan, Ben Mann, Amanda Askell, Yuntao Bai, Anna Chen, et~al.
\newblock In-context learning and induction heads.
\newblock \emph{CoRR}, 2022.

\bibitem[Pan et~al.(2023)Pan, Gao, Chen, and Chen]{pan2023context}
Jane Pan, Tianyu Gao, Howard Chen, and Danqi Chen.
\newblock What in-context learning" learns" in-context: Disentangling task
  recognition and task learning.
\newblock In \emph{ACL (Findings)}, 2023.

\bibitem[Qin et~al.(2023)Qin, Zhang, Chen, Dagar, and Ye]{qin2023context}
Chengwei Qin, Aston Zhang, Chen Chen, Anirudh Dagar, and Wenming Ye.
\newblock In-context learning with iterative demonstration selection.
\newblock \emph{arXiv preprint arXiv:2310.09881}, 2023.

\bibitem[Ravent{\'o}s et~al.(2023)Ravent{\'o}s, Paul, Chen, and
  Ganguli]{raventos2023pretraining}
Allan Ravent{\'o}s, Mansheej Paul, Feng Chen, and Surya Ganguli.
\newblock Pretraining task diversity and the emergence of non-bayesian
  in-context learning for regression.
\newblock \emph{Advances in neural information processing systems},
  36:\penalty0 14228--14246, 2023.

\bibitem[Rein et~al.(2024)Rein, Hou, Stickland, Petty, Pang, Dirani, Michael,
  and Bowman]{rein2024gpqa}
David Rein, Betty~Li Hou, Asa~Cooper Stickland, Jackson Petty, Richard~Yuanzhe
  Pang, Julien Dirani, Julian Michael, and Samuel~R Bowman.
\newblock Gpqa: A graduate-level google-proof q\&a benchmark.
\newblock In \emph{First Conference on Language Modeling}, 2024.

\bibitem[Rubin et~al.(2021)Rubin, Herzig, and Berant]{rubin2021learning}
Ohad Rubin, Jonathan Herzig, and Jonathan Berant.
\newblock Learning to retrieve prompts for in-context learning.
\newblock \emph{arXiv preprint arXiv:2112.08633}, 2021.

\bibitem[Schick et~al.(2023)Schick, Dwivedi-Yu, Dess{\`\i}, Raileanu, Lomeli,
  Hambro, Zettlemoyer, Cancedda, and Scialom]{schick2023toolformer}
Timo Schick, Jane Dwivedi-Yu, Roberto Dess{\`\i}, Roberta Raileanu, Maria
  Lomeli, Eric Hambro, Luke Zettlemoyer, Nicola Cancedda, and Thomas Scialom.
\newblock Toolformer: Language models can teach themselves to use tools.
\newblock \emph{Advances in Neural Information Processing Systems},
  36:\penalty0 68539--68551, 2023.

\bibitem[Shen et~al.(2021)Shen, Zhang, Zhao, Yi, and Li]{shen2021efficient}
Zhuoran Shen, Mingyuan Zhang, Haiyu Zhao, Shuai Yi, and Hongsheng Li.
\newblock Efficient attention: Attention with linear complexities.
\newblock In \emph{Proceedings of the IEEE/CVF winter conference on
  applications of computer vision}, pp.\  3531--3539, 2021.

\bibitem[Shi et~al.(2023)Shi, Min, Lomeli, Zhou, Li, Szilvasy, James, Lin,
  Smith, Zettlemoyer, et~al.]{shi2023context}
Weijia Shi, Sewon Min, Maria Lomeli, Chunting Zhou, Margaret Li, Gergely
  Szilvasy, Rich James, Xi~Victoria Lin, Noah~A Smith, Luke Zettlemoyer, et~al.
\newblock In-context pretraining: Language modeling beyond document boundaries.
\newblock \emph{arXiv preprint arXiv:2310.10638}, 2023.

\bibitem[Shin et~al.(2022)Shin, Lee, Ahn, Kim, Kim, Kim, Cho, Lee, Park, Ha,
  et~al.]{shin2022effect}
Seongjin Shin, Sang-Woo Lee, Hwijeen Ahn, Sungdong Kim, HyoungSeok Kim, Boseop
  Kim, Kyunghyun Cho, Gichang Lee, Woomyoung Park, Jung-Woo Ha, et~al.
\newblock On the effect of pretraining corpora on in-context learning by a
  large-scale language model.
\newblock \emph{arXiv preprint arXiv:2204.13509}, 2022.

\bibitem[Su et~al.(2022)Su, Kasai, Wu, Shi, Wang, Xin, Zhang, Ostendorf,
  Zettlemoyer, Smith, et~al.]{su2022selective}
Hongjin Su, Jungo Kasai, Chen~Henry Wu, Weijia Shi, Tianlu Wang, Jiayi Xin, Rui
  Zhang, Mari Ostendorf, Luke Zettlemoyer, Noah~A Smith, et~al.
\newblock Selective annotation makes language models better few-shot learners.
\newblock \emph{arXiv preprint arXiv:2209.01975}, 2022.

\bibitem[Sun et~al.(2022)Sun, Shao, Qian, Huang, and Qiu]{sun2022black}
Tianxiang Sun, Yunfan Shao, Hong Qian, Xuanjing Huang, and Xipeng Qiu.
\newblock Black-box tuning for language-model-as-a-service.
\newblock In \emph{International Conference on Machine Learning}, pp.\
  20841--20855. PMLR, 2022.

\bibitem[Von~Oswald et~al.(2023)Von~Oswald, Niklasson, Randazzo, Sacramento,
  Mordvintsev, Zhmoginov, and Vladymyrov]{von2023transformers}
Johannes Von~Oswald, Eyvind Niklasson, Ettore Randazzo, Jo{\~a}o Sacramento,
  Alexander Mordvintsev, Andrey Zhmoginov, and Max Vladymyrov.
\newblock Transformers learn in-context by gradient descent.
\newblock In \emph{International Conference on Machine Learning}, pp.\
  35151--35174. PMLR, 2023.

\bibitem[Wang et~al.(2023{\natexlab{a}})Wang, Li, Dai, Chen, Zhou, Meng, Zhou,
  and Sun]{wang2023label}
Lean Wang, Lei Li, Damai Dai, Deli Chen, Hao Zhou, Fandong Meng, Jie Zhou, and
  Xu~Sun.
\newblock Label words are anchors: An information flow perspective for
  understanding in-context learning.
\newblock In \emph{Proceedings of the 2023 Conference on Empirical Methods in
  Natural Language Processing}, pp.\  9840--9855, 2023{\natexlab{a}}.

\bibitem[Wang et~al.(2020)Wang, Li, Khabsa, Fang, and Ma]{wang2020linformer}
Sinong Wang, Belinda~Z Li, Madian Khabsa, Han Fang, and Hao Ma.
\newblock Linformer: Self-attention with linear complexity.
\newblock \emph{arXiv preprint arXiv:2006.04768}, 2020.

\bibitem[Wang et~al.(2023{\natexlab{b}})Wang, Zhu, and Wang]{wang2023large}
Xinyi Wang, Wanrong Zhu, and William~Yang Wang.
\newblock Large language models are implicitly topic models: Explaining and
  finding good demonstrations for in-context learning.
\newblock \emph{arXiv preprint arXiv:2301.11916}, 1:\penalty0 15,
  2023{\natexlab{b}}.

\bibitem[Wang et~al.(2024)Wang, Ma, Zhang, Ni, Chandra, Guo, Ren, Arulraj, He,
  Jiang, et~al.]{wang2024mmlu}
Yubo Wang, Xueguang Ma, Ge~Zhang, Yuansheng Ni, Abhranil Chandra, Shiguang Guo,
  Weiming Ren, Aaran Arulraj, Xuan He, Ziyan Jiang, et~al.
\newblock Mmlu-pro: A more robust and challenging multi-task language
  understanding benchmark.
\newblock \emph{Advances in Neural Information Processing Systems},
  37:\penalty0 95266--95290, 2024.

\bibitem[Wei et~al.(2022)Wei, Wang, Schuurmans, Bosma, Xia, Chi, Le, Zhou,
  et~al.]{wei2022chain}
Jason Wei, Xuezhi Wang, Dale Schuurmans, Maarten Bosma, Fei Xia, Ed~Chi, Quoc~V
  Le, Denny Zhou, et~al.
\newblock Chain-of-thought prompting elicits reasoning in large language
  models.
\newblock \emph{Advances in neural information processing systems},
  35:\penalty0 24824--24837, 2022.

\bibitem[Wu et~al.(2023)Wu, Wang, Ye, and Kong]{wu2023self}
Zhiyong Wu, Yaoxiang Wang, Jiacheng Ye, and Lingpeng Kong.
\newblock Self-adaptive in-context learning: An information compression
  perspective for in-context example selection and ordering.
\newblock In \emph{Proceedings of the 61st Annual Meeting of the Association
  for Computational Linguistics (Volume 1: Long Papers)}, pp.\  1423--1436,
  2023.

\bibitem[Xie et~al.(2022)Xie, Raghunathan, Liang, and Ma]{xie2021explanation}
Sang~Michael Xie, Aditi Raghunathan, Percy Liang, and Tengyu Ma.
\newblock An explanation of in-context learning as implicit bayesian inference.
\newblock In \emph{International Conference on Learning Representations}, 2022.

\bibitem[Yadlowsky et~al.(2023)Yadlowsky, Doshi, and
  Tripuraneni]{yadlowsky2023pretraining}
Steve Yadlowsky, Lyric Doshi, and Nilesh Tripuraneni.
\newblock Pretraining data mixtures enable narrow model selection capabilities
  in transformer models.
\newblock \emph{arXiv preprint arXiv:2311.00871}, 2023.

\bibitem[Yang et~al.(2025)Yang, Li, Yang, Zhang, Hui, Zheng, Yu, Gao, Huang,
  Lv, et~al.]{yang2025qwen3}
An~Yang, Anfeng Li, Baosong Yang, Beichen Zhang, Binyuan Hui, Bo~Zheng, Bowen
  Yu, Chang Gao, Chengen Huang, Chenxu Lv, et~al.
\newblock Qwen3 technical report.
\newblock \emph{arXiv preprint arXiv:2505.09388}, 2025.

\bibitem[Yang et~al.(2024)Yang, Huang, Liang, and Chi]{yang2024context}
Tong Yang, Yu~Huang, Yingbin Liang, and Yuejie Chi.
\newblock In-context learning with representations: Contextual generalization
  of trained transformers.
\newblock \emph{Advances in Neural Information Processing Systems},
  37:\penalty0 85867--85898, 2024.

\bibitem[Zhang et~al.(2022)Zhang, Feng, and Tan]{zhang2022active}
Yiming Zhang, Shi Feng, and Chenhao Tan.
\newblock Active example selection for in-context learning.
\newblock \emph{arXiv preprint arXiv:2211.04486}, 2022.

\bibitem[Zhou et~al.(2023)Zhou, Bradley, Littwin, Razin, Saremi, Susskind,
  Bengio, and Nakkiran]{zhou2023algorithms}
Hattie Zhou, Arwen Bradley, Etai Littwin, Noam Razin, Omid Saremi, Joshua~M
  Susskind, Samy Bengio, and Preetum Nakkiran.
\newblock What algorithms can transformers learn? a study in length
  generalization.
\newblock In \emph{The Twelfth International Conference on Learning
  Representations}, 2023.

\end{thebibliography}
\bibliographystyle{iclr2026_conference}

\appendix
\section{Proofs of Lemmas and Theorems}

\subsection{Proof of Lemma \ref{lem:lambda_1_ub}}\label{append:lemma1_proof}

In this section, we aim to prove Lemma \ref{lem:lambda_1_ub}. We separate the proof into three steps. First, we analyze the spectrum of $\textbf{A}(x_i)$ for a fixed $i$. After that, we derive the upper bound of the largest eigenvalue of $\textbf{A}(x_i)$. Finally, we extend the result to $\textbf{A}(x^n)$.

\begin{lemma}\label{lem:spectrum}
Let 
\[
\textbf{A}(x_i) = \mathsf{diag}\{P_{Y|X}(1|x_i),\dots,P_{Y|X}(M|x_i)\}-\phi_i\phi_i^\mathrm{T},
\]
where $\phi_i = [P_{Y|X}(1|x_i),\dots,P_{Y|X}(M|x_i)]^\mathrm{T}$. 
Then the eigenvalues of $\textbf{A}(x_i)$ are either some $P_{Y|X}(j|x_i)$ or solutions to
\begin{equation}\label{equ:eig-root}
1+\sum_{j=1}^M \frac{P_{Y|X}^2(j|x_i)}{\lambda - P_{Y|X}(j|x_i)}=0,
\end{equation}
with all such solutions satisfying $\lambda \leq \max_j P_{Y|X}(j|x_i)$.
\end{lemma}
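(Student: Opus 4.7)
The plan is to recognize $\textbf{A}(x_i) = D - \phi_i\phi_i^{\text{T}}$ as a rank-one perturbation of a diagonal matrix and characterize its spectrum via the secular equation obtained from the matrix determinant lemma. Writing $p_j = P_{Y|X}(j|x_i)$ for brevity, we have $D = \textsf{diag}(p_1,\ldots,p_M)$ and $\phi_i = (p_1,\ldots,p_M)^{\text{T}}$, so the characteristic polynomial is $\chi(\lambda) = \det(\lambda I - D + \phi_i\phi_i^{\text{T}})$.

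First I would handle the generic case $\lambda \notin \{p_1,\ldots,p_M\}$, where $\lambda I - D$ is invertible. Applying the matrix determinant lemma to the rank-one update gives
\begin{equation*}
\chi(\lambda) = \det(\lambda I - D)\bigl(1 + \phi_i^{\text{T}}(\lambda I - D)^{-1}\phi_i\bigr) = \prod_{j=1}^M(\lambda - p_j)\cdot\Bigl(1 + \sum_{j=1}^M \frac{p_j^2}{\lambda - p_j}\Bigr),
\end{equation*}
so every root of $\chi(\lambda)$ lying outside $\{p_j\}$ coincides with a solution of the secular equation (\ref{equ:eig-root}).

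Second, I would separately examine when $\lambda = p_j$ is itself an eigenvalue. If $\textbf{A}(x_i)v = p_j v$, then $(D - p_j I) v = (\phi_i^{\text{T}} v)\phi_i$, and a short case analysis shows this admits a nontrivial solution only when either (i) some $p_k = p_j$ with $k \neq j$ (the value $p_j$ is repeated on the diagonal), or (ii) $p_j = 0$ (so the corresponding coordinate of $\phi_i$ vanishes). In both situations $p_j$ appears as an eigenvalue via an eigenvector supported on coordinates where $D$ equals $p_j$ and orthogonal to the restriction of $\phi_i$ there. These exhaust the ``diagonal'' eigenvalues in the statement.

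Finally, to bound the roots of the secular equation, I would define $f(\lambda) = 1 + \sum_{j=1}^M p_j^2/(\lambda - p_j)$ and observe that whenever $\lambda > \max_j p_j$ each summand is nonnegative, so $f(\lambda) \geq 1 > 0$. Hence no root of (\ref{equ:eig-root}) can exceed $\max_j p_j$; combined with the trivial bound $p_j \leq \max_k p_k$ for the diagonal eigenvalues described above, this yields the claim. The only delicate point is the degenerate case $\lambda = p_j$, where the determinant lemma does not apply directly and one must revert to a coordinate-wise argument on the invariant subspace determined by the multiplicity pattern of the $p_j$'s; this is routine linear algebra rather than a genuine obstacle.
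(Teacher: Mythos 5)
Your proposal is correct and follows essentially the same route as the paper: factor the characteristic polynomial via the matrix determinant lemma into $\prod_j(\lambda-p_j)$ times the secular function, and rule out roots above $\max_j p_j$ by the sign of that function. In fact you supply two details the paper leaves implicit — the positivity argument $f(\lambda)\ge 1$ for $\lambda>\max_j p_j$, and the case analysis of when $\lambda=p_j$ is itself an eigenvalue — both of which check out.
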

\begin{proof}
The eigenvalues of $\textbf{A}(x_i)$ are the roots of the equation 
\begin{equation}\label{equ:spectrum_equ}
\det(\lambda I- \mathsf{diag}\left\{ P_{Y|X=x_i} \right\} + \phi_i\phi_i^\mathrm{T})=0, 
\end{equation}
where $\mathsf{diag}\left\{ P_{Y|X=x_i} \right\}=\mathsf{diag}\{P_{Y|X}(1|x_i),\dots,P_{Y|X}(M|x_i)\}$. 
For $\lambda\ne P_{Y|X}(j|x_i),\forall j$, the determinant can be factorized as
\begin{align*}
&\det(\lambda I- {\mathsf{diag}}\left\{ P_{Y|X=x_i} \right\} + \phi_i\phi_i^\mathrm{T})
\\
&= \det(\lambda I- {\mathsf{diag}}\left\{ P_{Y|X=x_i} \right\} )(1+ \phi_i^\mathrm{T}(\lambda I- {\rm diag}\left\{ P_{Y|X=x_i} \right\})^{-1}\phi_i)\\
&= \prod_{j=1}^M(\lambda-P_{Y|X}(j|x_i))\left(1+\sum_{j=1}^M\frac{P_{Y|X}^2(j|x_i)}{\lambda-P_{Y|X}(j|x_i)} \right)\end{align*}

Thus, eigenvalues are either some $P_{Y|X}(j|x_i)$ or solutions to Eq. (\ref{equ:eig-root}).
Furthermore, we have that all the solutions $\lambda$ to the Eq. (\ref{equ:eig-root}) must satisfy that $\lambda < \max_j P_{Y|X}(j|x_i)$, which makes all roots of Eq. (\ref{equ:spectrum_equ}) no larger than $\max_j P_{Y|X}(j|x_i)$.
\end{proof}

\begin{lemma}\label{lem:upper-bound}
For each $i$, the largest eigenvalue of $\textbf{A}(x_i)$ satisfies
\[
\lambda_1(\textbf{A}(x_i)) \leq 2 \max_j P_{Y|X}(j|x_i)\big(1-\max_j P_{Y|X}(j|x_i)\big).
\]
\end{lemma}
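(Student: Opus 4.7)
The plan is to bypass the secular equation of Lemma~\ref{lem:spectrum} and work directly on the entries of $\textbf{A}(x_i)$ via Gershgorin's circle theorem. Writing $p_j = P_{Y|X}(j|x_i)$, one computes $A_{jj} = p_j - p_j^2 = p_j(1-p_j)$ and $A_{jk} = -p_j p_k$ for $k \neq j$, so the off-diagonal absolute row sum in row $j$ equals $\sum_{k \neq j} p_j p_k = p_j(1-p_j)$, using $\sum_k p_k = 1$. Since $\textbf{A}(x_i)$ is real symmetric, Gershgorin places every eigenvalue inside some disk $[A_{jj} - R_j,\, A_{jj} + R_j] = [0,\, 2p_j(1-p_j)]$, which gives $\lambda_1(\textbf{A}(x_i)) \le 2\max_j p_j(1-p_j)$ immediately.

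It then remains to identify $\max_j p_j(1-p_j)$ with $p^*(1-p^*)$, where $p^* = \max_j p_j$. This I would dispatch by the one-line algebraic identity
\[
p^*(1-p^*) - p_j(1-p_j) = (p^* - p_j)(1 - p^* - p_j),
\]
whose first factor is non-negative by definition of $p^*$ and whose second factor is non-negative because $p^*$ and $p_j$ are two (possibly equal) entries of a non-negative probability vector summing to $1$, forcing $p^* + p_j \le 1$. Hence $p_j(1-p_j) \le p^*(1-p^*)$ for every $j$, and Lemma~\ref{lem:upper-bound} follows.

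The main obstacle is recognizing that Lemma~\ref{lem:spectrum} alone is too coarse: it only implies $\lambda_1 \le p^*$, which is strictly weaker than $2p^*(1-p^*)$ precisely in the regime $p^* > \tfrac12$ --- the strongly-correlated regime singled out after Theorem~\ref{thm:compelete_sufficient} --- so some sharpening is unavoidable. Gershgorin exploits the simultaneous structure $A_{jj} = p_j(1-p_j)$ and $\sum_{k \neq j}|A_{jk}| = p_j(1-p_j)$ to produce the correct constant in a single line; a variational argument treating $v^{\text{T}}\textbf{A}(x_i) v$ as the variance of a categorical random variable would instead require a careful constant shift to extract the factor $1-p^*$, which is why the Gershgorin route seems cleaner.
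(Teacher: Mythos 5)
Your Gershgorin argument is correct and is genuinely different from the paper's proof. The paper proceeds through the secular equation $1+\sum_j p_j^2/(\lambda-p_j)=0$ (Lemma~\ref{lem:spectrum}), a two-case split on whether $\arg\max_j p_j$ is a singleton, interlacing to locate the largest root $\lambda_0$, and finally a reduction to a two-point (Bernoulli-type) rational inequality whose solution yields $\lambda_0\le 2p^*(1-p^*)$. You instead read the bound off the entries: $A_{jj}=p_j(1-p_j)$ and $\sum_{k\ne j}|A_{jk}|=p_j(1-p_j)$, so every Gershgorin interval is exactly $[0,\,2p_j(1-p_j)]$, and the identity $p^*(1-p^*)-p_j(1-p_j)=(p^*-p_j)(1-p^*-p_j)$ collapses $\max_j p_j(1-p_j)$ to $p^*(1-p^*)$. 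This is shorter and more elementary, and as a free by-product the left endpoints of your intervals give $\lambda_{\min}\ge 0$, i.e.\ the positive semidefiniteness claim of Lemma~\ref{lem:lambda_1_ub} that the paper proves separately via Jensen's inequality. What the rational-equation route buys in exchange is the equality analysis (the bound is attained exactly for a Bernoulli conditional), which Gershgorin does not certify, though the lemma as stated does not require it. One small wording fix: your justification that the second factor $1-p^*-p_j$ is non-negative "because $p^*$ and $p_j$ are two (possibly equal) entries summing to at most $1$" fails precisely when $j$ is the argmax, since then $p^*+p_j=2p^*$ may exceed $1$; in that case the first factor $(p^*-p_j)$ vanishes, so the product is still non-negative, but you should state the two cases ($j$ the argmax versus $j$ a distinct index) rather than lump them together.
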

\begin{proof}
To prove this result, we distinguish between two cases: whether the maximum probability is attained by more than one label, or by a unique label.

\paragraph{Case 1: The size of set $\arg\max_j P_{Y|X}(j|x_i)$ is larger than $1$}
By continuity, we have 
\begin{equation*}
\det(P_{Y|X}(j|x_i)\cdot I- {\rm diag}\left\{ P_{Y|X=x_i} \right\} + \phi_i\phi_i^\mathrm{T})= P_{Y|X}^2(j|x_i)\prod_{k\ne j}(P_{Y|X}(j|x_i)-P_{Y|X}(k|x_i)).
\end{equation*}
Since $\arg\max_y P_{Y|X}(y|x_i)$ is not a single point set, $\max_y P_{Y|X}(y|x_i)$ is one solution to Eq. (\ref{equ:spectrum_equ}). Then, the largest eigenvalue of $\mathbf{A}(x_i)$ is $\max_y P_{Y|X}(y|x_i)$. In this case, since the size of set  $\arg\max_j P_{Y|X}(j|x_i)$ is larger than $1$, we have that $\max_j P_{Y|X}(j|x_i) \leq 1/2$, which gives that $\max_j P_{Y|X}(j|x_i) \leq 2\max_j P_{Y|X}(j|x_i)(1-\max_j P_{Y|X}(j|x_i))$.

\paragraph{Case 2: The size of set  $\arg\max_j P_{Y|X}(j|x_i)$ equals to $1$}
Let $j_0=\arg\max_j P_{Y|X}(j|x_i)$ and denote the largest root of the Eq. (\ref{equ:eig-root}) by $\lambda_0$.
Further, let arbitrary $j_1\in\argmax_{j,j\ne j_0} P_{Y|X}(j|x_i)$. Note that
\begin{equation*}
    1+\sum_{j=1}^M\frac{P_{Y|X}^2(j|x_i)}{\lambda-P_{Y|X}(j|x_i)}>1,\forall\lambda>P_{Y|X}(j_0|x_i),
\end{equation*}
\begin{equation*}
    \lim_{\lambda\to P_{Y|X}(j_0|x_i)^-}1+\sum_{j=1}^M\frac{P_{Y|X}^2(j|x_i)}{\lambda-P_{Y|X}(j|x_i)}=-\infty,
\end{equation*}
and
\begin{equation*}
    \lim_{\lambda\to P_{Y|X}(j_1|x_i)^+}1+\sum_{j=1}^M\frac{P_{Y|X}^2(j|x_i)}{\lambda-P_{Y|X}(j|x_i)}=\infty.
\end{equation*}
We have $P_{Y|X}(j_1|x_i) < \lambda_0 <P_{Y|X}(j_0|x_i)$. Following the analysis in Case 1, the value $P_{Y|X}(j_0|x_i)$ will not be the root of Eq. (\ref{equ:spectrum_equ}). Hence, we have the largest eigenvalue of $\mathbf{A}(x_i)$ must be $\lambda_0$.

In this case, we have the
\begin{equation*}
    \frac{P_{Y|X}^2(j_0|x_i)}{\lambda_0-P_{Y|X}(j_0|x_i)}+\frac{P_{Y|X}^2(j_1|x_i)}{\lambda_0-P_{Y|X}(j_1|x_i)}\le\overbrace{\sum_{j=1}^M\frac{P_{Y|X}^2(j|x_i)}{\lambda_0-P_{Y|X}(j|x_i)}}^{=-1}
\end{equation*}
and
\begin{align*}
    \overbrace{\sum_{j=1}^M\frac{P_{Y|X}^2(j|x_i)}{\lambda_0-P_{Y|X}(j|x_i)}}^{=-1} & \le\frac{P_{Y|X}^2(j_0|x_i)}{\lambda_0-P_{Y|X}(j_0|x_i)}+\frac{\sum_{k\ne j}P_{Y|X}^2(k|x_i)}{\lambda_0-P_{Y|X}(j_1|x_i)}\\
    & \le\frac{P_{Y|X}^2(j_0|x_i)}{\lambda_0-P_{Y|X}(j_0|x_i)}+\frac{(1-P_{Y|X}(j_0|x_i))^2}{\lambda_0-(1-P_{Y|X}(j_0|x_i))}
\end{align*}

Hence, solving 
$$
\frac{P_{Y|X}^2(j_0|x_i)}{\lambda_0-P_{Y|X}(j_0|x_i)}+\frac{(1-P_{Y|X}(j_0|x_i))^2}{\lambda_0-(1-P_{Y|X}(j_0|x_i))}\ge-1
$$
gives $\lambda_0\le2P_{Y|X}(j_0|x_i)(1-P_{Y|X}(j_0|x_i))$, where the equality is achieved by Bernoulli distribution.

In conclusion of the two cases, both cases give that
\begin{equation*}
    \lambda_1(\textbf{A}(x_i)) \leq 2\max_y P_{Y|X}(y|x_i)\left(1-\max_y P_{Y|X}(y|x_i)\right).
\end{equation*}
This completes the proof.
\end{proof}

\begin{lemma}\label{lem:block}
The largest eigenvalue of the block matrix $\textbf{A}(x^n)$ satisfies
\[
\lambda_1(\textbf{A}(x^n)) = \max_i \lambda_1(\textbf{A}(x_i)).
\]
\end{lemma}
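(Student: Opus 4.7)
The statement is essentially the standard spectral fact for block diagonal matrices, so the plan is to exploit this structure directly rather than attempt any further estimation. Recall from the definition in equation (\ref{equ:def_Q}) that
\[
\textbf{A}(x^n) = \mathsf{diag}\{\textbf{A}(x_1),\ldots,\textbf{A}(x_n)\},
\]
so the block structure is explicit. The approach is to establish a two-way inclusion between the spectra.

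First I would show the inclusion $\bigcup_i \sigma(\textbf{A}(x_i)) \subseteq \sigma(\textbf{A}(x^n))$ by explicit lifting: if $v_i$ is an eigenvector of $\textbf{A}(x_i)$ with eigenvalue $\lambda$, then the vector formed by padding $v_i$ with zeros in every other block position is an eigenvector of $\textbf{A}(x^n)$ with the same eigenvalue $\lambda$. This gives $\max_i \lambda_1(\textbf{A}(x_i)) \le \lambda_1(\textbf{A}(x^n))$ immediately, since the maximizer lifts to a unit eigenvector of the block diagonal matrix.

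For the reverse inclusion, I would use the characteristic polynomial factorization for block diagonal matrices,
\[
\det\bigl(\lambda I - \textbf{A}(x^n)\bigr) = \prod_{i=1}^n \det\bigl(\lambda I - \textbf{A}(x_i)\bigr),
\]
which implies that every eigenvalue of $\textbf{A}(x^n)$ is an eigenvalue of some $\textbf{A}(x_i)$. Taking the maximum over the (finite) union of spectra then yields $\lambda_1(\textbf{A}(x^n)) \le \max_i \lambda_1(\textbf{A}(x_i))$, and combining the two inequalities gives the desired equality.

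There is no genuine obstacle in this argument. The only point requiring mild care is to ensure that the eigenvectors built from each block are well defined (e.g.\ using the symmetry of each $\textbf{A}(x_i)$, already established in Lemma~\ref{lem:upper-bound}, so that eigenvectors exist and are real), but this is routine. Once the equality $\lambda_1(\textbf{A}(x^n)) = \max_i \lambda_1(\textbf{A}(x_i))$ is in hand, combining with Lemma~\ref{lem:upper-bound} immediately yields the full statement of Lemma~\ref{lem:lambda_1_ub}.
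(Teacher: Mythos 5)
Your proof is correct, and it takes a mildly different route from the paper's. The paper proves the identity via the Rayleigh--Ritz variational characterization: it decomposes a unit vector $\bm v$ into blocks $\bm v_i$ conforming to the block structure, writes $\bm v^{\mathrm T}\textbf{A}(x^n)\bm v=\sum_i \bm v_i^{\mathrm T}\textbf{A}(x_i)\bm v_i\le\max_i\lambda_1(\textbf{A}(x_i))$ for the upper bound, and plants a top eigenvector of the maximizing block (padded with zeros) for the lower bound. You instead argue at the level of spectra: eigenvector lifting gives $\sigma(\textbf{A}(x_i))\subseteq\sigma(\textbf{A}(x^n))$, and the factorization $\det(\lambda I-\textbf{A}(x^n))=\prod_i\det(\lambda I-\textbf{A}(x_i))$ gives the reverse inclusion, so the spectrum of the block matrix is exactly the union of the block spectra. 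Your eigenvector-lifting step is essentially the same construction the paper uses for its lower bound; the difference is only in the upper bound, where the determinant factorization replaces the Rayleigh-quotient estimate. The characteristic-polynomial route is slightly more general (it needs no symmetry to identify the spectrum as a union, only to make ``largest eigenvalue'' meaningful), while the Rayleigh-quotient route is the one that generalizes to block matrices that are merely block \emph{triangular} in a quadratic-form sense and is more in keeping with the eigenvalue estimates used elsewhere in the appendix. One trivial correction: the symmetry of $\textbf{A}(x_i)$ is immediate from its definition as a diagonal matrix minus the rank-one term $\phi_i\phi_i^{\mathrm T}$; it is not something established in Lemma~\ref{lem:upper-bound}, so the citation there is misplaced though harmless.
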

\begin{proof}
By the Rayleigh–Ritz characterization,
\begin{equation*}
    \lambda_1(\textbf{A}(x^n)) = \max_{\|\bm{v}\|=1} \bm{v}^\text{T} \textbf{A}(x^n)\bm{v}.
\end{equation*}
Write $\bm{v}^\text{T} = [\bm{v}_1^\text{T},\dots, \bm{v}_n^\text{T}]$, with $\bm{v}_i$ that matches the size of $\textbf{A}(x_i)$. Then
\begin{equation*}
    \bm{v}^\text{T}\textbf{A}(x^n)\bm{v} = \sum_{i=1}^n \bm{v}_i^\text{T} \textbf{A}(x_i)\bm{v}_i \leq \sum_{i=1}^n\lambda_1(\textbf{A}(x_i)) \| \bm{v}_i\|^2 \leq \max_i \lambda_1(\textbf{A}(x_i)).
\end{equation*}
Let $j\in\argmax_i \lambda_1(\textbf{A}(x_i))$ and let $\bm{u}_j$ be a unit eigenvector of $\textbf{A}(x_j)$ for $\lambda_1(\textbf{A}(x_j))$. Define $\bm{v}$ by $\bm{v}_j=\bm{u}_j$ and $\bm{v}_i=\bm{0}$ for $i\neq j$. Then $\|\bm{v}\|=1$ and
\begin{equation*}
    \bm{v}^\text{T}\textbf{A}(x^n)\bm{v} = \bm{u}_j^\text{T}\textbf{A}(x_j)\bm{u}_j =\lambda_1(\textbf{A}(x_j))= \max_i \lambda_1(\textbf{A}(x_i)).
\end{equation*}
Therefore, $\lambda_1(\textbf{A}(x^n)) \geq \max_i \lambda_1(\textbf{A}(x_i)).$ Combining both inequalities yields
\begin{equation*}
    \lambda_1(\textbf{A}(x^n)) = \max_i \lambda_1(\textbf{A}(x_i)).
\end{equation*}
This completes the proof.
\end{proof}

By Lemmas \ref{lem:spectrum}, \ref{lem:upper-bound}, and \ref{lem:block}, we conclude the proof of Lemma \ref{lem:lambda_1_ub}.

\subsection{Proof of Theorem \ref{thm:compelete_sufficient}}\label{append:theorem1_proof}

To prove the Theorem \ref{thm:compelete_sufficient}, we first develop the lemma for expectation value of estimated concept $\underline{\hat{\alpha}}$.
\begin{lemma}\label{lem:alpha_mean}
For sufficient prompt demonstrations, i.e., $\mathbf{F}_n(x^n)$ is invertible, it holds that for all $x^n$,
\begin{equation*}
    \mathbb{E}_{P_{Y^n|X^n}}[\underline{\hat{\alpha}}(x^n,Y^n) | X^n=x^n] = \underline{\alpha}.
\end{equation*}
\end{lemma}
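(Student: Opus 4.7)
The plan is to exploit linearity of expectation together with the factorization of the conditional joint distribution $P_{Y^n\mid X^n}$, so that the expectation of $\underline{\hat{\alpha}}(x^n,Y^n)$ reduces to a sum of per-demonstration conditional expectations of $f(x_i,Y_i)$. Since this lemma is used to prove Theorem \ref{thm:compelete_sufficient}, I will work under the complete model assumption $R(x,y)=0$, so that the posited model \eqref{equ:model} reduces to $P_{Y|X}(y|x)=\underline{\alpha}^{\mathrm{T}}f(x,y)$.

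First, because $\mathbf{F}_n(x^n)$ depends only on $x^n$ and is assumed invertible, I would pull the deterministic factor $\mathbf{F}_n^{-1}(x^n)$ out of the conditional expectation and obtain
\begin{equation*}
\mathbb{E}_{P_{Y^n|X^n}}\!\left[\underline{\hat{\alpha}}(x^n,Y^n)\,\big|\,X^n=x^n\right]
= \mathbf{F}_n^{-1}(x^n)\,\mathbb{E}_{P_{Y^n|X^n}}\!\left[\bar{f}_n(x^n,Y^n)\,\big|\,X^n=x^n\right].
\end{equation*}
Next, using the conditional independence of the $Y_i$'s given $x^n$ that is built into the prompt assumption, I would expand the remaining expectation term by term:
\begin{equation*}
\mathbb{E}\!\left[\bar{f}_n(x^n,Y^n)\,\big|\,X^n=x^n\right]
= \frac{1}{n}\sum_{i=1}^n \sum_{y} P_{Y|X}(y|x_i)\, f(x_i,y).
\end{equation*}

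Then I would substitute the complete model $P_{Y|X}(y|x_i)=\underline{\alpha}^{\mathrm{T}}f(x_i,y)$ and regroup the sum as a quadratic form in $f(x_i,y)$ acting on $\underline{\alpha}$:
\begin{equation*}
\frac{1}{n}\sum_{i=1}^n \sum_{y} \bigl(\underline{\alpha}^{\mathrm{T}}f(x_i,y)\bigr) f(x_i,y)
= \left(\frac{1}{n}\sum_{i=1}^n \sum_{y} f(x_i,y) f^{\mathrm{T}}(x_i,y)\right)\underline{\alpha}
= \mathbf{F}_n(x^n)\,\underline{\alpha}.
\end{equation*}
Composing with $\mathbf{F}_n^{-1}(x^n)$ yields $\underline{\alpha}$, establishing the claim.

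There is no real obstacle in the complete case: the result is essentially the unbiasedness of an ordinary least-squares estimator for a linear model with no model mismatch. The only subtlety worth flagging is that invertibility of $\mathbf{F}_n(x^n)$ is crucial both to write $\mathbf{F}_n^{\dagger}=\mathbf{F}_n^{-1}$ and to cancel $\mathbf{F}_n^{-1}(x^n)\mathbf{F}_n(x^n)=\mathbf{I}_K$ cleanly; in the insufficient regime this cancellation fails and only a projection of $\underline{\alpha}$ onto the range of $\mathbf{F}_n(x^n)$ would be recovered, which is exactly why the penalty term $\mathbf{F}_n^{\perp}(x^n)\underline{\alpha}$ appears in Theorem \ref{thm:complete_insufficient}.
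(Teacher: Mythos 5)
Your proof is correct and follows essentially the same route as the paper's: pull the deterministic $\mathbf{F}_n^{-1}(x^n)$ out of the conditional expectation, use the conditional independence of the labels to reduce to $\frac{1}{n}\sum_i\sum_y P_{Y|X}(y|x_i)f(x_i,y)$, substitute the complete-model identity $P_{Y|X}(y|x)=\underline{\alpha}^{\mathrm{T}}f(x,y)$, and cancel $\mathbf{F}_n^{-1}(x^n)\mathbf{F}_n(x^n)$. You are also right to flag that completeness ($R\equiv 0$) is an implicit hypothesis here, which the paper uses but does not state in the lemma itself.
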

\begin{proof}
From definition of $\underline{\hat{\alpha}}$ in Eq. (\ref{equ:def_alpha}), we have that
\begin{align*}
\begin{aligned}
& \mathbb{E}_{P_{Y^n|X^n}}\left[ \underline{\hat{\alpha}}(x^n, Y^n) | X^n = x^n \right]\\ 
& = \mathbb{E}_{P_{Y^n|X^n}}\left[ \mathbf{F}_n^{-1}(x^n) \bar{f}_n(x^n,Y^n)  | X^n=x^n\right]\\
& = \sum_{y^n} \prod_{i=1}^n P_{Y\vert X}(y_i\vert x_i) \left(\frac{1}{n}\sum_{i=1}^n \sum_{y}  f(x_i,y)f^{\mathrm{T}}(x_i,y)\right)^{-1} \left(\frac{1}{n}\sum_{i=1}^nf(x_i,y_i)\right)\\
& = \left(\frac{1}{n}\sum_{i=1}^n \sum_{y}  f(x_i,y)f^{\mathrm{T}}(x_i,y)\right)^{-1}  \frac1n \sum_{i=1}^n \sum_y P_{Y\vert X}(y\vert x_i) f(x_i,y)\\
& = \left(\frac{1}{n}\sum_{i=1}^n \sum_{y}  f(x_i,y)f^{\mathrm{T}}(x_i,y)\right)^{-1} \frac1n\sum_{i=1}^n \sum_y f(x_i,y) f^\mathrm{T}(x_i,y) \underline{\alpha}\\
&=\underline{\alpha}.
\end{aligned}
\end{align*}
\end{proof}

With the Lemma \ref{lem:alpha_mean}, we directly have the written form of excessive risk for a complete model (i.e. $R(x,y)=0$ for all $x,y$).

\begin{corollary}\label{col:excessive_risk_complete}
For sufficient prompt demonstrations and complete model, the excessive risk can be written as
\begin{equation*}
    \mathbb{E}_{P_{Y^n|X^n}}\left[ 
\ell \left(x^n,Y^n;x_Q\right) | X^n = x^n\right] = \mathrm{tr}\left\{ \mathbf{F}_n^{-1}(x^n)\mathbf{F}(x_Q)\mathbf{F}_n^{-1}(x^n) \mathsf{F}(x^n) \right\}  - \underline{\alpha}^{\mathrm{T}} \mathbf{F}(x_Q)\underline{\alpha}, 
\end{equation*}
where $\mathsf{F}(x^n) = \mathbb{E}_{P_{Y^n|X^n}}\left[\bar{f}_n(x^n,Y^n) \bar{f}_n^\mathrm{T}(x^n,Y^n) | X^n=x^n\right]$.
\end{corollary}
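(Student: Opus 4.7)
The plan is to reduce the loss to a quadratic form in the concept error $\underline{\hat{\alpha}} - \underline{\alpha}$, take the conditional expectation over $Y^n$, and then apply Lemma~\ref{lem:alpha_mean} together with the trace trick.

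First I would rewrite the per-sample loss. Because the model is complete, $P_{Y|X}(y|x_Q) = \underline{\alpha}^{\mathrm{T}} f(x_Q,y)$, and by definition $\hat{P}_{Y|X}(y|x_Q) = \underline{\hat{\alpha}}^{\mathrm{T}}(x^n,y^n) f(x_Q,y)$. Stacking over $y=1,\ldots,M$ and using $\mathbf{F}(x_Q) = f(x_Q)f^{\mathrm{T}}(x_Q)$, I can write
\begin{equation*}
\ell(x^n,y^n;x_Q) = \bigl\| f^{\mathrm{T}}(x_Q)(\underline{\hat{\alpha}}(x^n,y^n) - \underline{\alpha}) \bigr\|^2 = (\underline{\hat{\alpha}}(x^n,y^n) - \underline{\alpha})^{\mathrm{T}} \mathbf{F}(x_Q)(\underline{\hat{\alpha}}(x^n,y^n) - \underline{\alpha}).
\end{equation*}

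Next I would take the conditional expectation over $Y^n$ given $X^n = x^n$ and expand the quadratic form. Writing it out gives three pieces, and by Lemma~\ref{lem:alpha_mean} we have $\mathbb{E}_{P_{Y^n|X^n}}[\underline{\hat{\alpha}}(x^n,Y^n)\mid X^n = x^n] = \underline{\alpha}$, so the two cross terms combine with the $\underline{\alpha}^{\mathrm{T}} \mathbf{F}(x_Q) \underline{\alpha}$ term to leave
\begin{equation*}
\mathbb{E}_{P_{Y^n|X^n}}\!\left[\ell(x^n,Y^n;x_Q)\mid X^n = x^n\right] = \mathbb{E}_{P_{Y^n|X^n}}\!\left[\underline{\hat{\alpha}}^{\mathrm{T}}(x^n,Y^n)\mathbf{F}(x_Q)\underline{\hat{\alpha}}(x^n,Y^n)\mid X^n=x^n\right] - \underline{\alpha}^{\mathrm{T}} \mathbf{F}(x_Q)\underline{\alpha}.
\end{equation*}

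Finally, I would substitute $\underline{\hat{\alpha}}(x^n,Y^n) = \mathbf{F}_n^{-1}(x^n)\bar{f}_n(x^n,Y^n)$ (using invertibility of $\mathbf{F}_n(x^n)$ in the sufficient regime), pull the deterministic matrix $\mathbf{F}_n^{-1}(x^n)\mathbf{F}(x_Q)\mathbf{F}_n^{-1}(x^n)$ outside, and apply the cyclic trace identity $\underline{v}^{\mathrm{T}} \mathbf{M} \underline{v} = \mathrm{tr}(\mathbf{M}\underline{v}\underline{v}^{\mathrm{T}})$ to obtain
\begin{equation*}
\mathbb{E}_{P_{Y^n|X^n}}\!\left[\underline{\hat{\alpha}}^{\mathrm{T}}\mathbf{F}(x_Q)\underline{\hat{\alpha}}\mid X^n = x^n\right] = \mathrm{tr}\bigl\{\mathbf{F}_n^{-1}(x^n)\mathbf{F}(x_Q)\mathbf{F}_n^{-1}(x^n)\,\mathsf{F}(x^n)\bigr\},
\end{equation*}
which together with the previous display yields the claimed identity.

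This argument is essentially routine matrix algebra; there is no real obstacle beyond bookkeeping, since Lemma~\ref{lem:alpha_mean} (unbiasedness of $\underline{\hat{\alpha}}$ under the complete, sufficient regime) does all of the probabilistic work. The only subtle point to check is that completeness is genuinely used when identifying $P_{Y|X}(y|x_Q)$ with $\underline{\alpha}^{\mathrm{T}} f(x_Q,y)$ before expanding; otherwise a residual cross-term $R(x_Q,y)$ would persist and the identity would no longer be exact.
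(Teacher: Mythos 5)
Your proposal is correct and follows essentially the same route as the paper's proof: rewrite the loss as the quadratic form $(\underline{\hat{\alpha}}-\underline{\alpha})^{\mathrm{T}}\mathbf{F}(x_Q)(\underline{\hat{\alpha}}-\underline{\alpha})$, expand, cancel the cross terms via the unbiasedness in Lemma~\ref{lem:alpha_mean}, and apply the cyclic trace identity to the remaining second-moment term. Your closing remark about where completeness is genuinely needed is also consistent with how the paper handles the incomplete case separately.
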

\begin{proof}
    From the definition of $\ell \left(x^n,y^n;x_Q\right)$, we have that (note that $\mathbf{F}_n^{-1}(x^n)$ is symmetric)
\begin{align*}
&\mathbb{E}_{P_{Y^n|X^n}}\left[ 
\ell \left(x^n,Y^n;x_Q\right) | X^n = x^n\right]\\
& = \mathbb{E}_{P_{Y^n|X^n}}\left[ 
\sum_{y} \left(\hat{P}_{Y|X}(y|x_Q) -  P_{Y|X}(y|x_Q) \right)^2 | X^n = x^n\right]\\
& = \mathbb{E}_{P_{Y^n|X^n}}\left[ \sum_y \left( \underline{\hat{\alpha}}^\mathrm{T} f(x_Q,y) - \underline{\alpha}^\mathrm{T}f(x_Q,y) \right)^2 | X^n=x^n\right]\\
& = \mathbb{E}_{P_{Y^n|X^n}}\left[ \left(\underline{\hat{\alpha}} - \underline{\alpha}\right)^\mathrm{T} \mathbf{F}(x_Q) \left(\underline{\hat{\alpha}} - \underline{\alpha}\right) | X^n=x^n \right]\\
& = \mathbb{E}_{P_{Y^n|X^n}}\left[\underline{\hat{\alpha}}^{\mathrm{T}} \mathbf{F}(x_Q)\underline{\hat{\alpha}}| X^n=x^n\right] -2 \mathbb{E}_{P_{Y^n|X^n}}\left[\underline{\hat{\alpha}}^{\mathrm{T}} \mathbf{F}(x_Q)\underline{\alpha}| X^n=x^n\right] + \underline{\alpha}^{\mathrm{T}} \mathbf{F}(x_Q)\underline{\alpha} \\
& = \mathbb{E}_{P_{Y^n|X^n}}\left[\mathrm{tr}\left\{ \mathbf{F}_n^{-1}(x^n)^\mathrm{T}\mathbf{F}(x_Q)\mathbf{F}_n^{-1}(x^n) \bar{f}_n(x^n,Y^n) \bar{f}_n^\mathrm{T}(x^n,Y^n) \right\} | X^n=x^n\right] - \underline{\alpha}^{\mathrm{T}} \mathbf{F}(x_Q)\underline{\alpha}\\
& = \mathrm{tr}\left\{ \mathbf{F}_n^{-1}(x^n)\mathbf{F}(x_Q)\mathbf{F}_n^{-1}(x^n) \mathsf{F}(x^n) \right\}  - \underline{\alpha}^{\mathrm{T}} \mathbf{F}(x_Q)\underline{\alpha}.
\end{align*}
\end{proof}

Further, $\mathsf{F}(x^n)$ can be written as
\begin{align*}
    \mathsf{F}(x^n) = \frac1{n^2}f(x^n) \tilde{\mathbf{A}}(x^n)f^\mathrm{T}(x^n),
\end{align*}
where
\begin{equation*}
    \tilde{\mathbf{A}}(x^n) = \textsf{diag}\{\textsf{diag}\{P_{Y|X=x_1}\},\dots,\textsf{diag}\{P_{Y|X=x_n}\} \} + \sum_{i=1}^n\sum_{j=1}^n v_iv_j^\mathrm{T} - \sum_i v_iv_i^\mathrm{T},
\end{equation*}
with $\textsf{diag}\{P_{Y|X=x_i}\} = \textsf{diag}\{P_{Y|X}(1|x_i),\dots,P_{Y|X}(M|x_i)\}$, and
\begin{equation*}
    v_i = \left[0,\dots,0,P_{Y|X}(1|x_i),\dots, P_{Y|X}(M|x_i),0\dots,0\right]^\mathrm{T}.
\end{equation*}

In the next step, we show the connection between $v_i$, $f(x^n)$ and $\underline{\alpha}$.

\begin{lemma}\label{lem:aFa_comple}
It holds that
\begin{equation*}
    \frac1{n^2}\mathrm{tr}\left\{ \mathbf{F}_n^{-1}(x^n)\mathbf{F}(x_Q)\mathbf{F}_n^{-1}(x^n) f(x^n)\left(\sum_{i,j}v_iv_j^\mathrm{T}\right) f^\mathrm{T}(x^n)\right\}  = \underline{\alpha}^{\mathrm{T}} \mathbf{F}(x_Q)\underline{\alpha}.
\end{equation*}
\end{lemma}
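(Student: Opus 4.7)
The plan is to reduce the inner sum $f(x^n)\left(\sum_{i,j}v_iv_j^{\mathrm{T}}\right)f^{\mathrm{T}}(x^n)$ to a rank-one quantity involving $\mathbf{F}_n(x^n)\underline{\alpha}$, after which the prefactor $\mathbf{F}_n^{-1}(x^n)$ will cancel cleanly under the trace. The key observation is that by the block structure of $v_i$ and the column layout of $f(x^n)=[f(x_1),\ldots,f(x_n)]$, the product $f(x^n)v_i$ picks out only the $i$-th block, giving
\begin{equation*}
f(x^n)\,v_i \;=\; \sum_{y} P_{Y\vert X}(y\vert x_i)\,f(x_i,y).
\end{equation*}
So the first step I would carry out is to write this expression explicitly.

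Next, invoking the complete model assumption $R(x,y)=0$, I would substitute $P_{Y\vert X}(y\vert x_i)=\underline{\alpha}^{\mathrm{T}}f(x_i,y)$ to obtain
\begin{equation*}
f(x^n)\,v_i \;=\; \sum_{y} f(x_i,y)\,f^{\mathrm{T}}(x_i,y)\,\underline{\alpha}.
\end{equation*}
Summing over $i$ and recognizing the definition of $\mathbf{F}_n(x^n)$ then yields
\begin{equation*}
\sum_{i=1}^n f(x^n)\,v_i \;=\; n\,\mathbf{F}_n(x^n)\,\underline{\alpha},
\end{equation*}
and correspondingly $f(x^n)\left(\sum_{i,j}v_iv_j^{\mathrm{T}}\right)f^{\mathrm{T}}(x^n)=n^2\,\mathbf{F}_n(x^n)\,\underline{\alpha}\,\underline{\alpha}^{\mathrm{T}}\,\mathbf{F}_n(x^n)$, using that the outer product factorizes.

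Finally, I would substitute this identity back into the trace on the left-hand side. The factor $n^2$ cancels the $1/n^2$ prefactor, and two applications of $\mathbf{F}_n^{-1}(x^n)\mathbf{F}_n(x^n)=\mathbf{I}_K$ together with cyclic invariance of the trace reduce the expression to $\mathrm{tr}\{\mathbf{F}(x_Q)\underline{\alpha}\,\underline{\alpha}^{\mathrm{T}}\}=\underline{\alpha}^{\mathrm{T}}\mathbf{F}(x_Q)\underline{\alpha}$, which is the desired equality. I do not expect any real obstacle here: the argument is essentially bookkeeping, with the only conceptual step being the use of completeness to replace $P_{Y\vert X}$ by the linear form $\underline{\alpha}^{\mathrm{T}}f$, so that the matrix $\mathbf{F}_n(x^n)$ materializes and cancels the inverses.
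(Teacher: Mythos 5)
Your proposal is correct and follows essentially the same route as the paper: identify $f(x^n)v_i=\sum_y P_{Y|X}(y|x_i)f(x_i,y)$, use completeness to replace $P_{Y|X}(y|x_i)$ by $\underline{\alpha}^{\mathrm{T}}f(x_i,y)$ so that $\sum_i f(x^n)v_i=n\,\mathbf{F}_n(x^n)\underline{\alpha}$, and then cancel the inverses under the trace. The only difference is presentational (you factor the outer product explicitly before substituting into the trace), so nothing further is needed.
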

\begin{proof}
We have that
\begin{align*}
& \frac1{n^2}\mathrm{tr}\left\{ \mathbf{F}_n^{-1}(x^n)\mathbf{F}(x_Q)\mathbf{F}_n^{-1}(x^n) f(x^n)\left(\sum_{i,j}v_iv_j^\mathrm{T}\right) f^\mathrm{T}(x^n)\right\}\\
& = \mathrm{tr}\left\{ \mathbf{F}_n^{-1}(x^n)\mathbf{F}(x_Q)\mathbf{F}_n^{-1}(x^n)\left(\sum_{i,j}\left(\frac1n f(x^n) v_i\right)\left(\frac1n f(x^n) v_j\right)^\mathrm{T}\right)\right\}\\
& = \mathrm{tr}\left\{ \mathbf{F}_n^{-1}(x^n)\mathbf{F}(x_Q)\mathbf{F}_n^{-1}(x^n)\left(\frac1n\sum_{i,y}P_{Y|X}(y|x_i) f(x_i, y)\right)\left(\frac1n\sum_{j,y}P_{Y|X}(y|x_j) f(x_j,y)\right)^\mathrm{T}\right\}\\
\end{align*}
From model \ref{equ:model}, we have that for a complete model, $P_{Y|X}(y|x) = \underline{\alpha}^\mathrm{T}f(x,y)$. Hence, we have that
\begin{align*}
& \mathrm{tr}\left\{ \mathbf{F}_n^{-1}(x^n)\mathbf{F}(x_Q)\mathbf{F}_n^{-1}(x^n)\left(\frac1n\sum_{i,y}P_{Y|X}(y|x_i) f(x_i, y)\right)\left(\frac1n\sum_{j,y} P_{Y|X}(y|x_j)f(x_j,y)\right)^\mathrm{T}\right\}\\
& = \mathrm{tr}\left\{\mathbf{F}_n^{-1}(x^n) \mathbf{F}(x_Q)\mathbf{F}_n^{-1}(x^n)\left(\frac1n\sum_{i,y} f(x_i, y)f^\mathrm{T}(x_i, y) \underline{\alpha} \right)\left(\frac1n\sum_{j,y} f(x_j,y)f^\mathrm{T}(x_j,y)\underline{\alpha}\right)^\mathrm{T}\right\}\\
& = \mathrm{tr}\left\{\mathbf{F}_n^{-1}(x^n) \mathbf{F}(x_Q)\mathbf{F}_n^{-1}(x^n)\mathbf{F}_n(x^n) \underline{\alpha} \left(\mathbf{F}_n(x^n)\underline{\alpha}\right)^\mathrm{T}\right\}\\
& = \underline{\alpha}^\mathrm{T}\mathbf{F}(x_Q) \underline{\alpha} .\\
\end{align*}
This completes the proof.
\end{proof}

Hence, we have the excessive risk without $\underline{\alpha}$.
\begin{corollary}\label{col:excessive_risk_complete_simple}
Let 
\[
\textbf{A}(x_i) = \mathsf{diag}\{P_{Y|X}(1|x_i),\dots,P_{Y|X}(M|x_i)\}-\phi_i\phi_i^\mathrm{T},
\]
where $\phi_i = [P_{Y|X}(1|x_i),\dots,P_{Y|X}(M|x_i)]^\mathrm{T}$, and let
\begin{equation*}
\textbf{A}(x^n) = \mathsf{diag}\left\{ \textbf{A}(x_1), \ldots, \textbf{A}(x_n) \right\},
\end{equation*}
it holds that
\begin{equation*}
    \mathbb{E}_{P_{Y^n|X^n}}\left[ 
\ell \left(x^n,Y^n;x_Q\right) | X^n = x^n\right]= \mathrm{tr}\left\{ \frac1{n^2}\mathbf{F}_n^{-1}(x^n)\mathbf{F}(x_Q)\mathbf{F}_n^{-1}(x^n) f(x^n) {\mathbf{A}}(x^n) f^\mathrm{T}(x^n) \right\}.
\end{equation*}
\end{corollary}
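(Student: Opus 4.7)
The plan is to reduce the claim to Corollary \ref{col:excessive_risk_complete} by carefully identifying the pieces of the moment matrix $\mathsf{F}(x^n)$. Starting from Corollary \ref{col:excessive_risk_complete}, the excessive risk equals
\begin{equation*}
\mathrm{tr}\left\{\mathbf{F}_n^{-1}(x^n)\mathbf{F}(x_Q)\mathbf{F}_n^{-1}(x^n)\mathsf{F}(x^n)\right\} - \underline{\alpha}^{\mathrm{T}}\mathbf{F}(x_Q)\underline{\alpha},
\end{equation*}
and the first step is to substitute the already-stated decomposition $\mathsf{F}(x^n) = \frac{1}{n^2} f(x^n)\tilde{\mathbf{A}}(x^n)f^{\mathrm{T}}(x^n)$ with
\begin{equation*}
\tilde{\mathbf{A}}(x^n) = \mathsf{diag}\bigl\{\mathsf{diag}\{P_{Y|X=x_1}\},\dots,\mathsf{diag}\{P_{Y|X=x_n}\}\bigr\} + \sum_{i,j} v_iv_j^{\mathrm{T}} - \sum_i v_iv_i^{\mathrm{T}},
\end{equation*}
so that the trace splits into three summands by linearity.

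Next, I would match each summand with a term of the target formula. By Lemma \ref{lem:aFa_comple}, the contribution of $\sum_{i,j}v_iv_j^{\mathrm{T}}$ equals $\underline{\alpha}^{\mathrm{T}}\mathbf{F}(x_Q)\underline{\alpha}$, which exactly cancels the negative term inherited from Corollary \ref{col:excessive_risk_complete}. What remains is the trace of $\frac{1}{n^2}\mathbf{F}_n^{-1}(x^n)\mathbf{F}(x_Q)\mathbf{F}_n^{-1}(x^n) f(x^n) \mathbf{B}(x^n) f^{\mathrm{T}}(x^n)$, where
\begin{equation*}
\mathbf{B}(x^n) = \mathsf{diag}\bigl\{\mathsf{diag}\{P_{Y|X=x_1}\},\dots,\mathsf{diag}\{P_{Y|X=x_n}\}\bigr\} - \sum_i v_iv_i^{\mathrm{T}}.
\end{equation*}

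The final step is a purely algebraic identification $\mathbf{B}(x^n) = \mathbf{A}(x^n)$. Because $v_i$ is supported only on the $i$-th block (with entries $\phi_i$ there), the rank-one matrix $v_iv_i^{\mathrm{T}}$ is block-diagonal with the single nonzero block $\phi_i\phi_i^{\mathrm{T}}$ in position $(i,i)$. Summing over $i$ and subtracting from the block-diagonal matrix of $\mathsf{diag}\{P_{Y|X=x_i}\}$ yields precisely $\mathsf{diag}\{\mathsf{diag}\{P_{Y|X=x_i}\} - \phi_i\phi_i^{\mathrm{T}}\}_{i=1}^{n} = \mathsf{diag}\{\mathbf{A}(x_1),\dots,\mathbf{A}(x_n)\} = \mathbf{A}(x^n)$, matching the definition in Eq.~(\ref{equ:def_Q}). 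Substituting this identity into the remaining trace completes the proof.

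The only delicate point is verifying the block-structure bookkeeping in the last paragraph, since $\tilde{\mathbf{A}}(x^n)$ lives in $\mathbb{R}^{nM\times nM}$ and one must be careful that the ``padding with zeros'' in the definition of $v_i$ is consistent with the block-diagonal convention used for $\mathbf{A}(x^n)$; once that is checked, the rest is mechanical.
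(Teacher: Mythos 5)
Your proposal is correct and follows essentially the same route as the paper: substitute $\mathsf{F}(x^n)=\frac{1}{n^2}f(x^n)\tilde{\mathbf{A}}(x^n)f^{\mathrm{T}}(x^n)$ into Corollary \ref{col:excessive_risk_complete}, cancel $\underline{\alpha}^{\mathrm{T}}\mathbf{F}(x_Q)\underline{\alpha}$ against the $\sum_{i,j}v_iv_j^{\mathrm{T}}$ contribution via Lemma \ref{lem:aFa_comple}, and observe that $\tilde{\mathbf{A}}(x^n)-\sum_{i,j}v_iv_j^{\mathrm{T}}=\mathbf{A}(x^n)$. Your block-diagonal bookkeeping for the last identification is exactly the identity the paper invokes implicitly, so nothing is missing.
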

\begin{proof}
    Based on the Corollary \ref{col:excessive_risk_complete} and Lemma \ref{lem:aFa_comple}, we have the excessive risk being
\begin{align*}
& \mathbb{E}_{P_{Y^n|X^n}}\left[ 
\ell \left(x^n,Y^n;x_Q\right) | X^n = x^n\right]\\
&= \mathrm{tr}\left\{ \mathbf{F}_n^{-1}(x^n)\mathbf{F}(x_Q)\mathbf{F}_n^{-1}(x^n) \mathsf{F}(x^n) \right\}  - \underline{\alpha}^{\mathrm{T}} \mathbf{F}(x_Q)\underline{\alpha}\\
& = \mathrm{tr}\left\{ \frac1{n^2}\mathbf{F}_n^{-1}(x^n)\mathbf{F}(x_Q)\mathbf{F}_n^{-1}(x^n) f(x^n) \tilde{\mathbf{A}}(x^n)f^\mathrm{T}(x^n) \right\}  - \underline{\alpha}^{\mathrm{T}} \mathbf{F}(x_Q)\underline{\alpha}\\
& = \mathrm{tr}\left\{ \frac1{n^2}\mathbf{F}_n^{-1}(x^n)\mathbf{F}(x_Q)\mathbf{F}_n^{-1}(x^n) f(x^n) \left(\tilde{\mathbf{A}}(x^n) - \sum_{i,j}v_iv_j^\mathrm{T}\right)f^\mathrm{T}(x^n) \right\}.
\end{align*}
Note that $\tilde{\mathbf{A}}(x^n) = \sum_{i,j}v_iv_j^\mathrm{T} + \mathbf{A}(x^n)$, we have that
\begin{align*}
& \mathbb{E}_{P_{Y^n|X^n}}\left[ 
\ell \left(x^n,Y^n;x_Q\right) | X^n = x^n\right]= \mathrm{tr}\left\{ \frac1{n^2}\mathbf{F}_n^{-1}(x^n)\mathbf{F}(x_Q)\mathbf{F}_n^{-1}(x^n) f(x^n) {\mathbf{A}}(x^n) f^\mathrm{T}(x^n) \right\}.
\end{align*}
\end{proof}

Based on this form, we further find the upper bound.

\begin{lemma}\label{lem:better_ub_complete}
For sufficient prompt demonstrations and complete model, the excessive risk can be bounded by
\begin{equation*}
    \mathbb{E}_{P_{Y^n|X^n}}\left[ 
\ell \left(x^n,Y^n;x_Q\right) | X^n = x^n\right] \leq \frac1n \lambda_1\left(\mathbf{F}(x_Q)\mathbf{F}_n^{-1}(x^n) \right) S_K({\textbf{A}}(x^n)),
\end{equation*}
where $S_K({\textbf{A}}(x^n))$ is the sum of top K eigenvalues.
\end{lemma}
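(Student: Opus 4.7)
The plan is to start from the trace representation of the excessive risk provided by Corollary \ref{col:excessive_risk_complete_simple} and reduce the desired bound to a trace inequality between two positive semi-definite (PSD) matrices, one of which has rank at most $K$.

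First, I would apply the cyclic property of the trace to rewrite
$$\mathbb{E}_{P_{Y^n|X^n}}\!\left[\ell(x^n,Y^n;x_Q)\mid X^n=x^n\right] = \mathrm{tr}\bigl(\mathbf{A}(x^n)\,\mathbf{C}\bigr),$$
where
$$\mathbf{C} \triangleq \tfrac{1}{n^2}\,f^{\mathrm{T}}(x^n)\,\mathbf{F}_n^{-1}(x^n)\,\mathbf{F}(x_Q)\,\mathbf{F}_n^{-1}(x^n)\,f(x^n)\in\mathbb{R}^{nM\times nM}.$$
Since $\mathbf{F}(x_Q)=f(x_Q)f^{\mathrm{T}}(x_Q)$ is PSD and $\mathbf{F}_n^{-1}(x^n)$ is symmetric, the inner factor $\mathbf{F}_n^{-1}(x^n)\mathbf{F}(x_Q)\mathbf{F}_n^{-1}(x^n)$ is PSD, so $\mathbf{C}$ is PSD as a congruence transform of a PSD matrix. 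Combined with the fact that $\mathbf{A}(x^n)$ is PSD by Lemma \ref{lem:lambda_1_ub}, both factors in the trace are PSD.

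Next, I would identify the nonzero spectrum of $\mathbf{C}$. Using the standard identity that $M^{\mathrm{T}}M$ and $MM^{\mathrm{T}}$ share nonzero eigenvalues together with $\tfrac{1}{n}f(x^n)f^{\mathrm{T}}(x^n)=\mathbf{F}_n(x^n)$, the nonzero eigenvalues of $\mathbf{C}$ coincide with those of $\tfrac{1}{n}\mathbf{F}_n^{-1}(x^n)\mathbf{F}(x_Q)$, which in turn is similar to $\tfrac{1}{n}\mathbf{F}(x_Q)\mathbf{F}_n^{-1}(x^n)$. Because this similar matrix lives in $\mathbb{R}^{K\times K}$, the PSD matrix $\mathbf{C}$ has at most $K$ nonzero eigenvalues, and in particular
$$\lambda_1(\mathbf{C}) = \tfrac{1}{n}\,\lambda_1\!\bigl(\mathbf{F}(x_Q)\mathbf{F}_n^{-1}(x^n)\bigr).$$

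Finally, I would invoke the PSD form of von Neumann's trace inequality,
$$\mathrm{tr}\bigl(\mathbf{A}(x^n)\mathbf{C}\bigr) \;\le\; \sum_{i\ge 1}\lambda_i(\mathbf{A}(x^n))\,\lambda_i(\mathbf{C}),$$
with eigenvalues in decreasing order. Since $\lambda_i(\mathbf{C})=0$ for $i>K$ and $\lambda_i(\mathbf{C})\le\lambda_1(\mathbf{C})$ for $i\le K$, the right-hand side is at most $\lambda_1(\mathbf{C})\,\sum_{i=1}^{K}\lambda_i(\mathbf{A}(x^n)) = \lambda_1(\mathbf{C})\,S_K(\mathbf{A}(x^n))$, and substituting the expression for $\lambda_1(\mathbf{C})$ gives exactly the claimed inequality. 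The main obstacle I expect is the careful bookkeeping around the rank-$K$ structure of $\mathbf{C}$ embedded in the $nM$-dimensional ambient space together with the correct invocation of the PSD form of von Neumann's trace inequality; once the PSD property of $\mathbf{C}$ and the spectral identity $\lambda_i(\mathbf{C})=\tfrac{1}{n}\lambda_i(\mathbf{F}(x_Q)\mathbf{F}_n^{-1}(x^n))$ are in place, the remaining steps are routine.
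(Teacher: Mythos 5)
Your proposal is correct, and it relies on the same key tool as the paper's proof---von Neumann's trace inequality applied to the trace formula from Corollary \ref{col:excessive_risk_complete_simple}---but the execution differs in a way worth noting. The paper symmetrizes by inserting $\mathbf{F}_n^{-1/2}(x^n)$ on both sides, applies von Neumann to two $K\times K$ Hermitian factors, and then argues that the eigenvalues of $U\mathbf{A}(x^n)U^{\mathrm{T}}$ with $U=\tfrac{1}{\sqrt n}\mathbf{F}_n^{-1/2}(x^n)f(x^n)$ coincide with those of $\mathbf{A}(x^n)$; since $U$ is only semi-orthogonal ($UU^{\mathrm{T}}=I_K$ with $K\le nM$), that claimed equality is in general only the inequality $\lambda_i(U\mathbf{A}U^{\mathrm{T}})\le\lambda_i(\mathbf{A})$ given by the Poincar\'e separation theorem, which still suffices for the bound. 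You instead cycle the trace into the $nM\times nM$ ambient space, pair $\mathbf{A}(x^n)$ with the PSD, rank-at-most-$K$ matrix $\mathbf{C}$, identify the nonzero spectrum of $\mathbf{C}$ with that of $\tfrac1n\mathbf{F}(x_Q)\mathbf{F}_n^{-1}(x^n)$ via the $M^{\mathrm{T}}M$ versus $MM^{\mathrm{T}}$ identity, and let the vanishing of $\lambda_i(\mathbf{C})$ for $i>K$ truncate the von Neumann sum to $S_K(\mathbf{A}(x^n))$. The two routes yield the identical bound; yours avoids the interlacing subtlety entirely at the cost of working in the larger space, and both require the positive semi-definiteness of $\mathbf{A}(x^n)$ (Lemma \ref{lem:lambda_1_ub}) so that the truncated sum is an upper bound.
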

\begin{proof}
Denote $\mathbf{F}_n^{-1/2}(x^n)$ as $\mathbf{F}_n^{-1}(x^n) = \mathbf{F}_n^{-1/2}(x^n)\mathbf{F}_n^{-1/2}(x^n)$. Then, by Corollary \ref{col:excessive_risk_complete_simple}, the excessive risk can be written as
\begin{align*}
& \mathbb{E}_{P_{Y^n|X^n}}\left[ 
\ell \left(x^n,Y^n;x_Q\right) | X^n = x^n\right]\\
& = \mathrm{tr}\left\{ \frac1{n^2}\mathbf{F}_n^{-1/2}(x^n)\mathbf{F}(x_Q)\mathbf{F}_n^{-1/2}(x^n) \mathbf{F}_n^{-1/2}(x^n) f(x^n){\mathbf{A}}(x^n) f^\mathrm{T}(x^n)\mathbf{F}_n^{-1/2}(x^n) \right\}.
\end{align*}
Denote the $i$-th large eigenvalue for one matrix as $\lambda_i(\cdot)$. Note that both matrices $\mathbf{F}_n^{-1/2}(x^n)\mathbf{F}(x_Q)\mathbf{F}_n^{-1/2}(x^n)$ and $\mathbf{F}_n^{-1/2}(x^n) f(x^n){\mathbf{A}}(x^n) f^\mathrm{T}(x^n)\mathbf{F}_n^{-1/2}(x^n)$ are Hermitian. From Von Neumann's trace inequality \citep{mirsky1975trace}, we have the
\begin{align*}
& \mathrm{tr}\left\{ \frac1{n^2}\mathbf{F}_n^{-1/2}(x^n)\mathbf{F}(x_Q)\mathbf{F}_n^{-1/2}(x^n) \mathbf{F}_n^{-1/2}(x^n) f(x^n){\mathbf{A}}(x^n) f^\mathrm{T}(x^n)\mathbf{F}_n^{-1/2}(x^n) \right\}\\
& \leq \frac1n \sum_{i=1}^K \lambda_i\left(\mathbf{F}_n^{-1/2}(x^n)\mathbf{F}(x_Q)\mathbf{F}_n^{-1/2}(x^n)\right) \lambda_i\left(\frac1{\sqrt{n}}\mathbf{F}_n^{-1/2}(x^n) f(x^n){\mathbf{A}}(x^n) f^\mathrm{T}(x^n)\mathbf{F}_n^{-1/2}(x^n)\frac1{\sqrt{n}}\right).
\end{align*}
Since that
\begin{align*}
& \left(\frac1{\sqrt{n}}\mathbf{F}_n^{-1/2}(x^n) f(x^n)\right)\left(\frac1{\sqrt{n}}\mathbf{F}_n^{-1/2}(x^n) f(x^n)\right)^\mathrm{T}\\
& = \frac1{\sqrt{n}}\mathbf{F}_n^{-1/2}(x^n) f(x^n)f^\mathrm{T}(x^n)\mathbf{F}_n^{-1/2}(x^n)\frac1{\sqrt{n}}\\
& = \mathbf{F}_n^{-1/2}(x^n)\mathbf{F}_n(x^n)\mathbf{F}_n^{-1/2}(x^n)\\
& = I,
\end{align*}
the matrix $\frac1{\sqrt{n}}\mathbf{F}_n^{-1/2}(x^n) f(x^n)$ is an orthogonal matrix. Then, the eigenvalues of matrix $\frac1{\sqrt{n}}\mathbf{F}_n^{-1/2}(x^n) f(x^n){\mathbf{A}}(x^n) f^\mathrm{T}(x^n)\mathbf{F}_n^{-1/2}(x^n)\frac1{\sqrt{n}}$ equal to those of $\mathbf{A}(x^n)$, i.e.,
\begin{equation*}
\lambda_i\left(\frac1{\sqrt{n}}\mathbf{F}_n^{-1/2}(x^n) f(x^n){\mathbf{A}}(x^n) f^\mathrm{T}(x^n)\mathbf{F}_n^{-1/2}(x^n)\frac1{\sqrt{n}}\right) = \lambda_i\left(\mathbf{A}(x^n)\right).
\end{equation*}

Furthermore, we prove that $\mathbf{A}(x^n)$ is semi-positive definite.
For any vector $v=[v_1^\mathrm{T},\dots,v_n^\mathrm{T}]^\mathrm{T}$, we have that
\begin{align*}
& v^\mathrm{T}\mathbf{A}(x^n)v = \sum_{i=1}^n v_i^\mathrm{T} \mathbf{A}(x_i)v_i = \sum_{i=1}^n\sum_{j=1}^M P_{Y|X}(j|x_i)v^2_i(j) - \sum_{i=1}^n\left(\sum_{j=1}^M P_{Y|X}(j|x_i) v_i(j)\right)^2 \geq 0,
\end{align*}
where the inequality is achieved by Jensen's inequality.
Hence, all eigenvalues of $\mathbf{A}(x^n)$ are non-negative, making that
\begin{align*}
& \mathrm{tr}\left\{ \frac1{n^2}\mathbf{F}_n^{-1/2}(x^n)\mathbf{F}(x_Q)\mathbf{F}_n^{-1/2}(x^n) \mathbf{F}_n^{-1/2}(x^n) f(x^n){\mathbf{A}}(x^n) f^\mathrm{T}(x^n)\mathbf{F}_n^{-1/2}(x^n) \right\}\\
& \leq \frac1n \sum_{i=1}^K \lambda_i\left(\mathbf{F}_n^{-1/2}(x^n)\mathbf{F}(x_Q)\mathbf{F}_n^{-1/2}(x^n)\right) \lambda_i\left(\frac1{\sqrt{n}}\mathbf{F}_n^{-1/2}(x^n) f(x^n){\mathbf{A}}(x^n) f^\mathrm{T}(x^n)\mathbf{F}_n^{-1/2}(x^n)\frac1{\sqrt{n}}\right)\\
& \leq \frac1n \lambda_1\left(\mathbf{F}_n^{-1/2}(x^n)\mathbf{F}(x_Q)\mathbf{F}_n^{-1/2}(x^n)\right) \sum_{i=1}^K \lambda_i\left(\mathbf{A}(x^n)\right)\\
& = \frac1n \lambda_1\left(\mathbf{F}(x_Q)\mathbf{F}_n^{-1}(x^n) \right) S_K({\textbf{A}}(x^n)).
\end{align*}
\end{proof}

Based on Lemma \ref{lem:better_ub_complete}, we prove the Theorem \ref{thm:compelete_sufficient} by 
\begin{equation*}
    S_K({\textbf{A}}(x^n)) \leq K \lambda_1(\textbf{A}(x^n)).
\end{equation*}

\subsection{Proof of Theorem \ref{thm:complete_insufficient}}\label{append:theorem2_proof}
Similar to Lemma \ref{lem:alpha_mean}, we have the expectation of $\underline{\alpha}$ under insufficient demonstrations.
\begin{lemma}\label{lem:alpha_mean_insufficient}
For insufficient prompt demonstrations, i.e., $\mathbf{F}_n(x^n)$ is not invertible, it holds that for all $x^n$,
\begin{equation*}
    \mathbb{E}_{P_{Y^n|X^n}}[\underline{\hat{\alpha}}(x^n,Y^n) | X^n=x^n] = \mathbf{F}_n^{\dagger}(x^n)\mathbf{F}_n(x^n)\underline{\alpha}.
\end{equation*}
\end{lemma}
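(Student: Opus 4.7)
The plan is to mimic the proof of Lemma \ref{lem:alpha_mean} almost verbatim, only replacing the inverse by the pseudo-inverse and noting that the final cancellation $\mathbf{F}_n^{-1}\mathbf{F}_n = I$ is no longer available in the rank-deficient case. Since the statement sits within the proof of Theorem \ref{thm:complete_insufficient}, I would take the complete-model assumption $R(x,y)=0$ as in force, so that the model (\ref{equ:model}) reduces to $P_{Y|X}(y|x)=\underline{\alpha}^{\mathrm{T}} f(x,y)$.

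First, by the definition (\ref{equ:def_alpha}) and linearity of conditional expectation, I would pull the deterministic factor $\mathbf{F}_n^{\dagger}(x^n)$ (which depends only on $x^n$) outside the expectation, leaving me to evaluate $\mathbb{E}_{P_{Y^n|X^n}}[\bar{f}_n(x^n,Y^n)\mid X^n=x^n]$. Using conditional independence of the $Y_i$ given $X^n=x^n$, this equals $\frac1n\sum_{i=1}^{n}\sum_{y}P_{Y|X}(y|x_i)\, f(x_i,y)$.

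Next, I would substitute the complete-model identity $P_{Y|X}(y|x_i)=\underline{\alpha}^{\mathrm{T}} f(x_i,y)$, rewrite $P_{Y|X}(y|x_i)\,f(x_i,y) = f(x_i,y)f^{\mathrm{T}}(x_i,y)\,\underline{\alpha}$, and recognize the resulting sum as $\mathbf{F}_n(x^n)\underline{\alpha}$ from the definition of $\mathbf{F}_n(x^n)$. Putting the pieces together gives $\mathbb{E}_{P_{Y^n|X^n}}[\underline{\hat{\alpha}}(x^n,Y^n)\mid X^n=x^n]=\mathbf{F}_n^{\dagger}(x^n)\mathbf{F}_n(x^n)\underline{\alpha}$, which is the claim.

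There is no real obstacle: the only conceptual point to emphasize is that, in contrast to Lemma \ref{lem:alpha_mean}, the composition $\mathbf{F}_n^{\dagger}(x^n)\mathbf{F}_n(x^n)$ does not collapse to the identity but rather acts as the orthogonal projection onto the row space of $\mathbf{F}_n(x^n)$. This observation is exactly what will later be exploited in Theorem \ref{thm:complete_insufficient} through the complementary projection $\mathbf{F}_n^{\perp}(x^n)=\mathbf{I}_K-\mathbf{F}_n^{\dagger}(x^n)\mathbf{F}_n(x^n)$, which isolates the component of $\underline{\alpha}$ that cannot be recovered from the prompt demonstrations.
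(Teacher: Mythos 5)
Your proposal is correct and matches the paper's own proof step for step: pull the deterministic $\mathbf{F}_n^{\dagger}(x^n)$ out of the conditional expectation, use conditional independence of the labels to reduce to $\frac1n\sum_{i,y}P_{Y|X}(y|x_i)f(x_i,y)$, substitute the complete-model identity $P_{Y|X}(y|x)=\underline{\alpha}^{\mathrm{T}}f(x,y)$, and recognize $\mathbf{F}_n(x^n)\underline{\alpha}$. Your explicit note that the complete-model assumption is needed here (the paper uses it silently in the analogous line) and that $\mathbf{F}_n^{\dagger}\mathbf{F}_n$ is a projection rather than the identity is exactly the right emphasis.
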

\begin{proof}
From definition of $\underline{\hat{\alpha}}$ in Eq. (\ref{equ:def_alpha}), we have that
\begin{align*}
\begin{aligned}
& \mathbb{E}_{P_{Y^n|X^n}}\left[ \underline{\hat{\alpha}}(x^n, Y^n) | X^n = x^n \right]\\ 
& = \mathbb{E}_{P_{Y^n|X^n}}\left[ \mathbf{F}_n^{\dagger}(x^n) \bar{f}_n(x^n,Y^n)  | X^n=x^n\right]\\
& = \sum_{y^n} \prod_{i=1}^n P_{Y\vert X}(y_i\vert x_i) \mathbf{F}_n^{\dagger}(x^n) \left(\frac{1}{n}\sum_{i=1}^nf(x_i,y_i)\right)\\
& = \mathbf{F}_n^{\dagger}(x^n)  \frac1n \sum_{i=1}^n \sum_y P_{Y\vert X}(y\vert x_i) f(x_i,y)\\
& = \mathbf{F}_n^{\dagger}(x^n) \frac1n\sum_{i=1}^n \sum_y f(x_i,y) f^\mathrm{T}(x_i,y) \underline{\alpha} \\
&=\mathbf{F}_n^{\dagger}(x^n)\mathbf{F}_n(x^n)\underline{\alpha}.
\end{aligned}
\end{align*}
\end{proof}

\begin{corollary}\label{col:excessive_risk_insufficient}
For insufficient prompt demonstrations and complete model, the excessive risk can be written as
\begin{align*}
& \mathbb{E}_{P_{Y^n|X^n}}\left[ 
\ell \left(x^n,Y^n;x_Q\right) | X^n = x^n\right]\\
& = \mathrm{tr}\left\{ \mathbf{F}_n^{\dagger}(x^n)\mathbf{F}(x_Q)\mathbf{F}_n^{\dagger}(x^n) \mathsf{F}(x^n) \right\} - 2\underline{\alpha}^\mathrm{T}\mathbf{F}(x_Q)\mathbf{F}_n^{\dagger}(x^n)\mathbf{F}_n(x^n)\underline{\alpha} + \underline{\alpha}^{\mathrm{T}} \mathbf{F}(x_Q)\underline{\alpha}, 
\end{align*}
where $\mathsf{F}(x^n) = \mathbb{E}_{P_{Y^n|X^n}}\left[\bar{f}_n(x^n,Y^n) \bar{f}_n^\mathrm{T}(x^n,Y^n) | X^n=x^n\right]$.
\end{corollary}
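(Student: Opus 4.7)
The proof mirrors the structure of Corollary~\ref{col:excessive_risk_complete} (the sufficient case), but the non-invertibility of $\mathbf{F}_n(x^n)$ forces the cross term to carry an extra factor $\mathbf{F}_n^{\dagger}(x^n)\mathbf{F}_n(x^n)$, which is exactly the orthogonal projector onto the range of $\mathbf{F}_n(x^n)$ rather than the identity. My plan is to directly expand the mean-squared loss as a quadratic form in $\underline{\hat{\alpha}}-\underline{\alpha}$, then take expectations term-by-term, invoking Lemma~\ref{lem:alpha_mean_insufficient} at the single step where the sufficient-case proof used Lemma~\ref{lem:alpha_mean}.

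Concretely, I would first exploit the completeness assumption $R(x,y)=0$ so that $P_{Y|X}(y|x_Q)=\underline{\alpha}^{\mathrm{T}}f(x_Q,y)$ and $\hat P_{Y|X}(y|x_Q)=\underline{\hat{\alpha}}^{\mathrm{T}}f(x_Q,y)$; summing the squared differences over $y$ and collecting the rank-one outer products $f(x_Q,y)f^{\mathrm{T}}(x_Q,y)$ yields the compact identity
\begin{equation*}
\ell(x^n,y^n;x_Q) = (\underline{\hat{\alpha}}-\underline{\alpha})^{\mathrm{T}}\mathbf{F}(x_Q)(\underline{\hat{\alpha}}-\underline{\alpha}).
\end{equation*}
Expanding the quadratic and taking conditional expectation splits into three pieces: a pure second-moment term in $\underline{\hat{\alpha}}$, a cross term linear in $\underline{\hat{\alpha}}$, and the deterministic term $\underline{\alpha}^{\mathrm{T}}\mathbf{F}(x_Q)\underline{\alpha}$.

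For the second-moment term I would substitute the definition $\underline{\hat{\alpha}}(x^n,Y^n)=\mathbf{F}_n^{\dagger}(x^n)\bar f_n(x^n,Y^n)$, use symmetry of $\mathbf{F}_n^{\dagger}(x^n)$ (inherited from the symmetry of $\mathbf{F}_n(x^n)$), and apply the trace cyclic property to rewrite the scalar $\underline{\hat{\alpha}}^{\mathrm{T}}\mathbf{F}(x_Q)\underline{\hat{\alpha}}$ as $\mathrm{tr}\{\mathbf{F}_n^{\dagger}(x^n)\mathbf{F}(x_Q)\mathbf{F}_n^{\dagger}(x^n)\bar f_n\bar f_n^{\mathrm{T}}\}$; the expectation then passes inside the trace and produces $\mathsf{F}(x^n)$ by definition. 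For the cross term, linearity of expectation and Lemma~\ref{lem:alpha_mean_insufficient} give $\mathbb{E}[\underline{\hat{\alpha}}^{\mathrm{T}}\mathbf{F}(x_Q)\underline{\alpha}\mid X^n=x^n]=\underline{\alpha}^{\mathrm{T}}\mathbf{F}_n(x^n)\mathbf{F}_n^{\dagger}(x^n)\mathbf{F}(x_Q)\underline{\alpha}$; since this is a scalar, I transpose it once and use the symmetry of $\mathbf{F}(x_Q)$ and of $\mathbf{F}_n^{\dagger}(x^n)\mathbf{F}_n(x^n)$ (an orthogonal projector) to put it into the announced form $\underline{\alpha}^{\mathrm{T}}\mathbf{F}(x_Q)\mathbf{F}_n^{\dagger}(x^n)\mathbf{F}_n(x^n)\underline{\alpha}$.

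There is no substantive obstacle here; the only care required is to justify the symmetry $(\mathbf{F}_n^{\dagger}(x^n))^{\mathrm{T}}=\mathbf{F}_n^{\dagger}(x^n)$, which follows from $\mathbf{F}_n(x^n)$ being symmetric positive semi-definite, and to verify that the projector $\mathbf{F}_n^{\dagger}(x^n)\mathbf{F}_n(x^n)$ commutes through transposition so that the cross term appears in exactly the form stated in the corollary. Collecting the three pieces with the signs from the quadratic expansion yields the desired equality.
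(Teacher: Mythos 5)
Your proposal is correct and follows essentially the same route as the paper's proof: expand $\ell$ as the quadratic form $(\underline{\hat{\alpha}}-\underline{\alpha})^{\mathrm{T}}\mathbf{F}(x_Q)(\underline{\hat{\alpha}}-\underline{\alpha})$, handle the second-moment term via the cyclic trace identity to produce $\mathsf{F}(x^n)$, and evaluate the cross term with Lemma~\ref{lem:alpha_mean_insufficient}. Your explicit remarks on the symmetry of $\mathbf{F}_n^{\dagger}(x^n)$ and of the projector $\mathbf{F}_n^{\dagger}(x^n)\mathbf{F}_n(x^n)$ are details the paper leaves implicit, but they do not change the argument.
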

\begin{proof}
    From the definition of $\ell \left(x^n,y^n;x_Q\right)$, we have that (note that $\mathbf{F}_n^{-1}(x^n)$ is symmetric)
\begin{align*}
&\mathbb{E}_{P_{Y^n|X^n}}\left[ 
\ell \left(x^n,Y^n;x_Q\right) | X^n = x^n\right]\\
& = \mathbb{E}_{P_{Y^n|X^n}}\left[ 
\sum_{y} \left(\hat{P}_{Y|X}(y|x_Q) -  P_{Y|X}(y|x_Q) \right)^2 | X^n = x^n\right]\\
& = \mathbb{E}_{P_{Y^n|X^n}}\left[ \sum_y \left( \underline{\hat{\alpha}}^\mathrm{T} f(x_Q,y) - \underline{\alpha}^\mathrm{T}f(x_Q,y) \right)^2 | X^n=x^n\right]\\
& = \mathbb{E}_{P_{Y^n|X^n}}\left[ \left(\underline{\hat{\alpha}} - \underline{\alpha}\right)^\mathrm{T} \mathbf{F}(x_Q) \left(\underline{\hat{\alpha}} - \underline{\alpha}\right) | X^n=x^n \right]\\
& = \mathbb{E}_{P_{Y^n|X^n}}\left[\underline{\hat{\alpha}}^{\mathrm{T}} \mathbf{F}(x_Q)\underline{\hat{\alpha}}| X^n=x^n\right] -2 \mathbb{E}_{P_{Y^n|X^n}}\left[\underline{\hat{\alpha}}^{\mathrm{T}} \mathbf{F}(x_Q)\underline{\alpha}| X^n=x^n\right] + \underline{\alpha}^{\mathrm{T}} \mathbf{F}(x_Q)\underline{\alpha} \\
& = \mathrm{tr}\left\{ \mathbf{F}_n^{\dagger}(x^n)\mathbf{F}(x_Q)\mathbf{F}_n^{\dagger}(x^n) \mathsf{F}(x^n) \right\}  - 2\underline{\alpha}^\mathrm{T}\mathbf{F}(x_Q)\mathbf{F}_n^{\dagger}(x^n)\mathbf{F}_n(x^n)\underline{\alpha} + \underline{\alpha}^{\mathrm{T}} \mathbf{F}(x_Q)\underline{\alpha}.
\end{align*}
\end{proof}

Write $\mathsf{F}(x^n)$ as
\begin{align*}
    \mathsf{F}(x^n) = \frac1{n^2}f(x^n) \tilde{\mathbf{A}}(x^n)f^\mathrm{T}(x^n),
\end{align*}
where
\begin{equation*}
    \tilde{\mathbf{A}}(x^n) = \textsf{diag}\{\textsf{diag}\{P_{Y|X=x_1}\},\dots,\textsf{diag}\{P_{Y|X=x_n}\} \} + \sum_{i=1}^n\sum_{j=1}^n v_iv_j^\mathrm{T} - \sum_i v_iv_i^\mathrm{T},
\end{equation*}
with $\textsf{diag}\{P_{Y|X=x_i}\} = \textsf{diag}\{P_{Y|X}(1|x_i),\dots,P_{Y|X}(M|x_i)\}$, and
\begin{equation*}
    v_i = \left[0,\dots,0,P_{Y|X}(1|x_i),\dots, P_{Y|X}(M|x_i),0\dots,0\right]^\mathrm{T}.
\end{equation*}

\begin{lemma}\label{lem:aFa_comple_insufficient}
For an insufficient demonstrations and complete model, it holds that
\begin{align*}
& \frac1{n^2}\mathrm{tr}\left\{ \mathbf{F}_n^{\dagger}(x^n)\mathbf{F}(x_Q)\mathbf{F}_n^{\dagger}(x^n) f(x^n)\left(\sum_{i,j}v_iv_j^\mathrm{T}\right) f^\mathrm{T}(x^n)\right\}\\
& = \left(\mathbf{F}_n^{\dagger}(x^n)\mathbf{F}_n(x^n)\underline{\alpha}\right)^\mathrm{T}\mathbf{F}(x_Q) \mathbf{F}_n^{\dagger}(x^n)\mathbf{F}_n(x^n)\underline{\alpha}.
\end{align*}
\end{lemma}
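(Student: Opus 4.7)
The plan is to mirror the argument used for the sufficient case in Lemma \ref{lem:aFa_comple}, the only difference being that every occurrence of $\mathbf{F}_n^{-1}(x^n)\mathbf{F}_n(x^n)$, which collapsed to the identity there, must now be carried through as the symmetric projector $\mathbf{F}_n^{\dagger}(x^n)\mathbf{F}_n(x^n)$. I expect the algebraic skeleton to be identical, with the final step relying on properties of the Moore--Penrose pseudo-inverse rather than invertibility.

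First I would rewrite the inner quantity by grouping the sum over $i,j$ as an outer product:
\begin{equation*}
\frac{1}{n^2}f(x^n)\Big(\sum_{i,j}v_iv_j^{\mathrm T}\Big)f^{\mathrm T}(x^n) = \Big(\tfrac{1}{n}\sum_i f(x^n)v_i\Big)\Big(\tfrac{1}{n}\sum_j f(x^n)v_j\Big)^{\mathrm T}.
\end{equation*}
As already computed in the proof of Lemma \ref{lem:aFa_comple}, each factor satisfies
\begin{equation*}
\tfrac{1}{n}\sum_i f(x^n)v_i = \tfrac{1}{n}\sum_{i,y} P_{Y|X}(y|x_i)\, f(x_i,y),
\end{equation*}
and invoking the complete-model hypothesis $P_{Y|X}(y|x)=\underline{\alpha}^{\mathrm T} f(x,y)$ converts this into $\mathbf{F}_n(x^n)\underline{\alpha}$. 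Hence the trace on the left-hand side becomes $\mathrm{tr}\{\mathbf{F}_n^{\dagger}(x^n)\mathbf{F}(x_Q)\mathbf{F}_n^{\dagger}(x^n)\mathbf{F}_n(x^n)\underline{\alpha}\,\underline{\alpha}^{\mathrm T}\mathbf{F}_n(x^n)\}$.

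Next, I would apply the cyclic property of the trace and collect this into a quadratic form. The target identity requires rewriting the result as $\underline{\alpha}^{\mathrm T}\mathbf{F}_n(x^n)\mathbf{F}_n^{\dagger}(x^n)\mathbf{F}(x_Q)\mathbf{F}_n^{\dagger}(x^n)\mathbf{F}_n(x^n)\underline{\alpha}$, which is already of the form $(\mathbf{F}_n^{\dagger}\mathbf{F}_n\underline{\alpha})^{\mathrm T}\mathbf{F}(x_Q)(\mathbf{F}_n^{\dagger}\mathbf{F}_n\underline{\alpha})$ provided that $\mathbf{F}_n(x^n)\mathbf{F}_n^{\dagger}(x^n)=(\mathbf{F}_n^{\dagger}(x^n)\mathbf{F}_n(x^n))^{\mathrm T}$. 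This follows because $\mathbf{F}_n(x^n)$ is symmetric positive semi-definite, so its pseudo-inverse is symmetric and commutes with it, and the orthogonal projector $\mathbf{F}_n^{\dagger}\mathbf{F}_n$ onto the range of $\mathbf{F}_n$ is symmetric.

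The main obstacle, and really the only non-routine point, is this last bookkeeping step: one must be careful that the two matrices $\mathbf{F}_n^{\dagger}\mathbf{F}_n$ flanking $\underline{\alpha}$ on the right-hand side correspond correctly to the two factors of $\mathbf{F}_n\mathbf{F}_n^{\dagger}$ produced from the left-hand side after cycling the trace, which relies squarely on the symmetry of $\mathbf{F}_n(x^n)$. Beyond that, the calculation is purely mechanical and reduces to the sufficient-case argument once invertibility is replaced by the pseudo-inverse identities.
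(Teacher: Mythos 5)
Your proposal is correct and follows essentially the same route as the paper's proof: factor the double sum into the outer product of $\tfrac1n\sum_i f(x^n)v_i = \mathbf{F}_n(x^n)\underline{\alpha}$ with itself (using completeness), then cycle the trace and use the symmetry of $\mathbf{F}_n(x^n)$ and $\mathbf{F}_n^{\dagger}(x^n)$ to recognize the quadratic form. If anything, you are more explicit than the paper about the final bookkeeping step, namely that $(\mathbf{F}_n^{\dagger}\mathbf{F}_n)^{\mathrm T}=\mathbf{F}_n\mathbf{F}_n^{\dagger}=\mathbf{F}_n^{\dagger}\mathbf{F}_n$ for the symmetric PSD matrix $\mathbf{F}_n(x^n)$, which the paper asserts without comment.
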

\begin{proof}
We have that
\begin{align*}
& \frac1{n^2}\mathrm{tr}\left\{ \mathbf{F}_n^{\dagger}(x^n)\mathbf{F}(x_Q)\mathbf{F}_n^{\dagger}(x^n) f(x^n)\left(\sum_{i,j}v_iv_j^\mathrm{T}\right) f^\mathrm{T}(x^n)\right\}\\
& = \mathrm{tr}\left\{ \mathbf{F}_n^{\dagger}(x^n)\mathbf{F}(x_Q)\mathbf{F}_n^{\dagger}(x^n)\left(\frac1n\sum_{i,y}P_{Y|X}(y|x_i) f(x_i, y)\right)\left(\frac1n\sum_{j,y} P_{Y|X}(y|x_j)f(x_j,y)\right)^\mathrm{T}\right\}\\
\end{align*}
From model \ref{equ:model}, we have that for a complete model, $P_{Y|X}(y|x) = \underline{\alpha}^\mathrm{T}f(x,y)$. Hence, we have that
\begin{align*}
& \mathrm{tr}\left\{ \mathbf{F}_n^{\dagger}(x^n)\mathbf{F}(x_Q)\mathbf{F}_n^{\dagger}(x^n)\left(\frac1n\sum_{i,y}P_{Y|X}(y|x_i) f(x_i, y)\right)\left(\frac1n\sum_{j,y} P_{Y|X}(y|x_j) f(x_j,y)\right)^\mathrm{T}\right\}\\
& = \mathrm{tr}\left\{\mathbf{F}_n^{\dagger}(x^n) \mathbf{F}(x_Q)\mathbf{F}_n^{\dagger}(x^n)\left(\frac1n\sum_{i,y} f(x_i, y)f^\mathrm{T}(x_i, y) \underline{\alpha} \right)\left(\frac1n\sum_{j,y} f(x_j,y)f^\mathrm{T}(x_j,y)\underline{\alpha}\right)^\mathrm{T}\right\}\\
& = \mathrm{tr}\left\{\mathbf{F}_n^{\dagger}(x^n) \mathbf{F}(x_Q)\mathbf{F}_n^{\dagger}(x^n)\mathbf{F}_n(x^n) \underline{\alpha} \left(\mathbf{F}_n(x^n)\underline{\alpha}\right)^\mathrm{T}\right\}\\
& = \left(\mathbf{F}_n^{\dagger}(x^n)\mathbf{F}_n(x^n)\underline{\alpha}\right)^\mathrm{T}\mathbf{F}(x_Q) \mathbf{F}_n^{\dagger}(x^n)\mathbf{F}_n(x^n)\underline{\alpha}.\\
\end{align*}
This completes the proof.
\end{proof}

Hence, we have the excessive risk in a simplified form.
\begin{corollary}\label{col:excessive_risk_complete_insufficient_simple}
Let 
\[
\textbf{A}(x_i) = \mathsf{diag}\{P_{Y|X}(1|x_i),\dots,P_{Y|X}(M|x_i)\}-\phi_i\phi_i^\mathrm{T},
\]
where $\phi_i = [P_{Y|X}(1|x_i),\dots,P_{Y|X}(M|x_i)]^\mathrm{T}$, and let
\begin{equation*}
\textbf{A}(x^n) = \mathsf{diag}\left\{ \textbf{A}(x_1), \ldots, \textbf{A}(x_n) \right\},
\end{equation*}
it holds that
\begin{align*}
& \mathbb{E}_{P_{Y^n|X^n}}\left[ 
\ell \left(x^n,Y^n;x_Q\right) | X^n = x^n\right]\\
& = \mathrm{tr}\left\{ \frac1{n^2}\mathbf{F}_n^{\dagger}(x^n)\mathbf{F}(x_Q)\mathbf{F}_n^{\dagger}(x^n) f(x^n) {\mathbf{A}}(x^n) f^\mathrm{T}(x^n) \right\} +
\left\| f ^{\mathrm{T}}(x_Q)\mathbf{F}_n^\perp(x^n)\underline{\alpha}\right\|^2 ,
\end{align*}
where $\mathbf{F}_n^{\perp}(x^n) = \textbf{I}_K-\mathbf{F}_n^\dagger(x^n) \mathbf{F}_n(x^n)$.
\end{corollary}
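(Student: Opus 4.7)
The starting point is Corollary \ref{col:excessive_risk_insufficient}, which already decomposes the excessive risk into three pieces, together with the explicit formula $\mathsf{F}(x^n) = \frac{1}{n^2} f(x^n)\tilde{\mathbf{A}}(x^n) f^{\mathrm{T}}(x^n)$ and the split $\tilde{\mathbf{A}}(x^n) = \mathbf{A}(x^n) + \sum_{i,j} v_i v_j^{\mathrm{T}}$ (the $-\sum_i v_i v_i^{\mathrm{T}}$ correction in $\tilde{\mathbf{A}}$ is absorbed when passing from the block-diagonal of $\mathsf{diag}\{P_{Y|X=x_i}\}$ to $\mathbf{A}(x_i) = \mathsf{diag}\{P_{Y|X=x_i}\} - \phi_i\phi_i^{\mathrm{T}}$). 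I would substitute this decomposition into the first trace term; this immediately isolates the $\frac{1}{n^2}\mathrm{tr}\{\mathbf{F}_n^{\dagger}(x^n)\mathbf{F}(x_Q)\mathbf{F}_n^{\dagger}(x^n) f(x^n)\mathbf{A}(x^n) f^{\mathrm{T}}(x^n)\}$ piece appearing in the corollary. The remaining task reduces to showing that the contribution from $\sum_{i,j} v_i v_j^{\mathrm{T}}$ together with $-2\underline{\alpha}^{\mathrm{T}}\mathbf{F}(x_Q)\mathbf{F}_n^{\dagger}(x^n)\mathbf{F}_n(x^n)\underline{\alpha} + \underline{\alpha}^{\mathrm{T}}\mathbf{F}(x_Q)\underline{\alpha}$ sums to $\| f^{\mathrm{T}}(x_Q)\mathbf{F}_n^{\perp}(x^n)\underline{\alpha}\|^2$.

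Applying Lemma \ref{lem:aFa_comple_insufficient} rewrites the $\sum_{i,j} v_i v_j^{\mathrm{T}}$ piece as $(\mathbf{F}_n^{\dagger}(x^n)\mathbf{F}_n(x^n)\underline{\alpha})^{\mathrm{T}} \mathbf{F}(x_Q) \mathbf{F}_n^{\dagger}(x^n)\mathbf{F}_n(x^n) \underline{\alpha}$. Writing $P = \mathbf{F}_n^{\dagger}(x^n)\mathbf{F}_n(x^n)$, the residual scalar becomes
$$
\underline{\alpha}^{\mathrm{T}} P \mathbf{F}(x_Q) P \underline{\alpha} \;-\; 2\underline{\alpha}^{\mathrm{T}} \mathbf{F}(x_Q) P \underline{\alpha} \;+\; \underline{\alpha}^{\mathrm{T}} \mathbf{F}(x_Q)\underline{\alpha}.
$$

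Because $\mathbf{F}_n(x^n)$ is symmetric positive semi-definite, $P$ is the orthogonal projection onto its range, so $P^{\mathrm{T}} = P$, $P^2 = P$, and $\mathbf{F}_n^{\perp}(x^n) = I - P$ is also a symmetric projection. I would view the displayed expression as a quadratic form in $\underline{\alpha}$ and symmetrize the mixed term $-2\underline{\alpha}^{\mathrm{T}}\mathbf{F}(x_Q) P\underline{\alpha}$ as $-\underline{\alpha}^{\mathrm{T}}[\mathbf{F}(x_Q) P + P\mathbf{F}(x_Q)]\underline{\alpha}$; the overall kernel then becomes $P\mathbf{F}(x_Q)P - P\mathbf{F}(x_Q) - \mathbf{F}(x_Q)P + \mathbf{F}(x_Q) = (I-P)\mathbf{F}(x_Q)(I-P)$. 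Plugging in $\mathbf{F}(x_Q) = f(x_Q)f^{\mathrm{T}}(x_Q)$ and using the symmetry of $\mathbf{F}_n^{\perp}$ yields exactly $\|f^{\mathrm{T}}(x_Q)\mathbf{F}_n^{\perp}(x^n)\underline{\alpha}\|^2$.

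The main subtlety is that $P$ and $\mathbf{F}(x_Q)$ do not commute in general, so one cannot literally factor the expression as $((I-P)\underline{\alpha})^{\mathrm{T}}\mathbf{F}(x_Q)((I-P)\underline{\alpha})$ by a naive cancellation of cross terms; the rescue is the observation that only the symmetric part of the kernel contributes to a quadratic form, after which the factorization through $(I-P)\mathbf{F}(x_Q)(I-P)$ becomes automatic. Everything else is routine bookkeeping that inherits from the definitions and intermediate results already established in Corollary \ref{col:excessive_risk_insufficient} and Lemma \ref{lem:aFa_comple_insufficient}.
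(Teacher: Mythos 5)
Your proposal is correct and follows essentially the same route as the paper: start from Corollary \ref{col:excessive_risk_insufficient}, substitute $\mathsf{F}(x^n)=\frac1{n^2}f(x^n)\tilde{\mathbf{A}}(x^n)f^{\mathrm{T}}(x^n)$ with $\tilde{\mathbf{A}}(x^n)=\mathbf{A}(x^n)+\sum_{i,j}v_iv_j^{\mathrm{T}}$, apply Lemma \ref{lem:aFa_comple_insufficient} to the cross term, and collapse the remaining quadratic in $\underline{\alpha}$ via the symmetric projection $P=\mathbf{F}_n^{\dagger}(x^n)\mathbf{F}_n(x^n)$ into $\|f^{\mathrm{T}}(x_Q)(I-P)\underline{\alpha}\|^2$. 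Your explicit symmetrization argument for the kernel $(I-P)\mathbf{F}(x_Q)(I-P)$ is a step the paper leaves implicit, and you also correctly cite the insufficient-case corollary and lemma where the paper's proof text mistakenly references their sufficient-case counterparts.
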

\begin{proof}
    Based on the Corollary \ref{col:excessive_risk_complete} and Lemma \ref{lem:aFa_comple}, we have the excessive risk being
\begin{align*}
& \mathbb{E}_{P_{Y^n|X^n}}\left[ 
\ell \left(x^n,Y^n;x_Q\right) | X^n = x^n\right]\\
&= \mathrm{tr}\left\{ \mathbf{F}_n^{\dagger}(x^n)\mathbf{F}(x_Q)\mathbf{F}_n^{\dagger}(x^n) \mathsf{F}(x^n) \right\} - 2\underline{\alpha}^\mathrm{T}\mathbf{F}(x_Q)\mathbf{F}_n^{\dagger}(x^n)\mathbf{F}_n(x^n)\underline{\alpha} + \underline{\alpha}^{\mathrm{T}} \mathbf{F}(x_Q)\underline{\alpha}.\\
& = \mathrm{tr}\left\{ \frac1{n^2}\mathbf{F}_n^{-1}(x^n)\mathbf{F}(x_Q)\mathbf{F}_n^{-1}(x^n) f(x^n) \left(\tilde{\mathbf{A}}(x^n) - \sum_{i,j}v_iv_j^\mathrm{T}\right)f^\mathrm{T}(x^n) \right\}\\
& \quad + \left(\mathbf{F}_n^{\dagger}(x^n)\mathbf{F}_n(x^n)\underline{\alpha}\right)^\mathrm{T}\mathbf{F}(x_Q)\mathbf{F}_n^{\dagger}(x^n)\mathbf{F}_n(x^n)\underline{\alpha} - 2\underline{\alpha}^\mathrm{T}\mathbf{F}(x_Q)\mathbf{F}_n^{\dagger}(x^n)\mathbf{F}_n(x^n)\underline{\alpha} + \underline{\alpha}^{\mathrm{T}} \mathbf{F}(x_Q)\underline{\alpha}..
\end{align*}
Note that $\tilde{\mathbf{A}}(x^n) = \sum_{i,j}v_iv_j^\mathrm{T} + \mathbf{A}(x^n)$ and $\mathbf{F}(x_Q) = f(x_Q)f^\mathrm{T}(x_Q)$, we have that
\begin{align*}
& \mathbb{E}_{P_{Y^n|X^n}}\left[ 
\ell \left(x^n,Y^n;x_Q\right) | X^n = x^n\right]\\
& = \mathrm{tr}\left\{ \frac1{n^2}\mathbf{F}_n^{\dagger}(x^n)\mathbf{F}(x_Q)\mathbf{F}_n^{\dagger}(x^n) f(x^n) {\mathbf{A}}(x^n) f^\mathrm{T}(x^n) \right\} +
\left\| f ^{\mathrm{T}}(x_Q)\mathbf{F}_n^\perp(x^n)\underline{\alpha}\right\|^2.
\end{align*}
\end{proof}

\begin{lemma}\label{lem:better_ub_complete_insufficient}
For sufficient prompt demonstrations and complete model, the excessive risk can is bounded by
\begin{equation*}
    \mathbb{E}_{P_{Y^n|X^n}}\left[ 
\ell \left(x^n,Y^n;x_Q\right) | X^n = x^n\right] \leq \frac1n \lambda_1\left(\mathbf{F}(x_Q)\mathbf{F}_n^{\dagger}(x^n) \right) S_K({\textbf{A}}(x^n))+ \left\| f ^{\mathrm{T}}(x_Q)\mathbf{F}_n^\perp(x^n)\underline{\alpha}\right\|^2,
\end{equation*}
where $S_K({\textbf{A}}(x^n))$ is the sum of top K eigenvalues.
\end{lemma}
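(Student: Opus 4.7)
The plan is to mirror the proof of Lemma \ref{lem:better_ub_complete}, but with the genuine inverse $\mathbf{F}_n^{-1/2}(x^n)$ replaced by a symmetric square root $C \triangleq (\mathbf{F}_n^{\dagger}(x^n))^{1/2}$, which exists because $\mathbf{F}_n^{\dagger}(x^n)$ is PSD. Starting from Corollary \ref{col:excessive_risk_complete_insufficient_simple}, the residual term $\|f^{\mathrm T}(x_Q)\mathbf{F}_n^{\perp}(x^n)\underline{\alpha}\|^2$ is already in the desired form and can be carried through untouched, so it remains to bound the trace term. Applying the factorization $\mathbf{F}_n^{\dagger}(x^n) = C\cdot C$ together with cyclicity of the trace, I rewrite this term as $\frac1n\mathrm{tr}\{M_1 M_2\}$, where $M_1 \triangleq C\mathbf{F}(x_Q)C$ and $M_2 \triangleq B\mathbf{A}(x^n)B^{\mathrm T}$, with $B \triangleq \frac{1}{\sqrt{n}} C f(x^n)$. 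Both $M_1$ and $M_2$ are Hermitian, so Von Neumann's trace inequality yields $\mathrm{tr}\{M_1 M_2\} \leq \sum_{i=1}^K \lambda_i(M_1)\lambda_i(M_2)$.

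The heart of the argument — and the main obstacle — is to show $\lambda_i(M_2) \leq \lambda_i(\mathbf{A}(x^n))$ for all $i$. In the sufficient case this was immediate because $\frac{1}{\sqrt{n}}\mathbf{F}_n^{-1/2}(x^n)f(x^n)$ was exactly orthogonal; in the insufficient case the best I can hope for is that $B$ is a \emph{partial isometry}. To see this, I use an SVD of $f(x^n) = \sqrt{n}\,U\Sigma V^{\mathrm T}$ so that $\mathbf{F}_n(x^n) = U\Sigma\Sigma^{\mathrm T}U^{\mathrm T}$, from which $B = U J V^{\mathrm T}$ with $J$ having $\mathrm{rank}(\mathbf{F}_n(x^n))$ ones and the rest zeros. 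Hence $BB^{\mathrm T}$ and $B^{\mathrm T}B$ are both orthogonal projections, so $B^{\mathrm T}B \preceq I$. By the Loewner monotonicity this gives $\mathbf{A}^{1/2}(x^n) B^{\mathrm T}B\, \mathbf{A}^{1/2}(x^n) \preceq \mathbf{A}(x^n)$ (well-defined since $\mathbf{A}(x^n)\succeq 0$ was established inside the proof of Lemma \ref{lem:better_ub_complete}), and Weyl's monotonicity combined with the fact that $B\mathbf{A}(x^n)B^{\mathrm T}$ and $\mathbf{A}^{1/2}(x^n)B^{\mathrm T}B\mathbf{A}^{1/2}(x^n)$ share their nonzero eigenvalues yields $\lambda_i(M_2)\leq \lambda_i(\mathbf{A}(x^n))$ as required.

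For the $M_1$ factor, cyclicity of eigenvalues of PSD products gives $\lambda_i(M_1) = \lambda_i(C\mathbf{F}(x_Q)C) = \lambda_i(\mathbf{F}(x_Q)\mathbf{F}_n^{\dagger}(x^n))$, so in particular $\lambda_i(M_1) \leq \lambda_1(\mathbf{F}(x_Q)\mathbf{F}_n^{\dagger}(x^n))$. Combining the two eigenvalue bounds and using the nonnegativity of the $\lambda_i(\mathbf{A}(x^n))$ to justify pulling the scalar $\lambda_1(M_1)$ out of the sum gives
\begin{equation*}
\frac1n\mathrm{tr}\{M_1 M_2\} \;\leq\; \frac1n\,\lambda_1\!\left(\mathbf{F}(x_Q)\mathbf{F}_n^{\dagger}(x^n)\right)\sum_{i=1}^K \lambda_i(\mathbf{A}(x^n)) \;=\; \frac1n\,\lambda_1\!\left(\mathbf{F}(x_Q)\mathbf{F}_n^{\dagger}(x^n)\right) S_K(\mathbf{A}(x^n)).
\end{equation*}
Adding back the untouched $\|f^{\mathrm T}(x_Q)\mathbf{F}_n^{\perp}(x^n)\underline{\alpha}\|^2$ term yields the claim. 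The only subtlety worth double-checking during write-up is that the nonzero spectra of $B\mathbf{A}(x^n)B^{\mathrm T}$ and $\mathbf{A}^{1/2}(x^n)B^{\mathrm T}B\mathbf{A}^{1/2}(x^n)$ really do coincide, which follows from the standard identity $\mathrm{spec}(XY)\setminus\{0\} = \mathrm{spec}(YX)\setminus\{0\}$ applied with $X = B\mathbf{A}^{1/2}(x^n)$.
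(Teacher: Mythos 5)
Your proof follows essentially the same route as the paper's: start from Corollary \ref{col:excessive_risk_complete_insufficient_simple}, carry the residual term $\left\| f^{\mathrm{T}}(x_Q)\mathbf{F}_n^\perp(x^n)\underline{\alpha}\right\|^2$ through untouched, factor the trace term through the symmetric square root of $\mathbf{F}_n^{\dagger}(x^n)$, and apply Von Neumann's trace inequality. The one place you are actually more careful than the paper is the eigenvalue comparison for $M_2$: the paper asserts that the eigenvalues of $\frac{1}{n}\mathbf{F}_n^\dagger(x^n)^{1/2} f(x^n)\mathbf{A}(x^n) f^{\mathrm{T}}(x^n)\mathbf{F}_n^\dagger(x^n)^{1/2}$ \emph{equal} those of $\mathbf{A}(x^n)$ because the relevant Gram matrix is a projection, which cannot be literally true (the two matrices do not even have the same size); your partial-isometry argument establishing only $\lambda_i(M_2)\le\lambda_i(\mathbf{A}(x^n))$ is the correct statement and is all the subsequent bound requires.
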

\begin{proof}
Denote $\mathbf{F}_n^\dagger(x^n)^{1/2}$ as $\mathbf{F}_n^{\dagger}(x^n) = \mathbf{F}_n^\dagger(x^n)^{1/2}\mathbf{F}_n^\dagger(x^n)^{1/2}$. Then, by Corollary \ref{col:excessive_risk_complete_insufficient_simple}, the excessive risk can be written as
\begin{align*}
& \mathbb{E}_{P_{Y^n|X^n}}\left[ 
\ell \left(x^n,Y^n;x_Q\right) | X^n = x^n\right]\\
& = \mathrm{tr}\left\{ \frac1{n^2}\mathbf{F}_n^\dagger(x^n)^{1/2}(x^n)\mathbf{F}(x_Q)\mathbf{F}_n^\dagger(x^n)^{1/2}\mathbf{F}_n^\dagger(x^n)^{1/2} f(x^n) {\mathbf{A}}(x^n) f^\mathrm{T}(x^n)\mathbf{F}_n^\dagger(x^n)^{1/2} \right\}\\
&\quad + \left\| f ^{\mathrm{T}}(x_Q)\mathbf{F}_n^\perp(x^n)\underline{\alpha}\right\|^2.
\end{align*}
From Von Neumann's trace inequality \citep{mirsky1975trace}, denote the $i$-th large eigenvalue for one matrix as $\lambda_i(\cdot)$, since both matrices $\mathbf{F}_n^\dagger(x^n)^{1/2}\mathbf{F}(x_Q)\mathbf{F}_n^\dagger(x^n)^{1/2}$ and $\mathbf{F}_n^\dagger(x^n)^{1/2}(x^n) f(x^n){\mathbf{A}}(x^n) f^\mathrm{T}(x^n)\mathbf{F}_n^\dagger(x^n)^{1/2}$ are Hermitian, we have that
\begin{align*}
& \mathrm{tr}\left\{ \frac1{n^2}\mathbf{F}_n^\dagger(x^n)^{1/2}\mathbf{F}(x_Q)\mathbf{F}_n^\dagger(x^n)^{1/2} \mathbf{F}_n^\dagger(x^n)^{1/2}f(x^n){\mathbf{A}}(x^n) f^\mathrm{T}(x^n)\mathbf{F}_n^\dagger(x^n)^{1/2} \right\}\\
& \leq \frac1n \sum_{i=1}^K \lambda_i\left(\mathbf{F}_n^\dagger(x^n)^{1/2}\mathbf{F}(x_Q)\mathbf{F}_n^\dagger(x^n)^{1/2}\right) \lambda_i\left(\frac1{\sqrt{n}}\mathbf{F}_n^\dagger(x^n)^{1/2} f(x^n){\mathbf{A}}(x^n) f^\mathrm{T}(x^n)\mathbf{F}_n^\dagger(x^n)^{1/2}\frac1{\sqrt{n}}\right).
\end{align*}
Since that
\begin{align*}
& \left(\frac1{\sqrt{n}}\mathbf{F}_n^\dagger(x^n)^{1/2} f(x^n)\right)\left(\frac1{\sqrt{n}}\mathbf{F}_n^\dagger(x^n)^{1/2} f(x^n)\right)^\mathrm{T}\\
& = \frac1{\sqrt{n}}\mathbf{F}_n^\dagger(x^n)^{1/2} f(x^n)f^\mathrm{T}(x^n)\mathbf{F}_n^\dagger(x^n)^{1/2}\frac1{\sqrt{n}}\\
& = \mathbf{F}_n^\dagger(x^n)^{1/2}\mathbf{F}_n(x^n)\mathbf{F}_n^\dagger(x^n)^{1/2}\\
\end{align*}
is a projection matrix, the eigenvalues of matrix $\frac1{\sqrt{n}}\mathbf{F}_n^\dagger(x^n)^{1/2} f(x^n){\mathbf{A}}(x^n) f^\mathrm{T}(x^n)\mathbf{F}_n^\dagger(x^n)^{1/2}\frac1{\sqrt{n}}$ equals to that of $\mathbf{A}(x^n)$, i.e.,
\begin{equation*}
\lambda_i\left(\frac1{\sqrt{n}}\mathbf{F}_n^\dagger(x^n)^{1/2} f(x^n){\mathbf{A}}(x^n) f^\mathrm{T}(x^n)\mathbf{F}_n^\dagger(x^n)^{1/2}\frac1{\sqrt{n}}\right) = \lambda_i\left(\mathbf{A}(x^n)\right).
\end{equation*}

As is proved in Lemma \ref{lem:better_ub_complete}, all eigenvalues of $\mathbf{A}(x^n)$ are non-nagetive, makeing that
\begin{align*}
& \mathrm{tr}\left\{ \frac1{n^2}\mathbf{F}_n^\dagger(x^n)^{1/2}\mathbf{F}(x_Q)\mathbf{F}_n^\dagger(x^n)^{1/2} \mathbf{F}_n^\dagger(x^n)^{1/2} f(x^n){\mathbf{A}}(x^n) f^\mathrm{T}(x^n)\mathbf{F}_n^\dagger(x^n)^{1/2} \right\}\\
& \leq \frac1n \sum_{i=1}^K \lambda_i\left(\mathbf{F}_n^\dagger(x^n)^{1/2}\mathbf{F}(x_Q)\mathbf{F}_n^\dagger(x^n)^{1/2}\right) \lambda_i\left(\frac1{\sqrt{n}}\mathbf{F}_n^\dagger(x^n)^{1/2} f(x^n){\mathbf{A}}(x^n) f^\mathrm{T}(x^n)\mathbf{F}_n^\dagger(x^n)^{1/2}\frac1{\sqrt{n}}\right)\\
& \leq \frac1n \lambda_1\left(\mathbf{F}_n^\dagger(x^n)^{1/2}\mathbf{F}(x_Q)\mathbf{F}_n^\dagger(x^n)^{1/2}\right) \sum_{i=1}^K \lambda_i\left(\mathbf{A}(x^n)\right)\\
& = \frac1n \lambda_1\left(\mathbf{F}(x_Q)\mathbf{F}_n^{\dagger}(x^n) \right) S_K({\textbf{A}}(x^n)).
\end{align*}
\end{proof}

Based on Lemma \ref{lem:better_ub_complete_insufficient}, we simply prove the Theorem \ref{thm:complete_insufficient} by 
\begin{equation*}
    S_K({\textbf{A}}(x^n)) \leq K \lambda_1(\textbf{A}(x^n)).
\end{equation*}

\subsection{Proof of Theorem \ref{thm:incomplete_insufficient}}\label{append:theorem3_proof}

Similar to Lemma \ref{lem:alpha_mean_insufficient}, we have the expectation of $\underline{\alpha}$ under insufficient demonstrations with incomplete model.
\begin{lemma}\label{lem:alpha_mean_incomplete_insufficient}
For insufficient prompt demonstrations, i.e., $\mathbf{F}_n(x^n)$ is not invertible, and incomplete model, i.e. $R(x,y)\neq 0$, it holds that for all $x^n$,
\begin{equation*}
    \mathbb{E}_{P_{Y^n|X^n}}[\underline{\hat{\alpha}}(x^n,Y^n) | X^n=x^n] = \mathbf{F}_n^{\dagger}(x^n)\mathbf{F}_n(x^n)\underline{\alpha} + \frac1n \mathbf{F}_n^{\dagger}(x^n) f(x^n) \mathsf{R}(x^n).
\end{equation*}
\end{lemma}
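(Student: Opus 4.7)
The plan is to adapt the computation in Lemma \ref{lem:alpha_mean_insufficient} by retaining the residual term $R(x,y)$ rather than discarding it. Starting from the definition $\underline{\hat{\alpha}}(x^n, Y^n) = \mathbf{F}_n^{\dagger}(x^n) \bar{f}_n(x^n, Y^n)$ and using the conditional independence of $Y_1,\ldots,Y_n$ given $X^n = x^n$, I would pull the deterministic pseudo-inverse factor out of the expectation and reduce the problem to evaluating
\begin{equation*}
\mathbf{F}_n^{\dagger}(x^n)\cdot \frac1n \sum_{i=1}^n \sum_y P_{Y|X}(y|x_i)\, f(x_i,y).
\end{equation*}

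Next I would substitute the model decomposition (\ref{equ:model}), namely $P_{Y|X}(y|x_i) = \underline{\alpha}^{\mathrm{T}} f(x_i,y) + R(x_i,y)$, and split the inner double sum into two parts. The first part gives $\frac1n \sum_i \sum_y f(x_i,y) f^{\mathrm{T}}(x_i,y)\underline{\alpha} = \mathbf{F}_n(x^n)\underline{\alpha}$, which after left-multiplication by $\mathbf{F}_n^{\dagger}(x^n)$ yields the term $\mathbf{F}_n^{\dagger}(x^n)\mathbf{F}_n(x^n)\underline{\alpha}$ of the claim, exactly as in the complete-model case of Lemma \ref{lem:alpha_mean_insufficient}. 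Note that here we cannot further collapse $\mathbf{F}_n^{\dagger}(x^n)\mathbf{F}_n(x^n)$ to the identity, because we are explicitly in the insufficient regime where $\mathbf{F}_n(x^n)$ is rank-deficient.

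The second part is $\frac1n \sum_i \sum_y f(x_i,y) R(x_i,y)$. The key bookkeeping step is to recognize that, under the block definitions $f(x^n) = [f(x_1),\ldots,f(x_n)] \in \mathbb{R}^{K\times nM}$ and $\mathsf{R}(x^n) = [\mathsf{R}^{\mathrm{T}}(x_1),\ldots,\mathsf{R}^{\mathrm{T}}(x_n)]^{\mathrm{T}} \in \mathbb{R}^{nM}$, this scalar double sum is precisely the matrix-vector product $f(x^n)\mathsf{R}(x^n)$. Multiplying on the left by $\mathbf{F}_n^{\dagger}(x^n)$ then delivers the residual contribution $\tfrac1n \mathbf{F}_n^{\dagger}(x^n) f(x^n)\mathsf{R}(x^n)$ claimed in the lemma. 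I do not anticipate any genuine obstacle: the only subtlety is aligning the block-indexing conventions so the product $f(x^n)\mathsf{R}(x^n)$ matches the scalar double sum. In particular, unlike in the analyses that later average over $P_X$ (where the orthogonality $\sum_{x,y} P_X(x) f_k(x,y) R(x,y) = 0$ is invoked), here the expectation is taken only over $Y^n$ at a fixed $x^n$, so the residual contribution does not vanish and must be tracked explicitly through the $f(x^n)\mathsf{R}(x^n)$ term.
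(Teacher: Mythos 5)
Your proposal is correct and follows essentially the same route as the paper's proof: expand $\bar f_n$ under the conditional expectation, substitute $P_{Y|X}(y|x_i)=\underline{\alpha}^{\mathrm{T}}f(x_i,y)+R(x_i,y)$, and identify the two resulting sums as $\mathbf{F}_n(x^n)\underline{\alpha}$ and $\tfrac1n f(x^n)\mathsf{R}(x^n)$ before left-multiplying by $\mathbf{F}_n^{\dagger}(x^n)$. Your remarks that $\mathbf{F}_n^{\dagger}\mathbf{F}_n$ cannot be collapsed to the identity and that the $P_X$-orthogonality of $R$ is unavailable at fixed $x^n$ are both accurate.
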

\begin{proof}
From definition of $\underline{\hat{\alpha}}$ in Eq. (\ref{equ:def_alpha}), we have that
\begin{align*}
\begin{aligned}
& \mathbb{E}_{P_{Y^n|X^n}}\left[ \underline{\hat{\alpha}}(x^n, Y^n) | X^n = x^n \right]\\ 
& = \mathbb{E}_{P_{Y^n|X^n}}\left[ \mathbf{F}_n^{\dagger}(x^n) \bar{f}_n(x^n,Y^n)  | X^n=x^n\right]\\
& = \mathbf{F}_n^{\dagger}(x^n)  \frac1n \sum_{i=1}^n \sum_y P_{Y\vert X}(y\vert x_i) f(x_i,y)\\
& = \mathbf{F}_n^{\dagger}(x^n) \frac1n\sum_{i=1}^n \sum_y f(x_i,y) f^\mathrm{T}(x_i,y) \underline{\alpha} + \mathbf{F}_n^{\dagger}(x^n)\frac1n\sum_{i=1}^n\sum_y f(x_i,y)R(x_i,y)\\
&=\mathbf{F}_n^{\dagger}(x^n)\mathbf{F}_n(x^n)\underline{\alpha}+ \frac1n \mathbf{F}_n^{\dagger}(x^n) f(x^n) \mathsf{R}(x^n).
\end{aligned}
\end{align*}
\end{proof}

\begin{corollary}\label{col:excessive_risk_incomplete_insufficient}
For insufficient prompt demonstrations and incomplete model, the excessive risk can be written as
\begin{align*}
& \mathbb{E}_{P_{Y^n|X^n}}\left[ 
\ell \left(x^n,Y^n;x_Q\right) | X^n = x^n\right]\\
& = \mathrm{tr}\left\{ \mathbf{F}_n^{\dagger}(x^n)\mathbf{F}(x_Q)\mathbf{F}_n^{\dagger}(x^n) \mathsf{F}(x^n) \right\}  - 2\underline{\alpha}^\mathrm{T}\mathbf{F}(x_Q)\mathbf{F}_n^{\dagger}(x^n)\mathbf{F}_n(x^n)\underline{\alpha} + \underline{\alpha}^{\mathrm{T}} \mathbf{F}(x_Q)\underline{\alpha}\\
& \quad -\frac2n \mathsf{R}^\mathrm{T}(x^n)f^\mathrm{T}(x^n)\mathbf{F}_n^{\dagger}(x^n)\mathbf{F}(x_Q)\underline{\alpha} - 2 \mathsf{R}^\mathrm{T}(x_Q)f^\mathrm{T}(x_Q)\mathbf{F}_n^\perp(x^n)\underline{\alpha}\\  
& \quad + \frac2n \mathsf{R}^\mathrm{T}(x_Q)f^\mathrm{T}(x_Q)\mathbf{F}_n^{\dagger}(x^n)f(x^n)\mathsf{R}(x^n) + \sum_y R^2(x_Q,y), 
\end{align*}
where $\mathsf{F}(x^n) = \mathbb{E}_{P_{Y^n|X^n}}\left[\bar{f}_n(x^n,Y^n) \bar{f}_n^\mathrm{T}(x^n,Y^n) | X^n=x^n\right]$.
\end{corollary}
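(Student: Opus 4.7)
The plan is to mirror the expansion used in Corollaries \ref{col:excessive_risk_complete} and \ref{col:excessive_risk_insufficient}, adjusting for the extra residual term. Under the incomplete model the pointwise gap becomes $\hat{P}_{Y|X}(y|x_Q) - P_{Y|X}(y|x_Q) = (\underline{\hat{\alpha}}-\underline{\alpha})^{\mathrm{T}} f(x_Q,y) - R(x_Q,y)$, so squaring and summing over $y$ splits $\ell(x^n,y^n;x_Q)$ into a quadratic form $(\underline{\hat{\alpha}}-\underline{\alpha})^{\mathrm{T}} \mathbf{F}(x_Q) (\underline{\hat{\alpha}}-\underline{\alpha})$, a new $R$-linear cross term $-2(\underline{\hat{\alpha}}-\underline{\alpha})^{\mathrm{T}} f(x_Q) \mathsf{R}(x_Q)$, and the deterministic residual $\sum_{y} R^2(x_Q,y)$.

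I would then take the conditional expectation $\mathbb{E}_{P_{Y^n|X^n}}[\,\cdot\,|X^n=x^n]$ of each piece. The quadratic form is handled as in Corollary \ref{col:excessive_risk_insufficient}: the second moment part yields $\mathrm{tr}\{\mathbf{F}_n^{\dagger}(x^n)\mathbf{F}(x_Q)\mathbf{F}_n^{\dagger}(x^n)\mathsf{F}(x^n)\}$ via $\mathbb{E}[\underline{\hat{\alpha}}\underline{\hat{\alpha}}^{\mathrm{T}}|X^n=x^n] = \mathbf{F}_n^{\dagger}(x^n)\mathsf{F}(x^n)\mathbf{F}_n^{\dagger}(x^n)$, and the linear-in-$\underline{\hat{\alpha}}$ and constant parts contribute $-2\mathbb{E}[\underline{\hat{\alpha}}]^{\mathrm{T}}\mathbf{F}(x_Q)\underline{\alpha} + \underline{\alpha}^{\mathrm{T}}\mathbf{F}(x_Q)\underline{\alpha}$. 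For the new $R$-linear cross term only the mean of $\underline{\hat{\alpha}}$ enters, so Lemma \ref{lem:alpha_mean_incomplete_insufficient} supplies everything that is needed: $\mathbb{E}[\underline{\hat{\alpha}}|X^n=x^n] = \mathbf{F}_n^{\dagger}(x^n)\mathbf{F}_n(x^n)\underline{\alpha} + \tfrac{1}{n}\mathbf{F}_n^{\dagger}(x^n) f(x^n) \mathsf{R}(x^n)$.

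Finally, I would substitute this closed form and use $\mathbf{F}_n^{\perp}(x^n) = \mathbf{I}_K - \mathbf{F}_n^{\dagger}(x^n)\mathbf{F}_n(x^n)$ together with the symmetry of $\mathbf{F}_n^{\dagger}(x^n)$ and $\mathbf{F}_n(x^n)$ to regroup. The $\underline{\alpha}$-portion of $\mathbb{E}[\underline{\hat{\alpha}}]$ entering the $R$-cross term produces the $\mathsf{R}^{\mathrm{T}}(x_Q) f^{\mathrm{T}}(x_Q) \mathbf{F}_n^{\perp}(x^n) \underline{\alpha}$ contribution, while its $\mathsf{R}(x^n)$-portion produces the $\tfrac{2}{n}\mathsf{R}^{\mathrm{T}}(x_Q) f^{\mathrm{T}}(x_Q) \mathbf{F}_n^{\dagger}(x^n) f(x^n) \mathsf{R}(x^n)$ contribution; the $\mathsf{R}(x^n)$-portion of $\mathbb{E}[\underline{\hat{\alpha}}]$ inside the $-2\mathbb{E}[\underline{\hat{\alpha}}]^{\mathrm{T}}\mathbf{F}(x_Q)\underline{\alpha}$ piece produces the remaining $-\tfrac{2}{n}\mathsf{R}^{\mathrm{T}}(x^n) f^{\mathrm{T}}(x^n) \mathbf{F}_n^{\dagger}(x^n) \mathbf{F}(x_Q) \underline{\alpha}$ term. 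The main obstacle is purely bookkeeping: tracking signs and the left/right placement of factors across several transpose operations so that all pieces assemble into the claimed form. No inequality is invoked at any stage, so the statement is an exact algebraic identity for the conditional excessive risk.
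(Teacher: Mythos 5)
Your strategy is the same as the paper's: expand the pointwise gap, take the conditional expectation term by term, use $\mathbb{E}[\underline{\hat{\alpha}}\underline{\hat{\alpha}}^{\mathrm{T}}|X^n=x^n]=\mathbf{F}_n^{\dagger}(x^n)\mathsf{F}(x^n)\mathbf{F}_n^{\dagger}(x^n)$ for the quadratic part, and substitute Lemma \ref{lem:alpha_mean_incomplete_insufficient} everywhere the mean of $\underline{\hat{\alpha}}$ appears. The provenance you assign to the trace term, to $-2\underline{\alpha}^{\mathrm{T}}\mathbf{F}(x_Q)\mathbf{F}_n^{\dagger}(x^n)\mathbf{F}_n(x^n)\underline{\alpha}+\underline{\alpha}^{\mathrm{T}}\mathbf{F}(x_Q)\underline{\alpha}$, and to $-\tfrac{2}{n}\mathsf{R}^{\mathrm{T}}(x^n)f^{\mathrm{T}}(x^n)\mathbf{F}_n^{\dagger}(x^n)\mathbf{F}(x_Q)\underline{\alpha}$ is all correct. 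But the ``bookkeeping'' you defer is precisely where the argument fails to land on the stated formula. With your gap $\hat{P}_{Y|X}(y|x_Q)-P_{Y|X}(y|x_Q)=(\underline{\hat{\alpha}}-\underline{\alpha})^{\mathrm{T}}f(x_Q,y)-R(x_Q,y)$ (which is the correct sign given Eq.~(\ref{equ:model})), the cross term is $-2\,\mathbb{E}[\underline{\hat{\alpha}}-\underline{\alpha}]^{\mathrm{T}}f(x_Q)\mathsf{R}(x_Q)$, and since $\mathbb{E}[\underline{\hat{\alpha}}-\underline{\alpha}\,|X^n=x^n]=-\mathbf{F}_n^{\perp}(x^n)\underline{\alpha}+\tfrac{1}{n}\mathbf{F}_n^{\dagger}(x^n)f(x^n)\mathsf{R}(x^n)$, this evaluates to $+2\,\mathsf{R}^{\mathrm{T}}(x_Q)f^{\mathrm{T}}(x_Q)\mathbf{F}_n^{\perp}(x^n)\underline{\alpha}-\tfrac{2}{n}\mathsf{R}^{\mathrm{T}}(x_Q)f^{\mathrm{T}}(x_Q)\mathbf{F}_n^{\dagger}(x^n)f(x^n)\mathsf{R}(x^n)$ — the opposite signs of the two $\mathsf{R}(x_Q)$ terms in the corollary. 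Carried out faithfully, your derivation therefore does not ``assemble into the claimed form.''

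The source of the mismatch is that the paper's own proof squares $(\underline{\hat{\alpha}}-\underline{\alpha})^{\mathrm{T}}f(x_Q,y)+R(x_Q,y)$, i.e.\ with $+R$ rather than $-R$; that convention reproduces the corollary as stated (and the subsequent bias term $\|f^{\mathrm{T}}(x_Q)\mathbf{F}_n^{\perp}(x^n)\underline{\alpha}-\mathsf{R}(x_Q)-\tfrac{1}{n}f^{\mathrm{T}}(x_Q)\mathbf{F}_n^{\dagger}(x^n)f(x^n)\mathsf{R}(x^n)\|^2$) but is inconsistent with the model equation. One of the two — the sign of $R$ in Eq.~(\ref{equ:model}) or the signs of the two $\mathsf{R}(x_Q)$ cross terms in the corollary — has to flip. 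The discrepancy is harmless for Theorem \ref{thm:incomplete_insufficient}, because both offending terms have the form $\mathsf{R}^{\mathrm{T}}(x_Q)f^{\mathrm{T}}(x_Q)v$ with $v$ independent of $x_Q$ and vanish after averaging over $P_X$ by the orthogonality of $R$; but to prove the corollary exactly as stated you must either adopt the $+R$ convention explicitly or correct those two signs, and your proposal as written does neither.
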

\begin{proof}
    From the definition of $\ell \left(x^n,y^n;x_Q\right)$, we have that (note that $\mathbf{F}_n^{-1}(x^n)$ is symmetric)
\begin{align*}
&\mathbb{E}_{P_{Y^n|X^n}}\left[ 
\ell \left(x^n,Y^n;x_Q\right) | X^n = x^n\right]\\
& = \mathbb{E}_{P_{Y^n|X^n}}\left[ \sum_y \left( \underline{\hat{\alpha}}^\mathrm{T} f(x_Q,y) - \underline{\alpha}^\mathrm{T}f(x_Q,y) + R(x_Q,y)\right)^2 | X^n=x^n\right]\\
& = \mathbb{E}_{P_{Y^n|X^n}}\left[ \left(\underline{\hat{\alpha}} - \underline{\alpha}\right)^\mathrm{T} \mathbf{F}(x_Q) \left(\underline{\hat{\alpha}} - \underline{\alpha}\right) | X^n=x^n \right]\\
& \quad + 2\mathbb{E}_{P_{Y^n|X^n}}\left[ \sum_y  \left(\underline{\hat{\alpha}}-\underline{\alpha} \right)^\mathrm{T} f(x_Q,y) R(x_Q,y) | X^n=x^n\right] + \sum_y R^2(x_Q,y)\\
& = \mathrm{tr}\left\{ \mathbf{F}_n^{\dagger}(x^n)\mathbf{F}(x_Q)\mathbf{F}_n^{\dagger}(x^n) \mathsf{F}(x^n) \right\}  - 2\underline{\alpha}^\mathrm{T}\mathbf{F}(x_Q)\mathbf{F}_n^{\dagger}(x^n)\mathbf{F}_n(x^n)\underline{\alpha} + \underline{\alpha}^{\mathrm{T}} \mathbf{F}(x_Q)\underline{\alpha}\\
& \quad -\frac2n \mathsf{R}^\mathrm{T}(x^n)f^\mathrm{T}(x^n)\mathbf{F}_n^{\dagger}(x^n)\mathbf{F}(x_Q)\underline{\alpha} - 2 \mathsf{R}^\mathrm{T}(x_Q)f^\mathrm{T}(x_Q)\mathbf{F}_n^\perp(x^n)\underline{\alpha}\\  
& \quad + \frac2n \mathsf{R}^\mathrm{T}(x_Q)f^\mathrm{T}(x_Q)\mathbf{F}_n^{\dagger}(x^n)f(x^n)\mathsf{R}(x^n) + \sum_y R^2(x_Q,y).
\end{align*}
\end{proof}

\begin{lemma}\label{lem:aFa_comple_incomplete_insufficient}
For an insufficient demonstrations and an incomplete model, it holds that
\begin{align*}
& \frac1{n^2}\mathrm{tr}\left\{ \mathbf{F}_n^{\dagger}(x^n)\mathbf{F}(x_Q)\mathbf{F}_n^{\dagger}(x^n) f(x^n)\left(\sum_{i,j}v_iv_j^\mathrm{T}\right) f^\mathrm{T}(x^n)\right\}\\
& = \left(\mathbf{F}_n^{\dagger}(x^n)\mathbf{F}_n(x^n)\underline{\alpha}\right)^\mathrm{T}\mathbf{F}(x_Q) \mathbf{F}_n^{\dagger}(x^n)\mathbf{F}_n(x^n)\underline{\alpha}\\
& \quad + \frac2n \mathsf{R}^\mathrm{T}(x^n)f^\mathrm{T}(x^n)\mathbf{F}_n^{\dagger}(x^n) \mathbf{F}(x_Q)\mathbf{F}_n^{\dagger}(x^n) \mathbf{F}_n(x^n)\underline{\alpha}\\
&\quad + \frac1{n^2}  \mathsf{R}^\mathrm{T}(x^n)f^\mathrm{T}(x^n)\mathbf{F}_n^{\dagger}(x^n) \mathbf{F}(x_Q)\mathbf{F}_n^{\dagger}(x^n) f(x^n) \mathsf{R}(x^n).
\end{align*}
\end{lemma}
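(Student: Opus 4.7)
The plan is to mirror the proof of Lemma \ref{lem:aFa_comple_insufficient} but use the full decomposition of $P_{Y|X}$ from the model in (\ref{equ:model}) rather than its complete-case specialization. The key observation is that $\sum_{i,j} v_i v_j^{\mathrm{T}} = \bigl(\sum_i v_i\bigr)\bigl(\sum_j v_j\bigr)^{\mathrm{T}}$, so the left-hand side factors as
$$
\mathrm{tr}\Bigl\{\mathbf{F}_n^{\dagger}(x^n)\mathbf{F}(x_Q)\mathbf{F}_n^{\dagger}(x^n)\, u u^{\mathrm{T}}\Bigr\},
\qquad u \triangleq \tfrac{1}{n}\,f(x^n)\sum_i v_i = \tfrac{1}{n}\sum_{i,y}P_{Y|X}(y|x_i) f(x_i,y).
$$

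Next, I would substitute $P_{Y|X}(y|x_i) = \underline{\alpha}^{\mathrm{T}} f(x_i,y) + R(x_i,y)$ from (\ref{equ:model}) into $u$. The concept component collapses via the definition of $\mathbf{F}_n(x^n)$, while the residual component collects into $\frac{1}{n} f(x^n)\mathsf{R}(x^n)$, giving the clean split
$$
u \;=\; \mathbf{F}_n(x^n)\,\underline{\alpha} \;+\; \tfrac{1}{n}\, f(x^n)\,\mathsf{R}(x^n).
$$
Expanding $u u^{\mathrm{T}}$ then yields four matrix terms: a pure concept outer product, a pure residual outer product, and two symmetric cross terms.

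Finally, I would plug each of the four pieces back under the trace and simplify separately. For the concept term, the cyclic property together with symmetry of $\mathbf{F}_n(x^n)$ and $\mathbf{F}_n^{\dagger}(x^n)$ turns $\mathrm{tr}\{\mathbf{F}_n^{\dagger}\mathbf{F}(x_Q)\mathbf{F}_n^{\dagger}\mathbf{F}_n\underline{\alpha}\,\underline{\alpha}^{\mathrm{T}}\mathbf{F}_n\}$ into the first claimed summand $\bigl(\mathbf{F}_n^{\dagger}\mathbf{F}_n\underline{\alpha}\bigr)^{\mathrm{T}}\mathbf{F}(x_Q)\mathbf{F}_n^{\dagger}\mathbf{F}_n\underline{\alpha}$, exactly as in the proof of Lemma \ref{lem:aFa_comple_insufficient}. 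The two cross terms are transposes of each other, hence equal as scalars, and their sum reproduces the middle term with the factor $2/n$. The residual–residual term produces the last summand with the factor $1/n^2$ directly by cyclic re-ordering.

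The computation is essentially bookkeeping rather than conceptually hard; the only place that requires some care is tracking that $\mathbf{F}_n^{\dagger}(x^n)\mathbf{F}_n(x^n)$ is a symmetric orthogonal projection, so that when it appears on either side of $\mathbf{F}(x_Q)$ the symmetric form $(\mathbf{F}_n^{\dagger}\mathbf{F}_n\underline{\alpha})^{\mathrm{T}}\mathbf{F}(x_Q)(\mathbf{F}_n^{\dagger}\mathbf{F}_n\underline{\alpha})$ in the statement is obtained without any additional absorption of projectors into the residual terms. This is the main subtlety — absent it, one could mistakenly cancel $\mathbf{F}_n(x^n)$ against $\mathbf{F}_n^{\dagger}(x^n)$ as if $\mathbf{F}_n(x^n)$ were invertible, which is precisely the case the lemma is designed to avoid.
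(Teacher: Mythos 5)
Your proposal is correct and follows essentially the same route as the paper: factor $\sum_{i,j}v_iv_j^{\mathrm{T}}$ as an outer product, substitute $P_{Y|X}(y|x)=\underline{\alpha}^{\mathrm{T}}f(x,y)+R(x,y)$ to split the resulting vector into $\mathbf{F}_n(x^n)\underline{\alpha}+\tfrac1n f(x^n)\mathsf{R}(x^n)$, and expand the outer product under the trace into the concept, cross, and residual terms. Your remark about not cancelling $\mathbf{F}_n^{\dagger}(x^n)\mathbf{F}_n(x^n)$ as if $\mathbf{F}_n(x^n)$ were invertible is exactly the care the paper's computation also exercises.
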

\begin{proof}
We have that
\begin{align*}
& \frac1{n^2}\mathrm{tr}\left\{ \mathbf{F}_n^{\dagger}(x^n)\mathbf{F}(x_Q)\mathbf{F}_n^{\dagger}(x^n) f(x^n)\left(\sum_{i,j}v_iv_j^\mathrm{T}\right) f^\mathrm{T}(x^n)\right\}\\
& = \mathrm{tr}\left\{ \mathbf{F}_n^{\dagger}(x^n)\mathbf{F}(x_Q)\mathbf{F}_n^{\dagger}(x^n)\left(\frac1n\sum_{i,y}P_{Y|X}(y|x_i) f(x_i, y)\right)\left(\frac1n\sum_{j,y} P_{Y|X}(y|x_j) f(x_j,y)\right)^\mathrm{T}\right\}\\
\end{align*}
From model \ref{equ:model}, we have that for an incomplete model, $P_{Y|X}(y|x) = \underline{\alpha}^\mathrm{T}f(x,y) + R(x,y)$. Hence, we have that
\begin{align*}
& \mathrm{tr}\left\{ \mathbf{F}_n^{\dagger}(x^n)\mathbf{F}(x_Q)\mathbf{F}_n^{\dagger}(x^n)\left(\frac1n\sum_{i,y}P_{Y|X}(y|x_i) f(x_i, y)\right)\left(\frac1n\sum_{j,y} P_{Y|X}(y|x_j) f(x_j,y)\right)^\mathrm{T}\right\}\\
& = \mathrm{tr}\left\{\mathbf{F}_n^{\dagger}(x^n) \mathbf{F}(x_Q)\mathbf{F}_n^{\dagger}(x^n)\left(\frac1n\sum_{i,y} f(x_i, y)(f^\mathrm{T}(x_i, y) \underline{\alpha}+R(x_i,y)) \right)\right.\\
&\quad \quad\left.\left(\frac1n\sum_{j,y} f(x_j,y)(f^\mathrm{T}(x_j,y)\underline{\alpha}+R(x_i,y))\right)^\mathrm{T}\right\}\\
& = \mathrm{tr}\left\{\mathbf{F}_n^{\dagger}(x^n) \mathbf{F}(x_Q)\mathbf{F}_n^{\dagger}(x^n)\mathbf{F}_n(x^n) \underline{\alpha} \left(\mathbf{F}_n(x^n)\underline{\alpha}\right)^\mathrm{T}\right.\\
&\quad + \frac2n \mathbf{F}_n^{\dagger}(x^n) \mathbf{F}(x_Q)\mathbf{F}_n^{\dagger}(x^n) \mathbf{F}_n(x^n)\underline{\alpha} f^\mathrm{T}(x^n)\mathsf{R}^\mathrm{T}(x^n)\\
& \quad \left.+ \frac1{n^2} \mathbf{F}_n^{\dagger}(x^n) \mathbf{F}(x_Q)\mathbf{F}_n^{\dagger}(x^n) f(x^n) \mathsf{R}(x^n)f^\mathrm{T}(x^n)\mathsf{R}^\mathrm{T}(x^n)\right\}\\
& = \left(\mathbf{F}_n^{\dagger}(x^n)\mathbf{F}_n(x^n)\underline{\alpha}\right)^\mathrm{T}\mathbf{F}(x_Q) \mathbf{F}_n^{\dagger}(x^n)\mathbf{F}_n(x^n)\underline{\alpha}\\
& \quad + \frac2n \mathsf{R}^\mathrm{T}(x^n)f^\mathrm{T}(x^n)\mathbf{F}_n^{\dagger}(x^n) \mathbf{F}(x_Q)\mathbf{F}_n^{\dagger}(x^n) \mathbf{F}_n(x^n)\underline{\alpha}\\
&\quad + \frac1{n^2}  \mathsf{R}^\mathrm{T}(x^n)f^\mathrm{T}(x^n)\mathbf{F}_n^{\dagger}(x^n) \mathbf{F}(x_Q)\mathbf{F}_n^{\dagger}(x^n) f(x^n) \mathsf{R}(x^n).\\
\end{align*}
This completes the proof.
\end{proof}

Hence, we have the excessive risk in a simplified form.
\begin{corollary}\label{col:excessive_risk_incomplete_insufficient_simple}
Let 
\[
\textbf{A}(x_i) = \mathsf{diag}\{P_{Y|X}(1|x_i),\dots,P_{Y|X}(M|x_i)\}-\phi_i\phi_i^\mathrm{T},
\]
where $\phi_i = [P_{Y|X}(1|x_i),\dots,P_{Y|X}(M|x_i)]^\mathrm{T}$, and let
\begin{equation*}
\textbf{A}(x^n) = \mathsf{diag}\left\{ \textbf{A}(x_1), \ldots, \textbf{A}(x_n) \right\},
\end{equation*}
it holds that
\begin{align*}
& \mathbb{E}_{P_{Y^n|X^n}}\left[ 
\ell \left(x^n,Y^n;x_Q\right) | X^n = x^n\right]\\
& = \mathrm{tr}\left\{ \frac1{n^2}\mathbf{F}_n^{\dagger}(x^n)\mathbf{F}(x_Q)\mathbf{F}_n^{\dagger}(x^n) f(x^n) {\mathbf{A}}(x^n) f^\mathrm{T}(x^n) \right\}\\
& \quad +
\left\| f ^{\mathrm{T}}(x_Q)\mathbf{F}_n^\perp(x^n)\underline{\alpha} - \mathsf{R}(x_Q) - \frac1n f^\mathrm{T}(x_Q)\mathbf{F}_n^{\dagger}(x^n)f(x^n)\mathsf{R}(x^n)\right\|^2,
\end{align*}
where $\mathbf{F}_n^{\perp}(x^n) = \textbf{I}_K-\mathbf{F}_n^\dagger(x^n) \mathbf{F}_n(x^n)$.
\end{corollary}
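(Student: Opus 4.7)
The plan is to combine Corollary \ref{col:excessive_risk_incomplete_insufficient} with the decomposition of $\mathsf{F}(x^n)$ used in the proofs of the earlier theorems, namely $\mathsf{F}(x^n) = \tfrac{1}{n^2} f(x^n)\tilde{\mathbf{A}}(x^n)f^{\mathrm{T}}(x^n)$ with $\tilde{\mathbf{A}}(x^n) = \mathbf{A}(x^n) + \sum_{i,j} v_i v_j^{\mathrm{T}}$. First I would substitute this identity into the trace term of Corollary \ref{col:excessive_risk_incomplete_insufficient}. The $\mathbf{A}(x^n)$ piece immediately produces the first (trace) term in the statement, while the $\sum_{i,j} v_i v_j^{\mathrm{T}}$ piece is evaluated in closed form by Lemma \ref{lem:aFa_comple_incomplete_insufficient}, contributing three explicit terms involving $\underline{\alpha}$ and $\mathsf{R}$.

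What remains is a sum of nine scalar terms, all linear, bilinear, or quadratic in $\underline{\alpha}$ and $\mathsf{R}$, which I want to identify with the single squared-norm term. Writing $\mathbf{P} \triangleq \mathbf{F}_n^\dagger(x^n)\mathbf{F}_n(x^n)$, which is a symmetric orthogonal projection because $\mathbf{F}_n(x^n)$ is symmetric, so that $\mathbf{F}_n^\perp(x^n) = \mathbf{I}_K - \mathbf{P}$ is also symmetric, I would introduce the shorthand
\begin{equation*}
\mathbf{u} = f^{\mathrm{T}}(x_Q)\mathbf{F}_n^\perp(x^n)\underline{\alpha}, \quad \mathbf{v} = \mathsf{R}(x_Q), \quad \mathbf{w} = \tfrac{1}{n} f^{\mathrm{T}}(x_Q)\mathbf{F}_n^\dagger(x^n) f(x^n)\mathsf{R}(x^n).
\end{equation*}
The target becomes $\|\mathbf{u}-\mathbf{v}-\mathbf{w}\|^2 = \|\mathbf{u}\|^2+\|\mathbf{v}\|^2+\|\mathbf{w}\|^2 - 2\mathbf{u}^{\mathrm{T}}\mathbf{v} - 2\mathbf{u}^{\mathrm{T}}\mathbf{w} + 2\mathbf{v}^{\mathrm{T}}\mathbf{w}$, and I would match contributions one at a time: the three $\underline{\alpha}$-quadratic terms $\underline{\alpha}^{\mathrm{T}}\mathbf{F}(x_Q)\underline{\alpha}$, $-2\underline{\alpha}^{\mathrm{T}}\mathbf{F}(x_Q)\mathbf{P}\underline{\alpha}$, and $\underline{\alpha}^{\mathrm{T}}\mathbf{P}\mathbf{F}(x_Q)\mathbf{P}\underline{\alpha}$ collapse to $\underline{\alpha}^{\mathrm{T}}\mathbf{F}_n^\perp(x^n)\mathbf{F}(x_Q)\mathbf{F}_n^\perp(x^n)\underline{\alpha} = \|\mathbf{u}\|^2$; the two $\underline{\alpha}$-$\mathsf{R}(x^n)$ cross terms fold through the same $\mathbf{I}_K-\mathbf{P}$ identity into $-2\mathbf{u}^{\mathrm{T}}\mathbf{w}$; the $\underline{\alpha}$-$\mathsf{R}(x_Q)$ term is $-2\mathbf{u}^{\mathrm{T}}\mathbf{v}$; the pure-$\mathsf{R}(x^n)$ term from Lemma \ref{lem:aFa_comple_incomplete_insufficient} is $\|\mathbf{w}\|^2$; the $\mathsf{R}(x_Q)$-$\mathsf{R}(x^n)$ cross term is $2\mathbf{v}^{\mathrm{T}}\mathbf{w}$; and $\sum_y R^2(x_Q,y) = \|\mathbf{v}\|^2$.

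The main obstacle is bookkeeping rather than any analytic difficulty: one must repeatedly invoke the symmetry of $\mathbf{F}(x_Q)$, $\mathbf{F}_n^\dagger(x^n)$, and $\mathbf{P}$ to equate scalars such as $\underline{\alpha}^{\mathrm{T}}\mathbf{F}(x_Q)\mathbf{P}\underline{\alpha}$ with $\underline{\alpha}^{\mathrm{T}}\mathbf{P}\mathbf{F}(x_Q)\underline{\alpha}$ before expanding $(\mathbf{I}_K-\mathbf{P})\mathbf{F}(x_Q)(\mathbf{I}_K-\mathbf{P})$, and one must verify that the two $\underline{\alpha}$-$\mathsf{R}(x^n)$ contributions coming from Corollary \ref{col:excessive_risk_incomplete_insufficient} and Lemma \ref{lem:aFa_comple_incomplete_insufficient} appear with the precise sign pattern that produces $-2\mathbf{u}^{\mathrm{T}}\mathbf{w}$. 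Once these symmetry reductions are carried out, the identification with the squared norm is a direct expansion, and no further analytic input beyond Corollary \ref{col:excessive_risk_incomplete_insufficient} and Lemma \ref{lem:aFa_comple_incomplete_insufficient} is required.
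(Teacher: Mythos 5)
Your proposal is correct and follows essentially the same route as the paper: substitute $\mathsf{F}(x^n)=\tfrac{1}{n^2}f(x^n)\tilde{\mathbf{A}}(x^n)f^{\mathrm{T}}(x^n)$ with $\tilde{\mathbf{A}}(x^n)=\mathbf{A}(x^n)+\sum_{i,j}v_iv_j^{\mathrm{T}}$ into Corollary \ref{col:excessive_risk_incomplete_insufficient}, evaluate the rank-one part via Lemma \ref{lem:aFa_comple_incomplete_insufficient}, and collect the remaining scalar terms into the squared norm. Your explicit $\mathbf{u},\mathbf{v},\mathbf{w}$ bookkeeping is in fact more detailed than the paper's, which leaves that final collection implicit.
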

\begin{proof}
    Based on the Corollary \ref{col:excessive_risk_complete} and Lemma \ref{lem:aFa_comple}, we have the excessive risk being
\begin{align*}
& \mathbb{E}_{P_{Y^n|X^n}}\left[ 
\ell \left(x^n,Y^n;x_Q\right) | X^n = x^n\right]\\
&= \mathrm{tr}\left\{ \mathbf{F}_n^{\dagger}(x^n)\mathbf{F}(x_Q)\mathbf{F}_n^{\dagger}(x^n) \mathsf{F}(x^n) \right\} - 2\underline{\alpha}^\mathrm{T}\mathbf{F}(x_Q)\mathbf{F}_n^{\dagger}(x^n)\mathbf{F}_n(x^n)\underline{\alpha} + \underline{\alpha}^{\mathrm{T}} \mathbf{F}(x_Q)\underline{\alpha}.\\
& = \mathrm{tr}\left\{ \frac1{n^2}\mathbf{F}_n^{-1}(x^n)\mathbf{F}(x_Q)\mathbf{F}_n^{-1}(x^n) f(x^n) \left(\tilde{\mathbf{A}}(x^n) - \sum_{i,j}v_iv_j^\mathrm{T}\right)f^\mathrm{T}(x^n) \right\}\\
& \quad + \left(\mathbf{F}_n^{\dagger}(x^n)\mathbf{F}_n(x^n)\underline{\alpha}\right)^\mathrm{T}\mathbf{F}(x_Q)\mathbf{F}_n^{\dagger}(x^n)\mathbf{F}_n(x^n)\underline{\alpha} - 2\underline{\alpha}^\mathrm{T}\mathbf{F}(x_Q)\mathbf{F}_n^{\dagger}(x^n)\mathbf{F}_n(x^n)\underline{\alpha} + \underline{\alpha}^{\mathrm{T}} \mathbf{F}(x_Q)\underline{\alpha}\\
& \quad + \frac2n \mathsf{R}^\mathrm{T}(x^n)f^\mathrm{T}(x^n)\mathbf{F}_n^{\dagger}(x^n) \mathbf{F}(x_Q)\mathbf{F}_n^{\dagger}(x^n) \mathbf{F}_n(x^n)\underline{\alpha}\\
& \quad + \frac1{n^2}  \mathsf{R}^\mathrm{T}(x^n)f^\mathrm{T}(x^n)\mathbf{F}_n^{\dagger}(x^n) \mathbf{F}(x_Q)\mathbf{F}_n^{\dagger}(x^n) f(x^n) \mathsf{R}(x^n)\\
& \quad - \frac2n \mathsf{R}^\mathrm{T}(x^n)f^\mathrm{T}(x^n)\mathbf{F}_n^{\dagger}(x^n)\mathbf{F}(x_Q)\underline{\alpha} - 2 \mathsf{R}^\mathrm{T}(x_Q)f^\mathrm{T}(x_Q)\mathbf{F}_n^\perp(x^n)\underline{\alpha}\\  
& \quad + \frac2n \mathsf{R}^\mathrm{T}(x_Q)f^\mathrm{T}(x_Q)\mathbf{F}_n^{\dagger}(x^n)f(x^n)\mathsf{R}(x^n) + \sum_y R^2(x_Q,y).
\end{align*}
Note that $\tilde{\mathbf{A}}(x^n) = \sum_{i,j}v_iv_j^\mathrm{T} + \mathbf{A}(x^n)$ and $\mathbf{F}(x_Q) = f(x_Q)f^\mathrm{T}(x_Q)$, we have that
\begin{align*}
& \mathbb{E}_{P_{Y^n|X^n}}\left[ 
\ell \left(x^n,Y^n;x_Q\right) | X^n = x^n\right]\\
& = \mathrm{tr}\left\{ \frac1{n^2}\mathbf{F}_n^{\dagger}(x^n)\mathbf{F}(x_Q)\mathbf{F}_n^{\dagger}(x^n) f(x^n) {\mathbf{A}}(x^n) f^\mathrm{T}(x^n) \right\}\\
& \quad +
\left\| f ^{\mathrm{T}}(x_Q)\mathbf{F}_n^\perp(x^n)\underline{\alpha} - \mathsf{R}(x_Q) - \frac1n f^\mathrm{T}(x_Q)\mathbf{F}_n^{\dagger}(x^n)f(x^n)\mathsf{R}(x^n)\right\|^2.
\end{align*}
\end{proof}

Following Lemma \ref{lem:better_ub_complete_insufficient}, we have that
\begin{align*}
& \mathbb{E}_{P_{Y^n|X^n}}\left[ 
\ell \left(x^n,Y^n;x_Q\right) | X^n = x^n\right]\\
&\leq \frac1n \lambda_1\left(\mathbf{F}(x_Q)\mathbf{F}_n^{\dagger}(x^n) \right) S_K({\textbf{A}}(x^n))\\
&\quad +
\left\| f ^{\mathrm{T}}(x_Q)\mathbf{F}_n^\perp(x^n)\underline{\alpha} - \mathsf{R}(x_Q) - \frac1n f^\mathrm{T}(x_Q)\mathbf{F}_n^{\dagger}(x^n)f(x^n)\mathsf{R}(x^n)\right\|^2\\
& \leq \frac{K}{n} \lambda_1\left(\mathbf{F}(x_Q)\mathbf{F}_n^{\dagger}(x^n) \right) \lambda_1({\textbf{A}}(x^n))\\
&\quad +
\left\| f ^{\mathrm{T}}(x_Q)\mathbf{F}_n^\perp(x^n)\underline{\alpha} - \mathsf{R}(x_Q) - \frac1n f^\mathrm{T}(x_Q)\mathbf{F}_n^{\dagger}(x^n)f(x^n)\mathsf{R}(x^n)\right\|^2,
\end{align*}
where $S_K({\textbf{A}}(x^n))$ is the sum of top K eigenvalues.

Taking expectation over $P_{X}$, we have the result as
\begin{align*}
&\mathbb{E}_{P_{X}P_{Y^n|X^n}}\left[ 
\ell \left(x^n,Y^n;x_Q\right) | X^n = x^n\right]\\ 
&\leq \frac{K}{n}\lambda_1 \left(\mathbf{F}_Q\mathbf{F}_n^\dagger(x^n) \right) \lambda_1(\textbf{A}(x^n))\notag\\
& \quad + \frac1n \lambda_1 \big(\mathbf{F}_Q \mathbf{F}_n^\dagger(x^n)\big)\cdot \| \mathsf{R}(x^n) \|^2 + \sum_{x_Q,y}{P_{X}(x_Q)R^2(x_Q,y)}\\
& \quad + \underline{\alpha}^{\mathrm{T}}\mathbf{F}_n^{\perp}(x^n)^{\mathrm{T}} \mathbf{F}_Q\mathbf{F}_n^{\perp}(x^n)\underline{\alpha} - \frac2n \mathsf{R}^{\mathrm{T}}(x^n)f^{\mathrm{T}}(x^n) \mathbf{F}_n^\dagger(x^n)\mathbf{F}_Q\mathbf{F}_n^{\perp}(x^n)\underline{\alpha},
\end{align*}
where $\mathbf{F}_Q = \mathbb{E}_{P_{X}}[\mathbb{F}(x_Q)]$.
This is because $\sum_{x_Q} P_X(x_Q) f^\mathrm{T}(x_Q)\mathsf{R}(x_Q) = 0$.

\subsection{Proof of Lemma \ref{lem:lebel_pred_error}}\label{append:lemma2_proof}

We prove the contrapositive. Fix an index \(s\ge j+1\). We show that if the predicted label equals \(s\) (or more weakly, if \(\hat P_s \ge \hat P_1\)), then necessarily
\[
\ell(x^n,y^n;x_Q)\;\ge\; \tfrac12\big(P_1-P_s\big)^2
\;\ge\; \tfrac12\big(P_1-P_{j+1}\big)^2,
\]
which contradicts the hypothesis. Since this holds for every \(s\ge j+1\), no such \(s\) can be the predicted label, and hence the predicted label must lie in \(\{1,\dots,j\}\). Because for any \(r\le j\) we have \(P_r\ge P_j\), the conclusion follows.
Thus fix \(s\ge j+1\) and suppose \(\hat P_s \ge \hat P_1\). Consider the optimization
\[
\min_{\hat P}\ \sum_{t=1}^M(\hat P_t-P_t)^2
\quad\text{s.t.}\quad \hat P_s - \hat P_1 \ge 0,
\]
where \(\hat P\) ranges over \(\mathbb{R}^M\) (the feasible set for probability vectors only increases the minimal cost, so this relaxation provides a valid lower bound). To obtain a lower bound it suffices to restrict attention to coordinates \(1\) and \(s\) and leave all other coordinates equal to their true values \(P_t\). With this restriction the problem reduces to
\[
\min_{u,v\in\mathbb{R}}\ (u-P_1)^2 + (v-P_s)^2
\quad\text{s.t.}\quad v-u \ge 0.
\]
The minimal value of this two-variable problem under the constraint \(v-u\ge0\) is attained when \(v=u\) (pushing toward equality is best), hence we set \(u=v=\eta\) and minimize
\[
(\eta-P_1)^2 + (\eta-P_s)^2.
\]
This quadratic in \(\eta\) is minimized at \(\eta = \tfrac{P_1+P_s}{2}\), giving the minimum value
\begin{equation*}
    \left(\frac{P_1+P_s}{2}-P_1\right)^2 + \left(\frac{P_1+P_s}{2}-P_s\right)^2
= 2\Big(\frac{P_1-P_s}{2}\Big)^2
= \tfrac12\big(P_1-P_s\big)^2.
\end{equation*}
Therefore any \(\hat P\) with \(\hat P_s\ge\hat P_1\) must satisfy
\[
\sum_{t=1}^M(\hat P_t-P_t)^2 \;\ge\; \tfrac12\big(P_1-P_s\big)^2.
\]
Because \(P_s\le P_{j+1}\) for all \(s\ge j+1\), we have \(\tfrac12(P_1-P_s)^2\ge \tfrac12(P_1-P_{j+1})^2\). Hence if
\[
\ell(x^n,y^n;x_Q) < \tfrac12\big(P_1-P_{j+1}\big)^2,
\]
no index \(s\ge j+1\) can satisfy \(\hat P_s\ge\hat P_1\), i.e. the maximizer \(\hat y_{\max}\) must belong to \(\{1,\dots,j\}\).
This implies
\[
P_{Y|X}(\hat y_{\max}\mid x_Q)\;\ge\; P_j.
\]

\subsection{Proof of Theorem \ref{thm:error_prob}}\label{append:theorem4_proof}

To prove the Theorem \ref{thm:error_prob}, we aim to find the minium value of $\mathbb{E}_{P_{Y^n|X^n}}[P_{Y|X}(\hat{y}_{\max} |x_Q)|X^n=x^n]$, which is equivalent to solve the following problem:
\begin{align*}
\min_{\hat{P}_{Y|X}(\cdot|x_Q)} & \sum_{y^n}P_{Y^n|X^n}(y^n|x^n) P_{Y|X}(\hat{y}_{\max}|x_Q)\\
\text{s.t.}& \sum_{y^n} P_{Y^n|X^n}(y^n|x^n) \ell(x^n,y^n;x_Q) \leq \gamma,\\
& \sum_{y^n} P_{Y^n|X^n}(y^n|x^n) = 1,\\
& \sum_y (\hat{P}_{Y|X}(y|x_Q;y^n) - P_{Y|X}(y|x_Q))^2 = \ell(x^n,y^n;x_Q)\\
& \hat{y}_{\max} = \argmax_y \hat{P}_{Y|X}(y|x_Q;y^n)
\end{align*}

Lemma \ref{lem:lebel_pred_error} provides the theoretical guarantee of the CB-ICL label predictor with respect to different threshold values of the mean-squared risk. Notice that $P_{Y|X}(\hat{y}_{\max}|x_Q) \leq \max_y P_{Y|X}(y|x_Q) = P_1$, and the equality is achieved when $\ell(x^n,y^n;x_Q) < \frac12 \left( P_1 - P_{2}\right)^2$. Therefore, the CB-ICL label predictor is reduced to the Maximum a Posteriori (MAP) decision when the mean-squared risk is small. Moreover, the following Theorem establishes the connection between the excessive risk and the label predicting error probability based on Lemma \ref{lem:lebel_pred_error}.

From Lemma \ref{lem:lebel_pred_error}, we know that when $\frac12 (P_1-P_{j})^2 \leq \ell(x^n,y^n;x_Q)< \frac12 (P_1-P_{j+1})^2$, the minimum value $P_{Y|X}(\hat{y}_{\max}|x_Q)$ can take is $P_j$. Hence, the original problem can turn into a combination problem that $\ell$ only takes value in $\{0,\frac12 (P_1-P_2)^2,\dots,\frac12(P_1-P_M)^2\}$, with the following $P_{Y|X}(\hat{y}_{\max}|x_Q)$ being $P_1, P_2, \dots,P_M$. Denote $\ell_j$ as the discrete variable taking value in $\{0, \frac12 (P_1-P_2)^2,\dots,\frac12(P_1-P_M)^2\}$ with corresponding ${P}_j$ being the discrete variable taking value in $\{P_1, P_2,\dots,P_M\}$. The original problem becomes that assign each $y^n$ to one $j\in\{1,\dots,M\}$ such that $\sum_j\sum_{y^n} P_{Y^n|X^n}(y^n|x^n)  \mathsf{1}_{\sigma(y^n)=j}P_j$ achieves minimum (with $\sigma$ denoting the assign function). Denote the weight assigned to $j$th index as $w_j = \sum_{y^n} P_{Y^n|X^n}(y^n|x^n)  \mathsf{1}_{\sigma(y^n)=j}$, and we have the original problem being
\begin{align*}\label{prob:assign}
\min_{w} & \sum_{j=1}^M  w_j {P}_{j}  \tag*{(P1)}\\
\text{s.t.}& \sum_{j=1}^M w_j\ell_{j} \leq \gamma,\\
& w_j = \sum_{y^n} P_{Y^n|X^n}(y^n|x^n)  \mathsf{1}_{\sigma(y^n)=j}.
\end{align*}

To solve this problem, we consider an approximation to this problem as
\begin{align*}\label{prob:continus_assign}
\min_{w} & \sum_{j=1}^M  w_j {P}_{j}  \tag*{(P2)}\\
\text{s.t.}& \sum_{j=1}^M w_j\ell_{j} \leq \gamma,\\
& \sum_{j=1}^M w_j = 1.
\end{align*}

\begin{lemma}\label{lem:dis_2_con}
Denote the solution to problem \ref{prob:assign} as $\{w^{\ast,1}_j\}_{j=1}^M$ and solution to problem \ref{prob:continus_assign} as $\{w^{\ast,2}_j\}_{j=1}^M$. Then, $\sum_{j=1}^M w^{\ast,1}_j P_j \geq \sum_{j=1}^M w^{\ast,2}_j P_j$.
\end{lemma}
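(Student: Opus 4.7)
The plan is to prove Lemma \ref{lem:dis_2_con} by the standard relaxation argument: I will show that the feasible region of problem \ref{prob:assign} is contained in the feasible region of problem \ref{prob:continus_assign}, so minimizing the same linear objective over the larger set yields a value no greater than minimizing over the smaller set.

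First, I would take an arbitrary feasible point $\{w_j\}_{j=1}^M$ of \ref{prob:assign}, which by definition arises from some assignment function $\sigma$ via $w_j = \sum_{y^n} P_{Y^n|X^n}(y^n|x^n)\,\mathsf{1}_{\sigma(y^n)=j}$. The key observation is that $\sigma$ assigns each $y^n$ to exactly one index $j \in \{1,\dots,M\}$, so the indicators $\{\mathsf{1}_{\sigma(y^n)=j}\}_{j=1}^M$ form a partition of unity for each fixed $y^n$. Summing over $j$ therefore gives
\begin{equation*}
\sum_{j=1}^M w_j = \sum_{y^n} P_{Y^n|X^n}(y^n|x^n) \sum_{j=1}^M \mathsf{1}_{\sigma(y^n)=j} = \sum_{y^n} P_{Y^n|X^n}(y^n|x^n) = 1,
\end{equation*}
which is precisely the normalization constraint of \ref{prob:continus_assign}. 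The budget constraint $\sum_{j=1}^M w_j \ell_j \leq \gamma$ is identical in both problems, so the chosen $\{w_j\}$ is feasible for \ref{prob:continus_assign} as well.

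Since the feasible set of \ref{prob:assign} is contained in that of \ref{prob:continus_assign}, and both problems minimize the same linear objective $\sum_{j=1}^M w_j P_j$, the optimal value of the relaxation cannot exceed the optimal value of the original problem. In particular,
\begin{equation*}
\sum_{j=1}^M w_j^{\ast,1} P_j \geq \min_{w \text{ feasible for \ref{prob:continus_assign}}} \sum_{j=1}^M w_j P_j = \sum_{j=1}^M w_j^{\ast,2} P_j,
\end{equation*}
which is the claim. I do not anticipate a genuine obstacle here; the only point requiring care is the observation that the assignment function $\sigma$ induces a partition, which is what converts the structural constraint of \ref{prob:assign} into the simplex-type normalization constraint of \ref{prob:continus_assign}. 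This lemma is a preparatory step that reduces the combinatorial optimization over assignments to a two-variable linear program over the simplex, which one presumably solves in closed form in the remainder of the proof of Theorem \ref{thm:error_prob} to extract the bound $P_j - \tfrac{2\gamma}{2P_1 - P_j - P_{j+1}}$.
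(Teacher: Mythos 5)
Your proof is correct and follows essentially the same route as the paper: both establish that the feasible set of problem (P1) is contained in that of (P2) — using the partition-of-unity property of the assignment $\sigma$ to obtain $\sum_j w_j = 1$ and noting the budget constraint is shared — and then conclude by monotonicity of the minimum under set inclusion. The only cosmetic difference is that the paper also explicitly records the (trivial) nonnegativity $w_j \ge 0$ needed for membership in the simplex.
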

\begin{proof}
Let
\[
\mathcal{S}_1=\Big\{w\in\mathbb{R}^M:\exists\ \sigma:[M]^n\to[M]\ \text{such that } w_j=\sum_{y^n}P_{Y^n|X^n}(y^n|x^n)\mathbf{1}_{\{\sigma(y^n)=j\}}\Big\}
\]
be the feasible set of problem \ref{prob:assign}, and
\[
\mathcal{S}_2=\Big\{w\in\mathbb{R}_+^M:\sum_{j=1}^M w_j=1,\ \sum_{j=1}^M w_j\ell_j\le\gamma\Big\}
\]
be the feasible set of problem \ref{prob:continus_assign}. We first show $\mathcal{S}_1\subseteq\mathcal{S}_2$.

Take any $w\in\mathcal{S}_1$. By definition there exists an assignment $\sigma$ such that
\[
w_j=\sum_{y^n}P_{Y^n|X^n}(y^n|x^n)\mathbf{1}_{\{\sigma(y^n)=j\}}\quad (j=1,\dots,M).
\]
Clearly $w_j\ge0$ for all $j$ and
\[
\sum_{j=1}^M w_j=\sum_{j=1}^M\sum_{y^n}P_{Y^n|X^n}(y^n|x^n)\mathbf{1}_{\{\sigma(y^n)=j\}}
=\sum_{y^n}P_{Y^n|X^n}(y^n|x^n)=1.
\]
Moreover
\[
\sum_{j=1}^M w_j\ell_j
=\sum_{j=1}^M\sum_{y^n}P_{Y^n|X^n}(y^n|x^n)\mathbf{1}_{\{\sigma(y^n)=j\}}\ell_j
=\mathbb{E}_{P_{Y^n|X^n}}\big[\ell(x^n,Y^n;x_Q)\big]\le\gamma,
\]
where the last inequality is exactly the feasibility condition in (P1). Hence $w\in\mathcal{S}_2$, proving $\mathcal{S}_1\subseteq\mathcal{S}_2$.

Now let $f(w)=\sum_{j=1}^M w_j P_j$ be the objective. Since $\mathcal{S}_1\subseteq\mathcal{S}_2$, the minimum of $f$ over the smaller set $\mathcal{S}_1$ cannot be smaller than the minimum over the larger set $\mathcal{S}_2$. Formally,
\[
\min_{w\in\mathcal{S}_1} f(w)\ \ge\ \min_{w\in\mathcal{S}_2} f(w).
\]
Noting that the left-hand side equals $\sum_{j} w^{\ast,1}_j P_j$ and the right-hand side equals $\sum_{j} w^{\ast,2}_j P_j$, the claimed inequality follows.
\end{proof}
By Lemma \ref{lem:dis_2_con}, we transfer solving of problem \ref{prob:assign} to problem \ref{prob:continus_assign}. 

\begin{lemma}\label{lem:extreme_two_support}
The solution to problem \ref{prob:continus_assign}, denoted as $\{w^{\ast,2}_j\}_{j=1}^M$, has at most two nonzero components. Furthermore, if there exists two nonzero components, the two components are adjacency to each other.
\end{lemma}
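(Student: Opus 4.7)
The plan is to dispatch the cardinality claim by a standard linear-programming vertex argument, and to prove the adjacency claim by constructing a strictly improving feasible perturbation that exploits the convexity hidden in the definition of $\ell_j$. For the first claim, problem (\ref{prob:continus_assign}) is a linear program in $w\in\mathbb{R}^M$ with exactly one equality constraint $\sum_j w_j = 1$, one inequality constraint $\sum_j w_j \ell_j \leq \gamma$, and the $M$ nonnegativity constraints $w_j \geq 0$. The feasible set sits inside the bounded simplex, so it is a nonempty polytope and the minimum is attained at one of its vertices. At a vertex in $\mathbb{R}^M$ one needs $M$ linearly independent active constraints; since the equality is always active, at most one of the remaining $M+1$ constraints (the inequality plus the $M$ nonnegativities) can be inactive, forcing at least $M-2$ nonnegativities to be active. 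Consequently $w^{\ast,2}$ has at most two strictly positive entries.

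For the adjacency claim I would argue by contradiction. Suppose $w^{\ast,2}$ has positive components $w_a^{\ast,2}, w_b^{\ast,2}$ with $b \geq a+2$, and pick an intermediate index $c$ with $a < c < b$. The defining relation $\ell_j = \tfrac{1}{2}(P_1 - P_j)^2$ gives $P_j = P_1 - \sqrt{2\ell_j}$, so the pairs $(\ell_j, P_j)$ lie on the curve $\ell \mapsto P_1 - \sqrt{2\ell}$, which has second derivative $(2\ell)^{-3/2} > 0$ and is therefore strictly convex on $[0,\infty)$. Setting $\lambda = (\ell_c - \ell_a)/(\ell_b - \ell_a) \in (0,1)$ (the degenerate case $\ell_a = \ell_b$ forces $P_a = P_b$ and collapses the support to a single effective component), strict convexity yields $P_c < (1-\lambda) P_a + \lambda P_b$.

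To conclude I would construct the improving perturbation explicitly: for sufficiently small $\delta > 0$, set $w'_a = w_a^{\ast,2} - (1-\lambda)\delta$, $w'_b = w_b^{\ast,2} - \lambda \delta$, $w'_c = \delta$, and leave all other coordinates zero. The choice of $\lambda$ preserves both $\sum_j w'_j = 1$ and $\sum_j w'_j \ell_j$, and nonnegativity holds for $\delta$ small enough, so $w'$ is feasible; the change in objective equals $\delta\,[P_c - (1-\lambda)P_a - \lambda P_b] < 0$, contradicting the optimality of $w^{\ast,2}$. The only non-routine ingredient is recognizing that the data points $(\ell_j, P_j)$ are forced onto a strictly convex curve by their very definition; once that observation is in hand, both the LP vertex analysis and the perturbation calculation are essentially mechanical.
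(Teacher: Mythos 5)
Your proof is correct, and it reaches the lemma by a genuinely different route than the paper. For the support-size bound you use a basic-feasible-solution count on the LP (one equality, one budget inequality, $M$ nonnegativity constraints, hence at most two inactive nonnegativity constraints at a vertex), whereas the paper forms the Lagrangian and reads off from the KKT conditions that every index carrying positive mass must minimize the affine functional $t \mapsto P_t + \lambda \ell_t$; it then shows that the slopes $s_t = (P_t - P_{t+1})/(\ell_t - \ell_{t+1}) = -2/(2P_1 - P_t - P_{t+1})$ are strictly increasing, so that this minimum is attained at no more than two, necessarily consecutive, indices. Both arguments ultimately rest on the same geometric fact, which you isolate explicitly: the points $(\ell_j, P_j)$ lie on the strictly convex curve $\ell \mapsto P_1 - \sqrt{2\ell}$. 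Your perturbation $w'_a = w_a^{\ast,2} - (1-\lambda)\delta$, $w'_c = \delta$, $w'_b = w_b^{\ast,2} - \lambda\delta$ is the primal certificate of the same fact that the paper certifies dually, and your version is more self-contained (no appeal to KKT) while cleanly separating the cardinality claim from the adjacency claim. One shared caveat: if ties occur among the $P_j$ for $j \ge 2$, consecutive $\ell_j$ coincide, $\lambda$ can degenerate to $0$ or $1$, and the strict improvement vanishes; the paper dismisses this with a remark about tie-breaking, and your parenthetical covers only the case $\ell_a = \ell_b$, not a tie between an endpoint and the intermediate index $c$. In that degenerate situation the lemma should be read as asserting that \emph{some} minimizer has adjacent support, which is all that is needed downstream, so this does not affect the result.
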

\begin{proof}
Using Lagrange, we have the dual problem as
\begin{align*}
    \min_{w} \sum_{j} w_j{P}_j + \mu\left(\sum_{j} w_j - 1\right) +\lambda\left(\sum_{j} w_j \ell_j - \gamma\right).
\end{align*}
The Karush-Kuhn-Tucker (KKT) condition \citep{bertsekas1997nonlinear} gives that

\begin{align*}
{P}_j+\mu+\lambda\ell_j\left\{
\begin{aligned}
=0&,~w_j>0\\
>0&,~w_j=0
\end{aligned}\right.
\end{align*}
Therefore for all $j$, we have that $\hat{P}_j+\mu+\lambda\ell_j \geq 0$. In other words, the support of the optimal distribution $P$ is contained in the set of indices that minimize the affine functional
\begin{equation*}
    {P}\mapsto{P}+\lambda\ell.
\end{equation*}
Consequently, the optimal solution must assign positive probability only to those outcomes lying on the lower envelope of the family of affine functions parameterized by $\lambda$.

Define the discrete slopes
\[
s_t := \frac{P_t-P_{t+1}}{\ell_t-\ell_{t+1}},\qquad t=1,\dots,M-1.
\]
Compute explicitly using $\ell_t=\tfrac12(P_1-P_t)^2$:
\[
s_t = -\frac{2}{\,2P_1-P_t-P_{t+1}\,}.
\]
Because $P_t$ is nonincreasing and $t\mapsto(2P_1-P_t-P_{t+1})$ is nondecreasing, we have $s_1<s_2<\cdots<s_{M-1}$ (strict inequality unless ties occur in the $P_t$'s; ties can be handled by tie-breaking but do not affect the argument). 

Now suppose there exist positive $w_i>0$ and $w_k>0$ with $k\ge i+2$ (not adjacent). Since $P_t+\lambda\ell_t$ is affine in the pair $(P_t,\ell_t)$ and the intersection equality above holds for $t=i$ and $t=k$, by intermediate value there must exist an index $r$ with $i<r<k$ such that $P_r+\lambda\ell_r$ is \emph{strictly smaller} than the common value (because the sequence of slopes $s_t$ is strictly increasing, the line through $(\ell_i,P_i)$ and $(\ell_k,P_k)$ lies strictly above at some intermediate lattice point). But then $r$ would yield a strictly smaller $P_r+\lambda\ell_r$, contradicting the KKT condition that all positive-weight indices minimize $P_t+\lambda\ell_t$. Therefore no two positive indices can be non-adjacent; positive indices must be adjacent.
\end{proof}

\begin{lemma}\label{lem:alpha}
If mass is placed only at $\ell_j$ and $\ell_{j+1}$ with weights $1-\alpha$ and $\alpha$ and the mean loss equals $\ell_j+\gamma$ (with $0\le\gamma<\ell_{j+1}-\ell_j$), then
\[
\alpha=\frac{\gamma}{\ell_{j+1}-\ell_j}.
\]
\end{lemma}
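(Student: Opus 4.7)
The plan is to observe that this lemma is a direct algebraic consequence of the definition of the mean loss under a two-point distribution, so essentially no heavy machinery is needed. The statement just pins down the unique mixing weight $\alpha$ consistent with the prescribed constraints, and the range condition on $\gamma$ is exactly what guarantees $\alpha$ lies in the valid interval $[0,1)$.

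First, I would write out the mean loss explicitly. Since the distribution assigns probability $1-\alpha$ to the value $\ell_j$ and probability $\alpha$ to the value $\ell_{j+1}$, the mean equals
\begin{equation*}
(1-\alpha)\ell_j + \alpha\,\ell_{j+1} \;=\; \ell_j + \alpha\,(\ell_{j+1}-\ell_j).
\end{equation*}
Next, I would equate this expression to the prescribed mean $\ell_j+\gamma$, giving $\alpha(\ell_{j+1}-\ell_j)=\gamma$. Because $\ell_{j+1}>\ell_j$ (the sequence $\ell_t = \tfrac{1}{2}(P_1-P_t)^2$ is strictly increasing in $t$, as used in Lemma \ref{lem:extreme_two_support}), division is legal and yields $\alpha=\gamma/(\ell_{j+1}-\ell_j)$.

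Finally, I would verify that this $\alpha$ indeed corresponds to a valid probability weight. The hypothesis $0\le\gamma<\ell_{j+1}-\ell_j$ immediately gives $0\le\alpha<1$, so $1-\alpha\in(0,1]$ and the two-point distribution is well-defined. Since the linear equation in $\alpha$ has a unique solution, $\alpha$ is uniquely determined by the stated constraints.

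There is no real obstacle here; the lemma is essentially a one-line calculation whose purpose is to package the interpolation weight used in the proof of Theorem \ref{thm:error_prob} (so that one can later convert $\gamma$ into a quantitative perturbation of the probability bound $P_j$). The only thing that requires any attention is checking that $\ell_{j+1}-\ell_j>0$, which follows from the strict ordering $P_1>P_2>\cdots$ assumed just before Lemma \ref{lem:lebel_pred_error}.
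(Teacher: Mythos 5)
Your proposal is correct and matches the paper's proof, which simply states the result is immediate from $(1-\alpha)\ell_j+\alpha\ell_{j+1}=\ell_j+\gamma$; your additional checks that $\ell_{j+1}-\ell_j>0$ and that $\alpha\in[0,1)$ are reasonable diligence but not a different argument.
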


\begin{proof}
Immediate from $(1-\alpha)\ell_j+\alpha\ell_{j+1}=\ell_j+\gamma$.
\end{proof}

By Lemma \ref{lem:extreme_two_support} any extreme minimizer has at most two adjacent nonzero elements. Therefore the minimizer can be taken with support $\{\ell_j,\ell_{j+1}\}$. Let the mass at $\ell_{j+1}$ be $\alpha$; by Lemma~\ref{lem:alpha} we have $\alpha=\gamma/(\ell_{j+1}-\ell_j)$. Consequently
\[
\mathbb{E}\!\left[P_{Y|X}(\hat y_{\max}\mid x_Q)\right]
\ge (1-\alpha)P_j+\alpha P_{j+1}
= P_j - \alpha(P_j-P_{j+1}).
\]
For each $j$,
\[
\ell_{j+1}-\ell_j
=\tfrac12\big[(P_1-P_{j+1})^2-(P_1-P_j)^2\big]
= (P_j-P_{j+1})\Big(P_1-\tfrac{P_j+P_{j+1}}{2}\Big),
\]
hence
\[
\frac{P_j-P_{j+1}}{\ell_{j+1}-\ell_j}
=\frac{2}{\,2P_1-P_j-P_{j+1}\,}.
\]
Therefore,
\[
\alpha(P_j-P_{j+1})
=\gamma\cdot\frac{P_j-P_{j+1}}{\ell_{j+1}-\ell_j}
=\gamma\cdot\frac{2}{\,2P_1-P_j-P_{j+1}\,}.
\]
Thus
\[
\mathbb{E}\!\left[P_{Y|X}(\hat y_{\max}\mid x_Q)\right]
\ge P_j - \frac{2\gamma}{\,2P_1-P_j-P_{j+1}\,},
\]
completing the proof.

\section{Proofs of Properties}

\subsection{Bound of Similarity Score}\label{sec:lb_lambda_1}
\begin{theorem}\label{thm:lambda_ge_1}
Let
\[
f(x,y)\in\mathbb R^K,\qquad f(x)=[\,f(x,1),\dots,f(x,M)\,]\in\mathbb R^{K\times M},
\]
and define
\[
\mathbf F(x_Q)=f(x_Q)f(x_Q)^\mathrm{T},\qquad
\mathbf F_n(x^n)=\frac{1}{n}f(x^n)f(x^n)^\mathrm{T}
=\frac{1}{n}\sum_{i=1}^n\sum_{y=1}^M f(x_i,y)f(x_i,y)^\mathrm{T}.
\]
Assume \(\mathbf F_n(x^n)\) is positive definite. If the last coordinate satisfies
\(f_K(x,y)=1/\sqrt{M}\) for every \(x,y\), then
\[
\lambda_1\!\big(\mathbf F(x_Q)\mathbf F_n^{-1}(x^n)\big)\;\ge\;1,
\]
where \(\lambda_1(\cdot)\) denotes the largest eigenvalue.
\end{theorem}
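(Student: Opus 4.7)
The plan is to exhibit $1$ as an actual eigenvalue of $\mathbf{F}(x_Q)\mathbf{F}_n^{-1}(x^n)$ by constructing an explicit eigenvector, and then argue that the eigenvalues of this product are real and nonnegative so that exhibiting any eigenvalue equal to $1$ forces $\lambda_1 \ge 1$. The explicit eigenvector will be the $K$-th standard basis vector $e_K$, and the key is that the padded constant last coordinate $f_K(x,y)=1/\sqrt{M}$ together with the zero-mean condition $\sum_y f_k(x,y)=0$ for $k<K$ forces both $\mathbf{F}(x_Q)$ and $\mathbf{F}_n(x^n)$ to fix $e_K$.

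First I would compute the entries of the $K$-th row/column of $\sum_y f(x,y) f(x,y)^{\mathrm T}$ for a generic $x$. The diagonal entry is $\sum_y f_K^2(x,y) = M\cdot(1/\sqrt{M})^2 = 1$, while the off-diagonal entries for $k<K$ are $\sum_y f_k(x,y)f_K(x,y) = (1/\sqrt{M})\sum_y f_k(x,y) = 0$ by the centering condition. Hence $\bigl(\sum_y f(x,y)f(x,y)^{\mathrm T}\bigr) e_K = e_K$ for every $x$. Applied to $x=x_Q$ this gives $\mathbf{F}(x_Q)\,e_K = e_K$, and averaging over the $n$ demonstrations gives $\mathbf{F}_n(x^n)\,e_K = e_K$; since $\mathbf{F}_n(x^n)$ is invertible (positive definite), this in turn yields $\mathbf{F}_n^{-1}(x^n)\,e_K = e_K$.

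Combining the two identities produces
\[
\mathbf{F}(x_Q)\mathbf{F}_n^{-1}(x^n)\,e_K \;=\; \mathbf{F}(x_Q)\,e_K \;=\; e_K,
\]
so $1$ is an eigenvalue of the product. To promote this to the claim $\lambda_1 \ge 1$, I would use the standard similarity trick: since $\mathbf{F}_n^{-1}(x^n)$ is positive definite, it admits a symmetric square root $\mathbf{F}_n^{-1/2}(x^n)$, and
\[
\mathbf{F}(x_Q)\mathbf{F}_n^{-1}(x^n) \;=\; \mathbf{F}_n^{1/2}(x^n)\bigl(\mathbf{F}_n^{-1/2}(x^n)\mathbf{F}(x_Q)\mathbf{F}_n^{-1/2}(x^n)\bigr)\mathbf{F}_n^{-1/2}(x^n)
\]
is similar to the symmetric positive semidefinite matrix $\mathbf{F}_n^{-1/2}(x^n)\mathbf{F}(x_Q)\mathbf{F}_n^{-1/2}(x^n)$. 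Hence its spectrum is real and nonnegative, and containing $1$ forces the largest eigenvalue to satisfy $\lambda_1\bigl(\mathbf{F}(x_Q)\mathbf{F}_n^{-1}(x^n)\bigr)\ge 1$, with equality when $\mathbf{F}(x_Q)=\mathbf{F}_n(x^n)$ as already asserted in the body of the paper.

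There is no real obstacle here; the only subtlety is being careful that the eigenvalue notation $\lambda_1$ refers to the largest real eigenvalue (equivalently the spectral radius, since all eigenvalues are real) of the non-symmetric product, which the similarity argument above legitimizes.
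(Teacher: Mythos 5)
Your argument is correct in the paper's setting, but it takes a genuinely different route and quietly uses a stronger hypothesis than the theorem states. The paper's proof never shows that $e_K$ is an eigenvector: it writes $\lambda_1\bigl(\mathbf F(x_Q)\mathbf F_n^{-1}(x^n)\bigr)=\sup_{w\ne 0}\,\frac{w^{\mathrm T}\mathbf F(x_Q)w}{w^{\mathrm T}\mathbf F_n(x^n)w}$ and simply evaluates this Rayleigh quotient at $w=e_K$, where both numerator and denominator equal $1$ because $\sum_y f_K^2(\cdot,y)=1$. That computation needs only the diagonal entries of the Gram matrices, i.e.\ only the stated hypothesis $f_K(x,y)=1/\sqrt{M}$. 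Your proof instead needs the off-diagonal entries $\sum_y f_k(x,y)f_K(x,y)$ to vanish, which requires the centering condition $\sum_y f_k(x,y)=0$ for $k<K$. That condition is part of the paper's embedding normalization in Section 3 (and even there it is literally stated only for $k<K-1$, leaving coordinate $K-1$ uncovered), but it is \emph{not} among the hypotheses of the theorem you are proving; without it, $\mathbf F(x_Q)e_K=e_K$ can fail and your explicit eigenvector disappears, whereas the Rayleigh-quotient bound survives. On the other hand, when the centering does hold your argument buys strictly more: it shows $1$ is actually in the spectrum of $\mathbf F(x_Q)\mathbf F_n^{-1}(x^n)$, not merely a lower bound on $\lambda_1$, and your similarity step (which the paper leaves implicit) cleanly justifies that the spectrum of the non-symmetric product is real and nonnegative. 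To make your proof self-contained you should either add the centering condition to the hypotheses or fall back on the Rayleigh-quotient evaluation at $e_K$ for the final inequality.
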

\begin{proof}
Recall the characterization of the largest generalized eigenvalue for symmetric matrices:
\[
\lambda_1\!\big(\mathbf F(x_Q)\,\mathbf F_n^{-1}(x^n)\big)
\;=\;
\sup_{w\in\mathbb R^K\setminus\{0\}}
\frac{w^\mathrm{T} \mathbf F(x_Q) w}{w^\mathrm{T} \mathbf F_n(x^n) w}.
\]
Let \(e_K\in\mathbb R^K\) be the unit vector with 1 in the \(K\)-th coordinate and zeros elsewhere.
Using \(f_K(x,y)=1/\sqrt{M}\) for every \(x,y\), we compute
\[
e_K^\mathrm{T} \mathbf F(x_Q) e_K
= \sum_{y=1}^M \big(f_K(x_Q,y)\big)^2
= \sum_{y=1}^M \frac{1}{M} = 1,
\]
and
\[
e_K^\mathrm{T} \mathbf F_n(x^n) e_K
= \frac{1}{n}\sum_{i=1}^n\sum_{y=1}^M \big(f_K(x_i,y)\big)^2
= \frac{1}{n}\sum_{i=1}^n\sum_{y=1}^M \frac{1}{M}
= 1.
\]
Hence the Rayleigh quotient at \(e_K\) equals \(1\):
\[
\frac{e_K^\mathrm{T} \mathbf F(x_Q) e_K}{e_K^\mathrm{T} \mathbf F_n(x^n) e_K}=1.
\]
Since the supremum over all nonzero \(w\) is at least the value at \(e_K\), we obtain
\[
\lambda_1\!\big(\mathbf F(x_Q)\mathbf F_n^{-1}(x^n)\big)\ge 1,
\]
as required.
\end{proof}

\begin{proposition}[When equality holds]\label{prop:when_eq}
With the notation and assumptions of Theorem \ref{thm:lambda_ge_1}, we have
\[
\lambda_1\!\big(\mathbf F(x_Q)\mathbf F_n^{-1}(x^n)\big)=1
\]
if and only if
\[
\mathbf F(x_Q)\preceq \mathbf F_n(x^n),
\]
i.e. \(\mathbf F_n(x^n)-\mathbf F(x_Q)\) is positive semidefinite. In particular, a simple sufficient condition for equality is
\[
\mathbf F_n(x^n)=\mathbf F(x_Q),
\]
which occurs for example when \(n=1\) and \(x_1=x_Q\), or more generally when every sample equals \(x_Q\) (i.e. \(x_1=\cdots=x_n=x_Q\)).
\end{proposition}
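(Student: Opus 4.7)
The plan is to reduce the equality claim to a Loewner-ordering statement via the same Rayleigh-quotient characterization that was already used in the proof of Theorem~\ref{thm:lambda_ge_1}. Since $\mathbf{F}_n(x^n)\succ 0$ by assumption, the generalized eigenvalue admits the variational form
$$\lambda_1\!\big(\mathbf F(x_Q)\mathbf F_n^{-1}(x^n)\big)=\sup_{w\in\mathbb R^K\setminus\{0\}}\frac{w^{\mathrm{T}}\mathbf F(x_Q)w}{w^{\mathrm{T}}\mathbf F_n(x^n)w},$$
so the whole question is whether this supremum equals $1$.

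For the ``if'' direction, I would assume $\mathbf F(x_Q)\preceq\mathbf F_n(x^n)$. This gives $w^{\mathrm{T}}\mathbf F(x_Q)w\le w^{\mathrm{T}}\mathbf F_n(x^n)w$ for every $w$, so every Rayleigh ratio is bounded by $1$, yielding $\lambda_1\le 1$. Combined with the lower bound $\lambda_1\ge 1$ already established in Theorem~\ref{thm:lambda_ge_1} (obtained by evaluating at $e_K$ using the padded last coordinate $1/\sqrt M$), this forces $\lambda_1=1$. For the ``only if'' direction, I would assume $\lambda_1=1$, so the supremum above equals $1$; consequently $w^{\mathrm{T}}\mathbf F(x_Q)w\le w^{\mathrm{T}}\mathbf F_n(x^n)w$ for every $w$, which is exactly $\mathbf F(x_Q)\preceq\mathbf F_n(x^n)$.

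For the sufficient conditions, if $\mathbf F_n(x^n)=\mathbf F(x_Q)$ then $\mathbf F_n(x^n)-\mathbf F(x_Q)=0\succeq 0$, so the Loewner inequality holds trivially and the previous paragraph applies. The two concrete cases follow by direct substitution into the definition $\mathbf F_n(x^n)=\tfrac1n\sum_{i=1}^n\sum_y f(x_i,y)f(x_i,y)^{\mathrm T}$: when $n=1$ and $x_1=x_Q$ we immediately get $\mathbf F_n(x^n)=\sum_y f(x_Q,y)f(x_Q,y)^{\mathrm T}=\mathbf F(x_Q)$, and when all $x_i=x_Q$ the average collapses to the same expression.

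Honestly, there is no genuine obstacle here: the statement is essentially a one-line consequence of the Rayleigh-quotient identity once Theorem~\ref{thm:lambda_ge_1} is in hand. The only mild subtlety worth flagging is that one should not try to strengthen the characterization to $\mathbf F_n(x^n)=\mathbf F(x_Q)$; equality of $\lambda_1$ only pins down the Loewner ordering, not matrix equality, because the Rayleigh quotient only ``sees'' a one-dimensional direction. This is why the proposition states $\mathbf F(x_Q)\preceq \mathbf F_n(x^n)$ as the exact criterion and lists $\mathbf F_n(x^n)=\mathbf F(x_Q)$ merely as a sufficient condition.
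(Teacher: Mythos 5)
Your proof is correct and follows essentially the same route as the paper's: both reduce the claim to the Rayleigh-quotient characterization $\lambda_1 = \sup_{w\ne 0} \frac{w^{\mathrm T}\mathbf F(x_Q)w}{w^{\mathrm T}\mathbf F_n(x^n)w}$ and read off the Loewner ordering, then verify the sufficient conditions by direct substitution. Your version is marginally more careful in making explicit that the ``if'' direction needs the lower bound $\lambda_1\ge 1$ from Theorem~\ref{thm:lambda_ge_1}, which the paper's proof uses only implicitly.
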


\begin{proof}
By the Rayleigh characterization,
\[
\lambda_1\!\big(\mathbf F(x_Q)\mathbf F_n^{-1}(x^n)\big)
=\sup_{w\ne0}\frac{w^\mathrm{T} \mathbf F(x_Q) w}{w^\mathrm{T} \mathbf F_n(x^n) w}.
\]
Equality \(\lambda_1=1\) holds iff for every nonzero \(w\),
\[
\frac{w^\mathrm{T} \mathbf F(x_Q) w}{w^\mathrm{T} \mathbf F_n(x^n) w}\le 1,
\quad\text{i.e.}\quad
w^\mathrm{T} \mathbf F(x_Q) w \le w^\mathrm{T} \mathbf F_n(x^n) w.
\]
The last inequality for all \(w\) is exactly the PSD ordering \(\mathbf F(x_Q)\preceq\mathbf F_n(x^n)\). Hence equality is equivalent to \(\mathbf F(x_Q)\preceq\mathbf F_n(x^n)\).

If \(\mathbf F_n(x^n)=\mathbf F(x_Q)\) then trivially \(\mathbf F(x_Q)\preceq\mathbf F_n(x^n)\) and thus \(\lambda_1=1\). The condition \(\mathbf F_n(x^n)=\mathbf F(x_Q)\) holds when \(f(x_i,y)=f(x_Q,y)\) for every \(i,y\), i.e. when \(x_i=x_Q\) for all \(i\); in particular it holds when \(n=1\) and \(x_1=x_Q\). This proves the stated sufficient condition.
\end{proof}

\end{document}